\documentclass{article}
\usepackage{amsmath,amsthm,amssymb}
\newtheorem{Theorem}{Theorem}[section]
\newtheorem{lem}[Theorem]{Lemma}
\newtheorem{Remark}[Theorem]{Remark}
\newtheorem{Definition}[Theorem]{Definition}
\newtheorem{Corollary}[Theorem]{Corollary}

\numberwithin{equation}{section}
\textheight23cm \textwidth16cm \hoffset-2cm \voffset-1.3cm
\parskip 2pt plus1pt minus1pt
\usepackage[latin1]{inputenc}%%%%½ÓÊÜ²»Í¬µÄÊäÈë±àÂëºê°ü; Í¶arxiv±ØÐëÓÐÕâ¸öÃüÁî
%%%options---preference--fontµ÷×ÖÌå
\begin{document}
\title{{\LARGE The number of  extended irreducible binary Goppa codes}}
%\title{{\LARGE Enumeration of extended irreducible binary Goppa codes for arbitrary degree}}
%%%\author{Author}%Guanghui Zhang Hongwei Liu
%\thanks{G.Zhang is with School of Mathematical Sciences,Luoyang Normal University,
%Luoyang, Henan, 471934, China}
%}

\author{Bocong Chen$^1$ and Guanghui~Zhang$^2$\footnote{E-mail addresses: bocongchen@foxmail.com(B. Chen),~zghui@lynu.edu.cn (G. Zhang)}} %and~Jane~Doe,~\IEEEmembership{Life~Fellow,~IEEE}% <-this % stops a space

%%\thanks{This work was supported
%%in part by NSFC under Grant No.11571005,
%%in part by the Science and Technology Development Program of Henan Province in 2019 under Grant No.192102310444,
%%and in part by the Key Research Project of Higher Education of the Education Department of Henan Province in 2019 under Grant No. 19A120010.
%%H. Liu was partially supported by NSFC under Grant No.11871025.
%%}
%%%\thanks{G. Zhang is with the School of Mathematical Sciences, Luoyang Normal University,
%%%Luoyang, Henan 471934, China (email: zghui@lynu.edu.cn).
%%%B. Chen is with the School of Mathematics, South China University of Technology, Guangzhou 51064, China
%%%(email: bocong\underline{~~}chen@yahoo.com).}
%%\thanks{H. Liu is with the School of Mathematics and Statistics,
%%Central China Normal University, Wuhan 430079, China (email: hwliu@mail.ccnu.edu.cn).}% <-this % stops a space
%}

\date{\small
$1.$ School of Mathematics, South China University of Technology, Guangzhou 510641, China\\
$2.$ School of Mathematical Sciences, Luoyang Normal University, Luoyang, Henan 471934, China
}

%%\date{}%%²»ÏÔÊ¾ÎÄÕÂÈÕÆÚµÄÃüÁî

\maketitle

\begin{abstract}
Goppa, in the 1970s, discovered the relation between
algebraic geometry and codes, which led to the family of Goppa codes.
As  one of the most interesting subclasses of linear codes,
the family of Goppa codes is often chosen as a key in the McEliece cryptosystem.
Knowledge of the number of inequivalent binary Goppa codes for fixed parameters may facilitate in the evaluation
of the security of such a cryptosystem.
Let $n\geq5$ be an odd prime number,
let $q=2^n$ and let $r\geq3$ be a positive integer satisfying $\gcd(r,n)=1$.
The purpose of this paper is to establish an upper bound on the number of inequivalent extended irreducible binary Goppa codes
of length $q+1$ and degree $r$.
A potential mathematical object for this purpose is to count the number of orbits of
  the projective semi-linear group ${\rm PGL}_2(\mathbb{F}_q)\rtimes{\rm Gal}(\mathbb{F}_{q^r}/\mathbb{F}_2)$
on the set $\mathcal{I}_r$ of all monic irreducible polynomials of degree $r$ over the finite field $\mathbb{F}_q$.
An explicit  formula for the number of orbits of ${\rm PGL}_2(\mathbb{F}_q)\rtimes{\rm Gal}(\mathbb{F}_{q^r}/\mathbb{F}_2)$
on  $\mathcal{I}_r$ is given, and consequently,  an upper bound for the number of inequivalent extended irreducible binary Goppa codes
of length $q+1$ and degree $r$ is derived. Our main result naturally contains the main results of
Ryan (IEEE-TIT 2015), Huang and Yue (IEEE-TIT, 2022) and, Chen and Zhang (IEEE-TIT, 2022), which considered the cases
$r=4$, $r=6$ and $\gcd(r,q^3-q)=1$ respectively.

\medskip
\textbf{MSC:} 94B50.

\textbf{Keywords:} Binary Goppa codes, extended Goppa codes, inequivalent codes, group actions.

\end{abstract}

\section{Introduction}
The progress of cryptography is closely related with the development of coding theory.
Analogous to the RSA, coding theory  started shaping public key cryptography in the late 1970s.  McEliece
introduced a public-key cryptosystem based upon encoding the plaintext as codewords of an error correcting
code from the family of Goppa codes in 1978, see \cite{mceliece}.
In the McEliece public-key encryption scheme the main practical limitation is probably the size of its key.
In order to overcome this practical limitation, the McEliece cryptosystem often chooses a random Goppa code  as its key, see \cite{ls}, \cite{mceliece}.
In the originally proposed system, a codeword is generated from
plaintext message bits by using a permuted and scrambled generator matrix of a Goppa code.
This matrix is the public key.
In this system, the ciphertext is formed by adding
a randomly chosen error vector  to each codeword of perturbed
code. The unperturbed Goppa code, together with scrambler and permutation matrices,  form the private key. On
reception, the associated private key is used to invoke an error-correcting decoder based upon the underlying
Goppa code to correct the garbled bits in the codeword.

One of the  reasons why Goppa codes receive interest from cryptographers may be
that  Goppa codes have few invariants and the number of inequivalent codes grow exponentially
with the length and dimension of the code, which makes it possible to resist to any structural attack.
When we give the assessment of the security of this cryptosystem against the enumerative attack,
it is important for us to know the number of Goppa codes for any given set of parameters.
An enumerative attack in the McEliece cryptosystem is to find all Goppa codes for a given set of parameters and
to test their equivalences with the public codes \cite{ls}.
Thus one of the key issues for the McEliece cryptosystem is the enumeration of inequivalent Goppa codes for a given set of parameters.
Knowledge of the number of inequivalent Goppa codes for fixed parameters may facilitate in the evaluation
of the security of such a cryptosystem.

\textbf{A. Known results}

Some significant research efforts have been put  in developing the enumeration of  (extended) Goppa codes.
Based on the invariant property under the group of transformations, Moreno \cite{moreno} classified cubic and quartic irreducible
Goppa codes; in the same paper, it was showed that there are four inequivalent  quartic Goppa codes of length $33$ and
there is only one inequivalent extended irreducible binary Goppa code with any length and degree $3$.
Berger \cite{berger3,berger1} studied Goppa codes that are invariant under  a prescribed permutation.
Ryan and Fitzpatrick \cite{rf} obtained an upper bound for the number of inequivalent irreducible Goppa codes of length $q^n$ over $\mathbb{F}_q$.
Ryan \cite{ryan14} produced an upper bound on the number of inequivalent extended irreducible
Goppa codes over $\mathbb{F}_q$ of degree $r$ and length $q^n+1$.

In a subsequent paper \cite{ryan15},
Ryan made a great improvement on  giving a much tighter upper bound than that of \cite{ryan14}
on the number of  inequivalent extended irreducible binary quartic Goppa codes of
length $2^n+1$, where $n>3$ is a prime number. It was shown in \cite{ryan15} that the problem of giving an upper bound for the number of
inequivalent extended irreducible binary Goppa codes of degree $r$ can be
transformed into that of finding the number of orbits of the   projective semi-linear group ${\rm P\Gamma L}={\rm PGL}_2(\mathbb{F}_q)\rtimes{\rm Gal}(\mathbb{F}_{q^r}/\mathbb{F}_2)$
on the set of   elements in $\mathbb{F}_{q^r}$ of degree $r$ over $\mathbb{F}_q$ (which is denoted by $\mathcal{S}$).
The objective of the paper \cite{ryan15} is then to find such number of orbits.
Following that line of research, Musukwa {\it et al.} \cite{mmr} gave
an upper bound on the number of inequivalent extended irreducible
binary Goppa codes of degree $2^m$ and length $2^n+1$, where $n$ is an odd prime and $m>1$ is a positive integer.
Musukwa produced \cite{musukwa} an upper bound on the number of inequivalent extended irreducible binary Goppa codes of degree $2p$ and length
$2^n+1$, where $n$ and $p$ are two distinct odd primes such that $p$ does not divide $2^n\pm1$.
Magamba and Ryan \cite{magamba} obtained an upper bound on the number of inequivalent extended irreducible $q$-ary Goppa codes
of degree $r$ and length $q^n+1$, where $q=p^t$, $n$ and $r>2$ are both prime numbers.
Recently, Huang and Yue \cite{yueqin} obtained an upper bound on the number of extended irreducible binary Goppa codes of degree $6$ and length
$2^n+1$, where $n>3$ is a prime number.
Note that the degrees of the Goppa codes mentioned above are small or have at most two prime divisors.
Chen and Zhang \cite{cz} presented a new approach to
calculate the number of orbits of the   projective semi-linear group
on $\mathcal{S}$ yielding an upper bound on the number of extended irreducible binary Goppa codes of degree $r$ and length
$2^n+1$, where $n>3$ is a prime number with $\gcd(r,n)=1$ and $\gcd(r,q^3-q)=1$.
In particular, the degree $r$ of the Goppa code considered in \cite{cz} can have arbitrary many prime divisors.

\textbf{B. Our main results and contributions}

In this paper,  we further explore the ideas in \cite{cz}
to establish an upper bound on the number of inequivalent extended irreducible binary Goppa codes of length $q+1$ and degree $r$, where
$q=2^{n}$ and
$n\geq5$ is a prime number satisfying $\gcd(r,n)=1$.
In a word, we settle a much more general case by dropping  the assumption
$\gcd(r,q^3-q)=1$ in \cite{cz}; consequently, our main results in the current paper naturally contain the main results of
\cite{cz}, \cite{yueqin},  \cite{musukwa}, \cite{mmr} and \cite{ryan15}.
A potential mathematical object for this purpose is to count the number of orbits of
${\rm P\Gamma L}={\rm PGL}_2(\mathbb{F}_q)\rtimes{\rm Gal}(\mathbb{F}_{q^r}/\mathbb{F}_2)$
on $\mathcal{S}$ (see Lemma \ref{important} in Section 2).
We first use a   strategy exhibited in \cite{cz} to count the number of orbits of ${\rm P\Gamma L}_2(\mathbb{F}_q)$ on $\mathcal{I}_r$, where
 $\mathcal{I}_r$ denotes the set of monic irreducible polynomials over $\mathbb{F}_q$ of degree $r$ (see Lemmas \ref{action} and \ref{orbit} in Section 2).
By virtue of a result in \cite{ryan15}, the number of inequivalent extended irreducible binary Goppa codes of length $q+1$
and degree $r$ is less than or equal to the number of orbits of ${\rm P\Gamma L}_2(\mathbb{F}_q)$ on $\mathcal{S}$.
We finally determine the exact value of the number of orbits of ${\rm P\Gamma L}$ on $\mathcal{I}_r$ (or equivalently ${\rm P\Gamma L}$ on $\mathcal{S}$), see Theorem \ref{theorem} in Section 4.
Comparing  to \cite{cz},
without the assumption $\gcd(r,q^3-q)=1$,
we have to get around several difficulties in connecting the orbits of
${\rm P\Gamma L}$ on $\mathcal{I}_r$ and that on $\mathcal{S}$ (see Lemmas \ref{equivalent}-\ref{lastlem} in Section 4) and establish some new results
(see Lemmas \ref{cyclic}-\ref{delta5}).
The auxiliary results may be interested in their own right.

\textbf{C. Organization of this paper}

The paper is organized as follows.
In Section $2$, we review some definitions and basic results about extended irreducible Goppa codes, some matrix groups and group actions.
In Section  $3$,  we study the   number of orbits of ${\rm P\Gamma L}$ on $\mathcal{I}_r$.
In Section  $4$, we find an explicit formula for the number of orbits of ${P \Gamma L}$ on the set $\mathcal{I}_r$, which naturally gives an upper bound for the number of inequivalent extended irreducible Goppa codes of length $2^n+1$ and degree $r$, where $n\geq5$
is a prime number satisfying $\gcd(r,n)=1$.
In Section $5$, as corollaries of our main results, we apply our main result to some special cases, including $r=4$, $2p$ ($p\geq3$ is a prime number) and $\gcd(r,q^3-q)=1$.

\section{Preliminaries}
Starting  from this section till the end of this paper, we assume that  $n\geq 5$ is an odd prime number
and  $r\geq3$ is a positive integer relatively prime to $n$.
Let $\mathbb{F}_q$ be the finite field with $q=2^n$ elements and let $\mathbb{F}_q^*=\mathbb{F}_q\setminus\{0\}$
be the multiplicative group of the finite field $\mathbb{F}_q$.  Suppose $x$ is an indeterminate over
$\mathbb{F}_q$ and let $\mathbb{F}_q[x]$ be the polynomial ring in   variable $x$  with coefficients in $\mathbb{F}_q$.
As usual, for a polynomial $f(x)\in \mathbb{F}_q[x]$ (or simply denoted by $f$), $\deg f$ is the degree of $f$;
for a finite set $X$, let $|X|$  denote the number of elements of $X$. Given two integers $a$ and $b$, if $a$ is a divisor
of $b$, we write $a\mid b$; otherwise, we write $a\nmid b$. We use $\gcd(a,b)$ to denote the greatest common divisor of $a$ and $b$.
In particular, when $a$ and $b$ are relatively prime, we have $\gcd(a,b)=1$.

We begin with recalling the notion of irreducible binary Goppa codes of length $q$.
For the general definition and more detail information about Goppa codes,   readers may refer to \cite{lx} or \cite{ms}.
\subsection{Extended irreducible Goppa codes}
\begin{Definition}\label{definition1}
Let $g(x)$ be a polynomial in $\mathbb{F}_q[x]$ of degree $r$ and let
$L=\mathbb{F}_q=\{\alpha_0,\alpha_1,\cdots,\alpha_{q-1}\}$  satisfy
$g(\alpha_j)\neq0$ for any $0\leq j\leq q-1$.
The binary Goppa code $\Gamma(L,g)$ of length $q$ and degree $r$ is defined as
$$\Gamma(L,g)=\bigg\{c=(c_0,c_1,\cdots,c_{q-1})\in \mathbb{F}_2^q\,\Big|\,\sum_{i=0}^{q-1}\frac{c_i}{x-\alpha_i}\equiv 0\pmod{g(x)}\bigg\}.$$
The polynomial $g(x)$ is called the Goppa polynomial.
When $g(x)$ is irreducible, $\Gamma(L,g)$ is called an irreducible binary Goppa code of degree $r$.
\end{Definition}
%Note that $\frac{1}{x-\alpha_i}$ can be regarded as a polynomial modulo $g(x)$:
%$$\frac{1}{x-\alpha_i}=-\frac{g(x)-g(\alpha_i)}{x-\alpha_i}g(\alpha_i)^{-1}\pmod{g(x)}.$$
%It follows that $c\in \Gamma(L,g)$ if and only if
%$$\sum_{i=1}^mc_i\frac{g(x)-g(\alpha_i)}{x-\alpha_i}g(\alpha_i)^{-1}=0$$
%as a polynomial.
%Thus there exists an equivalent definition of binary  Goppa codes.
%\begin{Definition}
%For a given Goppa polynomial $g(x)$ of degree $r$ and $L=\{\alpha_1,\alpha_2,\cdots,\alpha_m\}$ with $g(\alpha_j)\neq0$ for any $1\leq j\leq m$, we have
%$$\Gamma(L,g)=\Big\{c\in \mathbb{F}_2^m\,\big|\,cH^T=0\Big\},$$
%where
%$$H=\begin{pmatrix}
%g(\alpha_1)^{-1} & g(\alpha_2)^{-1} & \cdots & g(\alpha_m)^{-1}\\
%\alpha_1g(\alpha_1)^{-1} & \alpha_2g(\alpha_2)^{-1} & \cdots & \alpha_mg(\alpha_m)^{-1}\\
%\vdots & \vdots & \ddots & \vdots\\
%\alpha_{1}^{r-1}g(\alpha_1)^{-1} & \alpha_{2}^{r-1}g(\alpha_2)^{-1} & \cdots & \alpha_m^{r-1}g(\alpha_m)^{-1}\\
%\end{pmatrix}.
%$$
%\end{Definition}
The Goppa code of length $q$ can be extended to a code of length $q+1$ by appending a coordinate in the set $L=\mathbb{F}_q$.
In this paper, we mainly consider extended irreducible binary Goppa codes.
The definition of  extended irreducible binary Goppa codes of length $q+1$ and degree $r$ is given below.

\begin{Definition}
For a given monic irreducible polynomial $g(x)$ of degree $r$, let $\Gamma(L,g)$ be an irreducible binary Goppa code
of length $q$ as given in Definition \ref{definition1}.
The extended Goppa code $\overline{\Gamma(L,g)}$ of length $q+1$ is defined as
$$\overline{\Gamma(L,g)}=\Big\{\big(c_0,c_1,\cdots,c_q\big)\in \mathbb{F}_2^{q+1}\,\Big|\,
\big(c_0,c_1,\cdots,c_{q-1}\big)\in \Gamma(L,g)~\hbox{and}~\sum\limits_{i=0}^qc_i=0\Big\}.$$
\end{Definition}
Chen \cite{Chen} showed that the irreducible binary Goppa code $\Gamma(L,g)$ is completely determined by any root of the Goppa polynomial $g(x)$;
more precisely, if $\alpha$ is a root of $g(x)$ in some extension field over $\mathbb{F}_q$, then
$$
H(\alpha)=\Big(\frac{1}{\alpha-\alpha_0},\frac{1}{\alpha-\alpha_1},\cdots,\frac{1}{\alpha-\alpha_{q-1}}\Big)
$$
can be served as a parity-check matrix for $\Gamma(L,g)$.
As such, let $C(\alpha)$ denote the code $\Gamma(L,g)$ and let  $\overline{C(\alpha)}$ denote the code $\overline{\Gamma(L,g)}$.
Therefore, every extended irreducible binary Goppa code of length $q+1$ and degree $r$ can be described as $\overline{C(\alpha)}$
for some $\alpha\in \mathbb{F}_{q^r}$.

\subsection{Equivalent extended irreducible Goppa codes}
%Of particular interest to us is the number of (inequivalent) extended irreducible Goppa codes.
%Berger \cite{berger3} showed that this number is closely related to the number of orbits of certain group actions.
In this paper, we aim to give an
upper bound for the number of inequivalent
extended irreducible binary   Goppa codes  of length $q+1$ and degree $r$.
%inequivalent codes in
%$$
%\Big\{\overline{C(\alpha)}\,\Big|\,\alpha\in \mathbb{F}_{q^r}\Big\}.
%$$
This problem can be reduced to that of  counting
the number of orbits of the projective semi-linear group action on some subset of $\mathbb{F}_{q^r}$ (see \cite{berger3}, \cite{yueqin} or \cite{ryan15}).
To state this result clearly, we need the notions of group actions (for example, see \cite{Kerber} or \cite{Rotman}) and some matrix groups.
In the following,  we collect the matrix groups that we will
use later,  and fix the notations.

(1) The general linear group of degree $2$ over $\mathbb{F}_q$
$${\rm GL}={\rm GL}_2(\mathbb{F}_q)=\bigg\{
\begin{pmatrix}
a & b \\
c & d
\end{pmatrix}\bigg|~a,b,c,d\in \mathbb{F}_q, ~ad-bc\neq 0\bigg\}.$$

(2) The affine general linear group of degree $2$ over $\mathbb{F}_q$
$${\rm AGL}={\rm AGL}_2(\mathbb{F}_q)=\bigg\{
\begin{pmatrix}
a & b \\
0 & 1
\end{pmatrix}\bigg{|}~a\in \mathbb{F}_q^*, ~b\in \mathbb{F}_q\bigg\}.$$

(3) The projective general linear group of degree $2$ over $\mathbb{F}_q$
$${\rm PGL}={\rm PGL}_2(\mathbb{F}_q)={\rm GL}/\mathcal{Z},$$
where $\mathcal{Z}$ is the center of ${\rm GL}$ consisting of the multiples of the identity matrix by elements of $\mathbb{F}_q^*$.

(4) The projective semi-linear group
$${\rm P\Gamma L}={\rm P\Gamma L}_2(\mathbb{F}_q)={\rm PGL}\rtimes {\rm Gal}=
\Big\{A\sigma^i\,\Big|\,A\in {\rm PGL}, ~0\leq i\leq rn-1\Big\},$$
where ${\rm Gal}={\rm Gal}(\mathbb{F}_{q^r}/\mathbb{F}_2)={\rm Gal}(\mathbb{F}_{2^{rn}}/\mathbb{F}_2)=\langle\sigma \rangle$ is
the Galois group of order $rn$ generated by $\sigma$ ($\sigma$ sends each $\alpha\in\mathbb{F}_{q^r}$ to $\alpha^2$).
The operation $``\cdot"$ in ${\rm P\Gamma L}$ is defined as follows:
$$A\sigma^i\cdot B\sigma^j=A\sigma^i(B)\sigma^{i+j}, ~ 0\leq i,j\leq rn-1,$$
where $\sigma^i(B)=\begin{pmatrix}\sigma^it & \sigma^iu\\ \sigma^iv & \sigma^iw\end{pmatrix}$ for
$B=\begin{pmatrix}t & u\\ v & w\end{pmatrix}\in {\rm PGL}$ ($\sigma^ia$ means $\sigma^ia=a^{2^i}$ for $a\in\mathbb{ F}_{q}$).
It is clear that $E_2\sigma^0$ is the identity element of $P\Gamma L$, where $E_2=\left(
                                                                                    \begin{array}{cc}
                                                                                      1 & 0 \\
                                                                                      0 & 1 \\
                                                                                    \end{array}
                                                                                  \right)
$
is the identity matrix.

Now it is the turn of group  actions.
For a general group $H$ acting on a finite set $X$, let $H(x)$ denote the orbit containing $x\in X$, namely $H(x)=\{hx\,|\,h\in H\}$; let
${\rm Stab_H}(x)$ be the stabilizer of the point $x\in X$ in $H$, namely  ${\rm Stab_H}(x)=\{h\in H\,|\, hx=x\}$.
Then the cardinality of the orbit $H(x)$ is equal to the index of ${\rm Stab_H}(x)$ in $H$ and is written
$$\big|H(x)\big|=\big[H:{\rm Stab_H}(x)\big].$$
%Let $\overline{\mathbb{F}}_q=\mathbb{F}_q\cup \{\infty\}$ be the projective line set.
Now let $\mathcal{S}=\mathcal{S}(r,n)$ denote the set of  elements in $\mathbb{F}_{q^r}$ of degree $r$ over $\mathbb{F}_q$; in other words,
$$\mathcal{S}=\Big\{\alpha\in \mathbb{F}_{q^r}\,\Big|\,\hbox{there exists a monic irreducible polynomial $f$ of degree $r$ over $\mathbb{F}_q$ satisfying $f(\alpha)=0$}\Big\}.$$
It is   known that ${\rm PGL}$ and ${\rm P\Gamma L}$ can act on the set $\mathcal{S}$ in the following ways (see \cite{yueqin} or \cite{ryan15}):
\begin{itemize}

\item The action of the projective general linear group on $\mathcal{S}$:
\begin{eqnarray*}
{\rm PGL}\times \mathcal{S} &\longrightarrow & \mathcal{S}\\
(A,~\alpha) &\mapsto & A\alpha=\frac{a\alpha+b}{c\alpha+d},
\end{eqnarray*}
where $A=\begin{pmatrix}a & b \\ c & d\end{pmatrix}\in {\rm PGL}$.
%$\frac{1}{0}=\infty$ and $\frac{1}{\infty}=0$.

\item The action of the projective semi-linear group on $\mathcal{S}$:
\begin{eqnarray*}
{\rm P\Gamma L}\times \mathcal{S} &\longrightarrow & \mathcal{S}\\
\Big( A\sigma^i,~\alpha\Big) &\mapsto & (A\sigma^i)\alpha=A\big(\sigma^i(\alpha)\big)=\frac{a\sigma^i(\alpha)+b}{c\sigma^i(\alpha)+d}
=\frac{a\alpha^{2^{i}}+b}{c\alpha^{2^{i}}+d}.
\end{eqnarray*}
\end{itemize}
We are ready to state a sufficient condition
which guarantees two extended irreducible Goppa
codes to be equivalent; thus, in particular, it gives an upper bound for the number of inequivalent codes in
$$
\Big\{\overline{C(\alpha)}\,\Big|\,\alpha\in \mathcal{S}\Big\},
$$
see \cite{berger3}, \cite{yueqin} or \cite{ryan15}.
\begin{lem}\label{important}
Let $\alpha\in \mathcal{S}$ and $\beta\in \mathcal{S}$. If $\alpha,\beta$ lie in the same ${\rm P\Gamma L}$-orbit,
namely $\alpha=A\sigma^i\beta$ for some
$A\sigma^i\in {\rm P\Gamma L}$, then the extended Goppa code $\overline{C(\alpha)}$ is (permutation) equivalent to the extended Goppa code
$\overline{C(\beta)}$.
In particular, the number of inequivalent extended irreducible binary Goppa codes of length $q+1$
and degree $r$ is less than or equal to the number of orbits of ${\rm P\Gamma L}$ on $\mathcal{S}$.
\end{lem}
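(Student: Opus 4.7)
The plan is to reduce Lemma~\ref{important} to checking that each of the two generator types of ${\rm P\Gamma L}$—elements $A \in {\rm PGL}$ and the Frobenius $\sigma$—when applied to $\beta$, induces a coordinate permutation carrying $\overline{C(\beta)}$ onto the corresponding $\overline{C(A\sigma^i \beta)}$. Since every element of ${\rm P\Gamma L}$ is of the form $A\sigma^i$, a routine composition argument then handles the general case, and the ``in particular'' claim drops out at once: every extended irreducible binary Goppa code of length $q+1$ and degree $r$ has the form $\overline{C(\alpha)}$ for some $\alpha \in \mathcal{S}$, so the number of inequivalent such codes is bounded above by the number of ${\rm P\Gamma L}$-orbits on $\mathcal{S}$.

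For the Frobenius step, suppose $\alpha = \sigma(\beta) = \beta^2$. Since $\mathbb{F}_q$ has characteristic $2$, for every $\gamma \in \mathbb{F}_q$ one has $\beta^2 - \gamma = (\beta - \sqrt{\gamma})^2$, and the squaring map $\phi: \mathbb{F}_q \to \mathbb{F}_q$, $\alpha_i \mapsto \sqrt{\alpha_i}$, is a bijection. Using $c_i \in \mathbb{F}_2$ and the $\mathbb{F}_2$-linearity of Frobenius,
\[
\sum_{i=0}^{q-1} \frac{c_i}{\beta^2 - \alpha_i} \;=\; \left(\sum_{i=0}^{q-1} \frac{c_i}{\beta - \phi(\alpha_i)}\right)^2,
\]
so $c \in C(\beta^2)$ if and only if the $\phi$-permuted word lies in $C(\beta)$. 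Fixing the extra coordinate extends this to $\overline{C(\beta^2)} \sim \overline{C(\beta)}$, and iterating handles $\sigma^i$.

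For the ${\rm PGL}$ step, given $A = \begin{pmatrix} a & b \\ c & d \end{pmatrix}$ and $\alpha = A\beta$, a direct computation yields
\[
\alpha - A\gamma \;=\; \frac{(ad - bc)(\beta - \gamma)}{(c\beta + d)(c\gamma + d)}
\]
whenever $c\gamma + d \neq 0$. I would index the coordinates of the extended code by $\mathbb{P}^1(\mathbb{F}_q) = \mathbb{F}_q \cup \{\infty\}$, declaring the extended coordinate to correspond to $\infty$, and let $\pi_A$ denote the permutation of $\mathbb{P}^1(\mathbb{F}_q)$ induced by $A$. The displayed identity, reinterpreted on the parity-check relation of the extended code and combined with the extra parity-sum row which compensates the nonconstant scaling factor $(c\gamma + d)$, shows that $\pi_A$ carries codewords of $\overline{C(\alpha)}$ to codewords of $\overline{C(\beta)}$. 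The affine case $c = 0$ is essentially immediate, since then $c\gamma + d = d$ is constant.

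The main obstacle is clearly the non-affine ${\rm PGL}$ step: the column-dependent factor $(c\gamma + d)$ is an element of $\mathbb{F}_q^*$ rather than $\mathbb{F}_2^*$, so naive column rescaling of the parity-check matrix over $\mathbb{F}_{q^r}$ does not preserve the underlying $\mathbb{F}_2$-code. Reconciling this factor with the permutation $\pi_A$ is only possible once one has singled out the extended coordinate as the point at infinity, and the cleanest way to express the argument is probably via the description of the extended Goppa code as evaluations at the points of $\mathbb{P}^1(\mathbb{F}_q)$ of a ${\rm PGL}$-covariant space of rational functions, in the spirit of Berger~\cite{berger3}. The Frobenius step and the affine subcase are routine in characteristic $2$ by comparison.
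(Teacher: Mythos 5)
The paper itself offers no proof of this lemma: it is imported from \cite{berger3}, \cite{yueqin} and \cite{ryan15}, so your attempt has to be judged on whether it stands on its own. Your reduction to the two generator types, the ``in particular'' deduction, the Frobenius step and the affine subcase are all correct: in characteristic $2$, with $c_i\in\mathbb{F}_2$, the identity $\sum_i c_i/(\beta^2-\alpha_i)=\bigl(\sum_i c_i/(\beta-\sqrt{\alpha_i})\bigr)^2$ really does exhibit $C(\sigma\beta)$ as a coordinate permutation of $C(\beta)$ with the overall parity preserved, and for $c=0$ the factor $c\gamma+d=d$ is a constant that can be divided out of the syndrome.

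The general ${\rm PGL}$ step, however, is left as a genuine gap, and it is exactly the substantive content of the lemma: for the \emph{unextended} codes only the affine maps (plus Frobenius) are known to produce permutation equivalences, which is precisely why the parity coordinate is introduced, so the assertion that $\pi_A$ ``combined with the extra parity-sum row which compensates the nonconstant scaling factor $(c\gamma+d)$'' carries $\overline{C(\alpha)}$ to $\overline{C(\beta)}$ cannot be waved through and then deferred to Berger --- it must be verified. The verification is short but specific. Write $\alpha=A\beta$ with $c\neq0$, let $(v_\gamma)_{\gamma\in\mathbb{F}_q}$ together with $v_\infty$ be a word of $\overline{C(\alpha)}$, so $\sum_{\gamma\in\mathbb{F}_q}v_\gamma=v_\infty$. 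Combining your M\"obius identity with the partial fraction $(c\delta+d)/(\beta-\delta)=c+(c\beta+d)/(\beta-\delta)$ (characteristic $2$) and treating the coordinate $\gamma=A(\infty)=a/c$ separately gives
\begin{equation*}
\sum_{\gamma\in\mathbb{F}_q}\frac{v_\gamma}{\alpha-\gamma}
=\frac{c(c\beta+d)}{ad-bc}\sum_{\gamma\in\mathbb{F}_q}v_\gamma
+\frac{(c\beta+d)^2}{ad-bc}\sum_{\substack{\delta\in\mathbb{F}_q\\ \delta\neq d/c}}\frac{v_{A\delta}}{\beta-\delta},
\end{equation*}
so the vanishing of the left-hand side and the parity relation force $\sum_{\delta\neq d/c}v_{A\delta}/(\beta-\delta)=c\,v_\infty/(c\beta+d)$; since the permuted word places $v_\infty$ at the coordinate $d/c=A^{-1}(\infty)$ and $1/(\beta-d/c)=c/(c\beta+d)$, the two contributions cancel and the word $(v_{A\delta})_{\delta\in\mathbb{F}_q\cup\{\infty\}}$ satisfies both defining checks of $\overline{C(\beta)}$, with the parity check obviously preserved. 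Until you either carry out this calculation (or the covariant function-space argument in full detail), or simply cite the result as the paper does, the central claim of the lemma remains unproved in your write-up.
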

With the help of Lemma \ref{important},   we only need to count the number of orbits of ${\rm P\Gamma L}$ on $\mathcal{S}$.

\subsection{The action of ${\rm P\Gamma L}$ on $\mathcal{I}_r$}
In this subsection,  we introduce another group action: The group $\rm{P \Gamma L}$ can act on the set of all
monic irreducible polynomials of degree $r$ over $\mathbb{F}_q$. Let $\mathcal{I}_r$
be the set of all monic irreducible polynomials of degree $r$ over  $\mathbb{F}_q$.
It has been shown that the number of orbits of $\rm{P \Gamma L}$ on $\mathcal{S}$ is equal to the number of orbits of
$\rm{P \Gamma L}$ on $\mathcal{I}_r$, see \cite{cz}.

Let $A=\begin{pmatrix}a & b \\ c & d\end{pmatrix}\in {\rm PGL}$, $\alpha \in \mathbb{F}_q$
and $f(x)=a_0+a_1x+\cdots+a_rx^r\in \mathbb{F}_q[x]$ with $a_r\neq 0$. We make the following definitions:
\begin{eqnarray*}
\big(f(x)\big)^*&=&\frac{1}{a_r}f(x),~~
A\alpha=\frac{a\alpha+b}{c\alpha+d},\\
Af&=&(-cx+a)^{r}f\big(A^{-1}x\big)=(-cx+a)^{r}f\Big(\frac{dx-b}{-cx+a}\Big),\\
\sigma^if&=&\sigma^i(f(x))=\sigma^i(a_0)+\sigma^i(a_1)x+\cdots+\sigma^i(a_r)x^r.
\end{eqnarray*}

The group $\rm{P\Gamma L}$ can act on the set $\mathcal{I}_r$, as restated below.
\begin{lem}\label{action}
{\rm (\cite[Lemma 3.1]{cz})} With  notation given above, we have a group action  ${\rm P\Gamma L}$ on the set $\mathcal{I}_r$ defined by
\begin{eqnarray*}
{\rm P\Gamma L}\times \mathcal{I}_r &\rightarrow & \mathcal{I}_r\\
\big(A\sigma^i,f\big) &\mapsto &(A\sigma^i)\big(f\big)=\Big(A(\sigma^if)\Big)^*.
\end{eqnarray*}
\end{lem}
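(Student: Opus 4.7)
The plan is to verify the three standard group-action axioms: well-definedness (the image really lies in $\mathcal{I}_r$ and does not depend on the choice of matrix representative in $\mathrm{PGL}$), preservation of the identity, and compatibility with the group operation described just before the lemma.

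First I would handle well-definedness. I would check that replacing $A$ by $\lambda A$ with $\lambda\in\mathbb{F}_q^*$ just multiplies $Af$ by $\lambda^r$, so that $\bigl(A(\sigma^if)\bigr)^*$ is unchanged after monic normalization. Next, for $f(x)\in\mathcal{I}_r$ with a root $\alpha\in\mathbb{F}_{q^r}$, I would show that $\sigma^i f\in\mathcal{I}_r$: since $\mathrm{char}=2$, the map $a\mapsto a^{2^i}$ is a ring automorphism of $\mathbb{F}_q$ (note $\sigma^i$ restricts to $\mathbb{F}_q$ because $\mathbb{F}_q\subset\mathbb{F}_{q^r}$ is stable under Frobenius), so it preserves monicity, degree, and irreducibility. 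Then I would show $Af$, after scaling to be monic, is the minimal polynomial over $\mathbb{F}_q$ of $A\alpha=\tfrac{a\alpha+b}{c\alpha+d}$: an elementary substitution gives $Af(A\alpha)=(-cA\alpha+a)^r f(\alpha)=0$, and since $\alpha$ and $A\alpha$ generate the same extension of $\mathbb{F}_q$ (each is a rational $\mathbb{F}_q$-function of the other), $A\alpha$ also has degree $r$, so $(Af)^*\in\mathcal{I}_r$. Combining these shows $(A\sigma^i)(f)\in\mathcal{I}_r$.

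Second, the identity axiom is immediate: with $A=E_2$ one has $(-cx+a)^r=1$ and $A^{-1}x=x$, so $E_2 f=f$, and $\sigma^0$ is the identity, giving $(E_2\sigma^0)(f)=f^*=f$.

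The main work, and the likely obstacle, is verifying compatibility with the twisted composition $A\sigma^i\cdot B\sigma^j=A\sigma^i(B)\sigma^{i+j}$. The key auxiliary identity I would prove is
\[
\sigma^i\bigl(B(h)\bigr)=\bigl(\sigma^i(B)\bigr)\bigl(\sigma^i h\bigr)
\]
for every $B\in\mathrm{PGL}$ and $h\in\mathbb{F}_q[x]$. This follows because $B(h)$ is obtained from the coefficients of $B$ and $h$ by polynomial operations, and $\sigma^i$ is a ring homomorphism. Granting this, I would first observe that the normalization $(\,\cdot\,)^*$ is invariant under multiplication by scalars, and that $A(\mu h)=\mu\, A(h)$ for $\mu\in\mathbb{F}_q^*$, so $\bigl(A(\sigma^i(\mu h))\bigr)^*=\bigl(A(\sigma^i h)\bigr)^*$ — this means the intermediate monic normalization in $(B\sigma^j)(f)$ is harmless. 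Then
\[
(A\sigma^i)\bigl((B\sigma^j)(f)\bigr)=\bigl(A\bigl(\sigma^i(B(\sigma^j f))\bigr)\bigr)^*=\bigl(A\,\sigma^i(B)\,(\sigma^{i+j}f)\bigr)^*=(A\sigma^i\cdot B\sigma^j)(f),
\]
where the middle equality uses the auxiliary identity and the composition law for $B(\,\cdot\,)$ applied to polynomials, namely $A(\sigma^i(B)(h))=(A\,\sigma^i(B))(h)$, which is a direct check from the definition $Af=(-cx+a)^r f(A^{-1}x)$ together with matrix multiplication under the projective action $A(Bx)=(AB)x$.

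The bookkeeping around the $(\,\cdot\,)^*$ normalization and the interaction between the Galois twist and the matrix action is the only real source of friction; once the identity $\sigma^i(B(h))=\sigma^i(B)(\sigma^i h)$ is isolated and the scalar-equivariance of $A(\,\cdot\,)$ is noted, the three axioms fall into place and establish that we indeed have a group action of $\mathrm{P\Gamma L}$ on $\mathcal{I}_r$.
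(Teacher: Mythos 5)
Your verification is correct and is exactly the standard argument: the paper itself gives no proof of this lemma, simply citing \cite[Lemma 3.1]{cz}, and your axiom-by-axiom check (scalar-independence of the representative of $A$, preservation of $\mathcal{I}_r$ via roots and degrees, the identity axiom, and compatibility via the key identity $\sigma^i\big(B(h)\big)=\sigma^i(B)\big(\sigma^i h\big)$ together with scalar-equivariance of $A(\cdot)$ so the intermediate monic normalization is harmless) is precisely the routine verification the cited reference carries out. No gaps.
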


\begin{Remark}{\rm
Many authors have studied the action of ${\rm PGL}$ on $\mathcal{I}_r$, focusing on the characterization
and number of $A$-invariants where $A\in{\rm PGL}$ (for example, see \cite{Gare}, \cite{Reis182}, \cite{Reis20}, \cite{Reis18}, \cite{ST}).
The paper \cite{MOR} considered an action of ${\rm P\Gamma L}$ on $\mathcal{I}_r$, and our definition of ${\rm P\Gamma L}$ on $\mathcal{I}_r$ is different from  that of \cite{MOR}}.
\end{Remark}

The next result reveals that the problem of counting the number of orbits of  ${\rm P\Gamma L}$ on $\mathcal{S}$ can be completely converted
to that of counting  the number of orbits of ${\rm P\Gamma L}$ on $\mathcal{I}_r$.
\begin{lem}\label{orbit}
{\rm (\cite[Lemma 3.3]{cz})} The number of orbits of  ${\rm P\Gamma L}$ on $\mathcal{S}$ is equal to the number of orbits of ${\rm P\Gamma L}$ on $\mathcal{I}_r$.
\end{lem}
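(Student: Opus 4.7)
The plan is to exhibit a ${\rm P\Gamma L}$-equivariant surjection $\phi:\mathcal{S}\to\mathcal{I}_r$ and deduce from it a bijection on orbit sets.

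First, I would define $\phi$ by sending $\alpha\in\mathcal{S}$ to its minimal polynomial $m_\alpha(x)$ over $\mathbb{F}_q$. Since $\alpha$ has degree $r$ over $\mathbb{F}_q$, $m_\alpha$ is monic irreducible of degree $r$, so $\phi$ is well-defined; surjectivity is immediate because any $f\in\mathcal{I}_r$ has all $r$ of its roots in $\mathbb{F}_{q^r}$, each one lying in $\mathcal{S}$.

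The key step is to verify the equivariance $\phi\big((A\sigma^i)\alpha\big)=(A\sigma^i)\phi(\alpha)$. Set $f=m_\alpha$ and apply the field automorphism $\sigma^i$ to the equation $f(\alpha)=0$; this yields $(\sigma^i f)(\sigma^i\alpha)=0$. Because $\sigma^i$ maps $\mathbb{F}_q$ to itself and preserves monicity, degree and irreducibility, $\sigma^i f\in\mathcal{I}_r$, and therefore $m_{\sigma^i\alpha}=\sigma^i f$. Writing $\beta=\sigma^i\alpha$ and $g=m_\beta$, a direct Möbius computation using $A^{-1}(A\beta)=\beta$ shows that $(Ag)(A\beta)=(-c(A\beta)+a)^r g(\beta)=0$. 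By Lemma~\ref{action}, $(Ag)^*\in\mathcal{I}_r$ and has degree $r$, and having $A\beta$ as a root forces $(Ag)^*=m_{A\beta}$. Chaining the two identifications gives $m_{(A\sigma^i)\alpha}=(A\sigma^i)m_\alpha$, i.e.\ the desired equivariance.

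By equivariance, $\phi$ descends to a well-defined surjection $\bar\phi:\mathcal{S}/{\rm P\Gamma L}\to\mathcal{I}_r/{\rm P\Gamma L}$. For injectivity, suppose $m_\alpha$ and $m_\beta$ lie in a common ${\rm P\Gamma L}$-orbit, say $m_\beta=(A\sigma^i)(m_\alpha)$. Setting $\alpha'=(A\sigma^i)\alpha$, equivariance gives $m_\beta=m_{\alpha'}$, so $\beta$ and $\alpha'$ are Galois conjugates over $\mathbb{F}_q$: $\beta=(\alpha')^{q^j}=\sigma^{nj}(\alpha')$ for some $0\leq j\leq r-1$. Since $E_2\sigma^{nj}\in{\rm P\Gamma L}$, we conclude that $\beta$ lies in the same ${\rm P\Gamma L}$-orbit as $\alpha$, so $\bar\phi$ is a bijection and the two orbit counts agree. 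The main obstacle is the equivariance verification, which requires simultaneously tracking three moving parts — the Galois twist on the coefficients via $\sigma^i$, the Möbius substitution on the indeterminate via $A$, and the monic renormalization $(\cdot)^*$; however, the well-definedness of the ${\rm P\Gamma L}$-action on $\mathcal{I}_r$ recorded in Lemma~\ref{action} ensures that none of the degenerate possibilities (loss of degree, vanishing leading coefficient) arises, so the bookkeeping closes consistently.
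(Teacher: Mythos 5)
Your proposal is correct: the map $\alpha\mapsto m_\alpha$ is well defined and surjective onto $\mathcal{I}_r$, the equivariance $m_{(A\sigma^i)\alpha}=\big(A(\sigma^i m_\alpha)\big)^*$ holds by exactly the two computations you indicate, and the injectivity argument on orbit sets (Galois conjugates over $\mathbb{F}_q$ differ by $E_2\sigma^{nj}\in{\rm P\Gamma L}$) closes the bijection. The present paper gives no proof of this lemma, quoting it from \cite[Lemma 3.3]{cz}, and your argument is the standard root/minimal-polynomial correspondence used there, so this is essentially the same approach.
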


By Lemma \ref{orbit},
our ultimate aim is to find the number of   orbits of  ${\rm P\Gamma L}={\rm PGL}\rtimes {\rm Gal}$ on the set $\mathcal{I}_r$.
We will repeatedly use the following fact to achieve this goal (for example, see \cite[Pages 35-36]{Kerber}):
\begin{lem}\label{book}
Let $G$ be a finite group acting on a finite set $X$ and let $N$ be a normal subgroup of $G$.
It is clear that $N$ naturally acts on $X$. Suppose the set of $N$-orbits are denoted by $N\verb|\|X=\{N(x)\,|\,x\in X\}$. Then the factor group $G/N$ acts on $N\verb|\|X$ and the number of orbits of $G$ on $X$
is equal to the number of orbits of $G/N$ on $N\verb|\|X$.
\end{lem}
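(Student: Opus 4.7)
The plan is to verify that the prescription $(gN)\cdot N(x):=N(gx)$ defines an action of $G/N$ on $N\backslash X$, and then to exhibit a bijection between the $G$-orbits on $X$ and the $(G/N)$-orbits on $N\backslash X$.

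First I would check well-definedness of the candidate action, which requires independence of both the coset representative and the orbit representative. If $gN=g'N$, then $g'=gn$ for some $n\in N$; normality of $N$ supplies $n'\in N$ with $gn=n'g$, so $N(g'x)=N(n'gx)=N(gx)$. Similarly, if $N(x)=N(y)$, then $y=n_{1}x$ for some $n_{1}\in N$; normality again gives $n_{2}\in N$ with $gn_{1}=n_{2}g$, so $N(gy)=N(n_{2}gx)=N(gx)$. The identity and compatibility axioms then follow immediately from the underlying $G$-action on $X$.

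Next I would define $\Phi$ from the set of $G$-orbits on $X$ to the set of $(G/N)$-orbits on $N\backslash X$ by $\Phi\bigl(G(x)\bigr)=(G/N)\bigl(N(x)\bigr)$. The main claim reduces to the equivalence that two elements $x,y\in X$ lie in the same $G$-orbit if and only if $N(x)$ and $N(y)$ lie in the same $(G/N)$-orbit. The forward direction is immediate: if $y=gx$, then $N(y)=(gN)\cdot N(x)$. Conversely, if $N(y)=(gN)\cdot N(x)=N(gx)$, then $y=n(gx)=(ng)x$ for some $n\in N$, whence $y\in G(x)$. This equivalence shows that $\Phi$ is both well-defined and injective; surjectivity is immediate from the definition of $\Phi$.

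The main (and indeed only) obstacle is careful bookkeeping: the result is essentially formal, and each step uses normality of $N$ at precisely the right moment — once to make the $G/N$-action well-defined on cosets and orbit classes, and once more to pass from a $(G/N)$-orbit coincidence back to a $G$-orbit coincidence in $X$. Since the excerpt refers the reader to Kerber's book for the detailed argument, I would keep the exposition brief and centred on these two appearances of normality.
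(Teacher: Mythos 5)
Your proof is correct and complete: the induced action $(gN)\cdot N(x)=N(gx)$ is well-defined exactly as you argue, and your equivalence ``$x,y$ in the same $G$-orbit iff $N(x),N(y)$ in the same $G/N$-orbit'' gives the required bijection between orbit sets. The paper itself offers no proof, citing this as a standard fact from Kerber's book, and your direct verification is precisely the textbook argument, so there is nothing further to reconcile.
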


\section{The number of orbits of ${\rm P\Gamma L}$ on $\mathcal{I}_r$}
In this section we analyze the orbits of ${\rm P\Gamma L}$ on $\mathcal{I}_r$.
As $\rm{PGL}$ is a normal subgroup of $\rm{P\Gamma L}$, by virtue of  Lemma \ref{book},
we  first count the number of orbits of $\rm{PGL}$ on the set $\mathcal{I}_r$.
According to the Cauchy-Frobenius Theorem (or named Burnside's Lemma, see \cite[Theorem 2.113]{Rotman}), we have
$$\big|{\rm PGL}\verb|\|\mathcal{I}_r\big|=\frac{1}{|{\rm PGL}|}\sum_{A\in {\rm PGL}}\big|{\rm Fix}(A)\big|
=\frac{1}{q(q^2-1)}\sum_{A\in {\rm PGL}}\big|{\rm Fix}(A)\big|,$$
where
${\rm Fix}(A)=\big\{ f \in \mathcal{I}_r\,\big|\,Af=f\big\}$
is the number of fixed points of $A\in {\rm PGL}$ in $\mathcal{I}_r$.
To find the exact value of $\big|{\rm PGL}\verb|\|\mathcal{I}_r\big|$,
it is enough to determine the number of elements of ${\rm Fix}(A)$, for each $A\in {\rm PGL}$.
To this end, in order to use some known results in the literature,
we need to consider another action of the group $\rm{PGL}$ on the set $\mathcal{I}_r$ defined by
\begin{eqnarray*}
{\rm PGL} \times \mathcal{I}_r &\longrightarrow& \mathcal{I}_r \\
(A, ~f) &\mapsto & A\circ f=\Big((bx+d)^rf\big(\frac{ax+c}{bx+d}\big)\Big)^*,
\end{eqnarray*}
where $A=\left(
 \begin{array}{cc}
 a & b \\
 c & d \\
 \end{array}
 \right)
\in \rm{PGL}$, see \cite{Reis20}.
Define
$${\rm Fix}(A,\circ)=\big\{ f \in \mathcal{I}_r\,\big|\,A\circ f=f\big\}.$$
It follows that for every $A\in {\rm PGL}$,
$${\rm Fix}(A)={\rm Fix}\big((A^T)^{-1},\circ\big),$$
where $A^T$ denotes the transpose of the matrix $A$.
Therefore, the number of orbits of $\rm{PGL}$ on the set $\mathcal{I}_r$ is equal to
\begin{eqnarray*}
\big|{\rm PGL}\verb|\|\mathcal{I}_r\big|
&=&\frac{1}{q(q^2-1)}\sum_{A\in {\rm PGL}}\big|{\rm Fix}(A)\big|\\
&=&\frac{1}{q(q^2-1)}\sum_{A\in {\rm PGL}}\big|{\rm Fix}\Big((A^T)^{-1},\circ\big)\Big|\\
&=&\frac{1}{q(q^2-1)}\sum_{A\in {\rm PGL}}\big|{\rm Fix}(A, \circ)\big|.
\end{eqnarray*}
It allows us to convert the problem of counting  $\big|{\rm Fix}(A)\big|$ to that of counting $\big|{\rm Fix}(A, \circ)\big|$.
The value of $\big|{\rm Fix}(A, \circ)\big|$ has been considered in the literature, see \cite{Reis20}.
Given $A,B,P\in {\rm PGL}$, if $PAP^{-1}=B$ then   $A$ and $B$ are called  conjugate
in ${\rm PGL}$, denoted by $A\thicksim B$. If this is the case,
according to \cite[Lemma 2.5]{Reis20} and  \cite[Theorem 2.7]{Reis20}, one has
$\big|{\rm Fix}(A)\big|=\big|{\rm Fix}(B)\big|$.
In this sense,  it is crucial to determine the conjugacy classes of the group ${\rm PGL}$.

Let $\xi$ be a primitive element of the finite
field $\mathbb{F}_{q^2}$; that is, the cyclic group $\mathbb{F}_{q^2}^*$ is generated by $\xi$, in symbols  $\mathbb{F}_{q^2}^*=\langle\xi\rangle$.
Then $\mathbb{F}_{q}^*=\langle\xi^{q+1}\rangle$, and
the set
$$\big\{1, \xi^{q-1}, \xi^{2(q-1)}, \cdots, \xi^{q(q-1)}\big\}$$
is a  transversal of $\mathbb{F}_{q}^*$ in $\mathbb{F}_{q^2}^*$.
Hence $\mathbb{F}_{q^2}^*$ is the disjoint union
$$
\mathbb{F}_{q^2}^*=\mathbb{F}_q^*\cup \xi^{q-1}\mathbb{F}_q^*\cup \xi^{2(q-1)}\mathbb{F}_q^*\cup\cdots \cup \xi^{q(q-1)}\mathbb{F}_q^*.
$$
Therefore the set of  elements of  $\mathbb{F}_{q^2}^*$ that do not belong to $\mathbb{F}_q^*$ is
$$
\mathbb{F}_{q^2}^*-\mathbb{F}_q^*=\bigcup_{i=1}^{\frac{q}{2}}\Big(\xi^{(q-1)i}\mathbb{F}_q^*\cup\xi^{-(q-1)i}\mathbb{F}_q^*\Big)
=\bigcup_{i=1}^{\frac{q}{2}}\Big(\xi^{(q-1)i}\mathbb{F}_q^*\cup\xi^{q(q-1)i}\mathbb{F}_q^*\Big).
$$
The number of conjugacy classes of ${\rm PGL}$ is presented in \cite{Group}.
The next result contains more detail information about the conjugacy classes of ${\rm PGL}$,
but which may not be readily available in the literature.
\begin{lem}\label{orbitnumber}
With  notation as given above,
there are exactly  four families of conjugacy classes of ${\rm PGL}$.
\begin{itemize}
\item[(1)] The matrix $E_2$ gives  a conjugacy class of size $1$.

\item[(2)] The matrix $$U_1=\begin{pmatrix}
1 & 1 \\
0 & 1
\end{pmatrix}$$ gives  a conjugacy class which contains $q^2-1$ elements.

\item[(3)] The matrices
$$D_{1,a}=\begin{pmatrix}
1 & 0 \\
0 & a
\end{pmatrix}\big(a\in S\big),$$
give  $\frac{q-2}{2}$ conjugacy classes,
where $S\subseteq \mathbb{F}_q^*$ satisfies $\{1\}\cup S\cup S^{-1}=\mathbb{F}_q^*$ with $S^{-1}=\{s^{-1}\,|\,s\in S\}$.
Each conjugacy class contains $q(q+1)$ elements.

\item[(4)] The matrices
$$V_{\gamma_i}=\begin{pmatrix}
0 & 1 \\
\gamma_i^{1+q} & \gamma_i+\gamma_i^q
\end{pmatrix}$$
give  $\frac{q}{2}$ conjugacy classes,
where $\gamma_i=\xi^{(q-1)i}$ for $i=1,2,\cdots,\frac{q}{2}$.
Each conjugacy class contains $q(q-1)$ elements.
\end{itemize}
\end{lem}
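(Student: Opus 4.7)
The plan is to use the standard rational-canonical-form / Jordan-form analysis, adapted to the projective group and to characteristic $2$. Two elements $A, B \in {\rm GL}$ represent conjugate classes in ${\rm PGL}$ iff there exist $\lambda \in \mathbb{F}_q^*$ and $P \in {\rm GL}$ with $B = \lambda P A P^{-1}$, so a ${\rm PGL}$-class is determined by the unordered multiset of eigenvalues of a representative up to simultaneous scaling by $\mathbb{F}_q^*$. The characteristic polynomial $\chi_A(x) = x^2 + {\rm tr}(A)\, x + \det(A)$ (signs vanish in characteristic $2$) partitions matrices into three cases: $\chi_A$ has a repeated root in $\mathbb{F}_q$, $\chi_A$ splits into distinct linear factors over $\mathbb{F}_q$, or $\chi_A$ is irreducible over $\mathbb{F}_q$. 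I would handle the three cases in turn.

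The repeated-root cases give (1) and (2). If $A$ is scalar then its image in ${\rm PGL}$ is the identity $E_2$; otherwise $A$ is similar to a nontrivial Jordan block $\lambda E_2 + N$ with $N$ nilpotent of rank $1$, and after scaling by $\lambda^{-1}$ and a ${\rm GL}$-conjugation, this becomes $U_1$, so all such matrices form a single class. If $\chi_A$ splits with distinct roots $\alpha, \beta \in \mathbb{F}_q^*$, then $A$ is ${\rm GL}$-conjugate to ${\rm diag}(\alpha, \beta)$; scaling by $\alpha^{-1}$ yields $D_{1,a}$ with $a = \beta/\alpha \in \mathbb{F}_q^* \setminus \{1\}$, while the ambiguity of labelling the eigenvalues identifies $a$ with $a^{-1}$. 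Choosing $S$ as a transversal for the equivalence $a \sim a^{-1}$ on $\mathbb{F}_q^* \setminus \{1\}$ produces exactly $(q-2)/2$ classes of type (3).

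For the non-split case, $\chi_A$ has roots $\gamma, \gamma^q \in \mathbb{F}_{q^2}^* \setminus \mathbb{F}_q^*$, and the rational canonical form over $\mathbb{F}_q$ puts $A$ into the companion-matrix form $V_\gamma$ displayed in the statement. Scaling $A$ by $\lambda \in \mathbb{F}_q^*$ sends the eigenvalue pair to $\{\lambda \gamma, \lambda \gamma^q\}$, so $V_\gamma$ and $V_{\gamma'}$ are ${\rm PGL}$-conjugate iff $\gamma' \in \mathbb{F}_q^* \gamma \cup \mathbb{F}_q^* \gamma^q$. Writing $\eta = \xi^{q-1}$, which generates the cyclic quotient $\mathbb{F}_{q^2}^*/\mathbb{F}_q^*$ of odd order $q+1$, one checks that $\eta^q = \eta^{-1}$, so Galois conjugation acts as inversion on this quotient. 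Hence the $q$ nontrivial cosets pair up into $q/2$ orbits $\{\eta^i \mathbb{F}_q^*, \eta^{-i} \mathbb{F}_q^*\}$ (with no fixed points because $q+1$ is odd), yielding exactly the $q/2$ representatives $V_{\gamma_i}$ with $\gamma_i = \xi^{(q-1)i}$, $1 \le i \le q/2$.

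The class sizes follow from the orbit-stabilizer theorem inside ${\rm PGL}$, which has order $q(q^2-1)$. The centralizer of $D_{1,a}$ ($a \ne 1$) is the image of the diagonal torus, of order $q-1$, giving class size $q(q+1)$; the centralizer of $V_{\gamma_i}$ is the image of the non-split torus $\mathbb{F}_{q^2}^*$, of order $q+1$, giving class size $q(q-1)$; the centralizer of $U_1$ is the unipotent-plus-scalar group of order $q$, giving class size $q^2 - 1$. The bookkeeping check
\[
1 + (q^2 - 1) + \tfrac{q-2}{2}\, q(q+1) + \tfrac{q}{2}\, q(q-1) = q^3 - q = |{\rm PGL}|
\]
confirms that every element of ${\rm PGL}$ lies in one of the listed classes. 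The main technical point is the non-split count: verifying that Galois conjugation descends to $\eta \mapsto \eta^{-1}$ on $\mathbb{F}_{q^2}^*/\mathbb{F}_q^*$ and that the odd order $q+1$ rules out spurious fixed points, so that the tally of $q/2$ representatives is both correct and non-redundant.
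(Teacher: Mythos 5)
Your argument is correct, and it reaches the classification by a more self-contained route than the paper does. The paper's appendix starts from the known conjugacy classification of ${\rm GL}_2(\mathbb{F}_q)$ (cited from James--Liebeck), then descends to ${\rm PGL}$ by explicit matrix computations showing which ${\rm GL}$-classes fuse under scalars ($U_s\sim U_1$, $D_{1,a}\sim D_{1,a^{-1}}$, $V_\gamma\sim V_{\lambda\gamma}$), proves non-conjugacy of the listed representatives via eigenvalue comparisons and congruence manipulations with the exponents of the primitive element $\xi$, and finally confirms completeness by checking that the class sizes sum to $q^3-q$. You instead classify directly modulo scalars via the characteristic polynomial and canonical forms, obtain the class sizes from orbit--stabilizer and centralizer computations rather than inheriting them from the ${\rm GL}$ class sizes, and handle the non-split case through the quotient $\mathbb{F}_{q^2}^*/\mathbb{F}_q^*$ of odd order $q+1$ on which the Frobenius acts by inversion; this replaces the paper's explicit congruences such as $(q-1)^2(i_2-i_1)\equiv 0 \pmod{q^2-1}$ with cleaner group-theoretic bookkeeping, and it makes transparent why there are exactly $q/2$ non-split classes (no fixed cosets since $q+1$ is odd). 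What the paper's route buys is brevity given the citation and very concrete verifications; what yours buys is independence from the cited ${\rm GL}$ result and a uniform invariant (eigenvalue data up to simultaneous scaling) that also yields the centralizers.

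Two small points are worth tightening. First, your opening claim that a ${\rm PGL}$-class is determined by the eigenvalue multiset up to scaling is literally false in the repeated-root case ($E_2$ and $U_1$ share the multiset $\{1,1\}$); your subsequent case analysis does distinguish scalar from non-scalar Jordan type, so nothing breaks, but the blanket statement should be restricted to the semisimple cases. Second, when you compute class sizes you implicitly identify the ${\rm PGL}$-centralizer of an image with the image of the ${\rm GL}$-centralizer; one should note that an element $P$ with $PAP^{-1}=\lambda A$, $\lambda\neq 1$, is excluded because conjugation preserves eigenvalues (for $D_{1,a}$ this would force $a^2=1$, impossible for $a\neq1$ in characteristic $2$, and for $V_{\gamma_i}$ it would force $\lambda=\gamma_i^{q-1}\in\mathbb{F}_q^*$, which your coset analysis rules out). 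With these remarks added, the proof is complete and matches the lemma exactly.
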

\begin{proof}
Its proof is somewhat long, involving  some routine and tedious computations, and  is deferred to the
Appendix.
\end{proof}

By \cite[Lemma 4.1]{Reis20} and  \cite[Theorem 4.7]{Reis20}, we immediately  have

\begin{lem}\label{orbitnumber2}
Let notation be the same as in Lemma \ref{orbitnumber}. We have
\begin{itemize}
\item[(1)] If $r$ is even, then
$$\big|{\rm Fix}(U_1, \circ)\big|=
\frac{1}{r}\sum_{d\mid\frac{r}{2} \atop \gcd(2,d)=1}\mu(d)q^{\frac{r}{2d}}.$$
If $r$ is odd, then
$$\big|{\rm Fix}(U_1, \circ)\big|=0.$$

\item[(2)] Let $a\in S$ with $D={\rm ord}(a)$, where ${\rm ord}(a)$ stands for the order of the element $a$ in the multiplicative group $\mathbb{F}_q^*$. If  $r$ is divisible by $D$, saying  $r=Dm$, then
$$
\big|{\rm Fix}(D_{1,a}, \circ)\big|=\frac{\varphi(D)}{r}\sum_{d|m \atop {\rm gcd}(d,D)=1}\mu(d)(q^{\frac{m}{d}}-1),
$$
where
$\varphi$ is the Euler's Totient function and $\mu$ is the M$\ddot{o}$bius function.

If $r$ is not divisible by $D$, then
$$\big|{\rm Fix}(D_{1,a}, \circ)\big|=0.$$

\item[(3)] Let $D={\rm ord}(V_{\gamma_i})$,
where ${\rm ord}(V_{\gamma_i})$ stands for the order of the element $V_{\gamma_i}$ in the group ${\rm PGL}$.
If $r$ is divisible by $D$, saying   $r=Dm$, then
$$
\big|{\rm Fix}(V_{\gamma_i}, \circ)\big|=\frac{\varphi(D)}{r}\sum_{d|m \atop {\rm gcd}(d,D)=1}\mu(d)\Big(q^{\frac{m}{d}}+(-1)^{\frac{m}{d}+1}\Big).
$$

If $r$ is not divisible by $D$, then
$$\big|{\rm Fix}(V_{\gamma_i}, \circ)\big|=0.$$

\end{itemize}
\end{lem}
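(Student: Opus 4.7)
The plan is to invoke \cite[Theorem 4.7]{Reis20} directly by matching each of our three conjugacy class representatives to the corresponding type in Reis's catalogue. Since Reis has already enumerated, for every element of ${\rm PGL}$ of each type (unipotent, split semisimple, non-split semisimple), the number of monic irreducible polynomials of degree $r$ fixed under the action $\circ$, all that remains is to (i) identify the type of each of $U_1$, $D_{1,a}$, $V_{\gamma_i}$, (ii) compute its order in ${\rm PGL}$, and (iii) read off the stated formula, keeping track of the divisibility condition that is forced by the cycle structure of the roots under the Möbius action.

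For $U_1$, first I would observe that $U_1$ is a non-scalar unipotent element: in characteristic~$2$ one immediately verifies $U_1^2 = E_2$, so $U_1$ has order $2$ in ${\rm PGL}$. The Möbius action of $U_1$ on $\overline{\mathbb{F}_q}$ is $x \mapsto x+1$, and every orbit has length exactly $2$; consequently, the roots of any fixed monic irreducible polynomial partition into $2$-cycles, forcing $|{\rm Fix}(U_1,\circ)|=0$ when $r$ is odd. When $r$ is even, the first formula is precisely the unipotent case of \cite[Theorem 4.7]{Reis20} (equivalently, the inclusion-exclusion count of degree-$r$ elements whose minimal polynomial is stable under $x \mapsto x+1$).

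For $D_{1,a}$, the matrix is diagonalisable over $\mathbb{F}_q$ with distinct eigenvalues $1,a$, so it represents the split semisimple class, and its order in ${\rm PGL}$ coincides with the multiplicative order $D = {\rm ord}(a)$ in $\mathbb{F}_q^*$. The corresponding Möbius action $x \mapsto a x$ has orbits of length $D$ on $\overline{\mathbb{F}_q}^\times$, so $D \mid r$ is necessary, and the second formula is the split semisimple case of \cite[Theorem 4.7]{Reis20}. For $V_{\gamma_i}$, I would first recognize it as the companion matrix of $x^2 + (\gamma_i+\gamma_i^q)x + \gamma_i^{q+1} \in \mathbb{F}_q[x]$, an irreducible quadratic whose roots $\gamma_i, \gamma_i^q$ lie in $\mathbb{F}_{q^2}^* \setminus \mathbb{F}_q^*$; thus $V_{\gamma_i}$ represents the non-split semisimple class. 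Let $D = {\rm ord}(V_{\gamma_i})$ in ${\rm PGL}$; the associated Möbius action on $\mathbb{P}^1(\overline{\mathbb{F}_q})$ has orbit lengths $D$ on a suitable cofinite set, again forcing $D \mid r$, and the third formula (with the alternating factor $(-1)^{m/d+1}$) is the non-split semisimple case of \cite[Theorem 4.7]{Reis20}.

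The hard part is not in the individual fixed-point counts, which are quoted directly from Reis, but in confirming that our representatives from Lemma~\ref{orbitnumber} are indeed conjugate in ${\rm PGL}$ to Reis's normal forms and that the orders $D$ agree with the corresponding exponents appearing in Reis's statement. This verification is routine for $U_1$ and $D_{1,a}$; for $V_{\gamma_i}$ it requires a short computation using the eigenvalues $\gamma_i,\gamma_i^q\in\mathbb{F}_{q^2}^*$ to show that the order of $V_{\gamma_i}$ in ${\rm PGL}$ divides $q+1$ and controls the cycle structure of the non-split torus action. Once these identifications are in place, Lemma \ref{orbitnumber2} is an immediate transcription of \cite[Lemma 4.1]{Reis20} and \cite[Theorem 4.7]{Reis20}.
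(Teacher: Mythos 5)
Your proposal is correct and follows essentially the same route as the paper, which likewise obtains all three counts by identifying $U_1$, $D_{1,a}$ and $V_{\gamma_i}$ as representatives of the unipotent, split semisimple and non-split semisimple classes and then quoting \cite[Lemma 4.1]{Reis20} and \cite[Theorem 4.7]{Reis20} verbatim (the paper in fact gives no further argument). One harmless imprecision: under the paper's $\circ$-action the Möbius map induced by $U_1$ on roots is $\alpha\mapsto\alpha/(\alpha+1)$ rather than $\alpha\mapsto\alpha+1$, but these are conjugate order-$2$ unipotent maps and $|{\rm Fix}(\cdot,\circ)|$ is a class function, so your conclusions are unaffected.
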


By virtue of Lemmas \ref{orbitnumber} and \ref{orbitnumber2}, we are ready to obtain the number of orbits of ${\rm PGL}$ on
$\mathcal{I}_r$, which is the main result of this section.
\begin{Theorem}\label{orbitsize1}
Let $\varphi$ and $\mu$ denote the Euler's Totient function and the M$\ddot{o}$bius function, respectively.
Let ${\rm PGL}\verb|\|\mathcal{I}_r$
be the set of all  orbits of ${\rm PGL}$ on $\mathcal{I}_r$. Then
$$\big|{\rm PGL}\verb|\|\mathcal{I}_r\big|=\frac{1}{q(q^2-1)}\big(\mathcal{N}_0+\mathcal{N}_1+\mathcal{N}_2+\mathcal{N}_3\big),$$
where
\begin{equation*}
\mathcal{N}_0=\frac{1}{r}\sum_{d|r}\mu(d)q^{\frac{r}{d}},~~~~
\mathcal{N}_1 =
\begin{cases}
0, & 2\nmid r,\\
\vspace{0.02cm}\\
\frac{q^2-1}{r}\sum\limits_{d\mid\frac{r}{2} \atop \gcd(2,d)=1}\mu(d)q^{\frac{r}{2d}}, & 2\mid r,
\end{cases}
\end{equation*}
$$
\mathcal{N}_2 = q(q+1)\cdot\sum_{D|\gcd(r,q-1) \atop D\neq 1}\frac{\varphi^2(D)}{r}\sum_{d\mid\frac{r}{D} \atop \gcd(d, D)=1}\mu(d)\big(q^{\frac{r}{Dd}}-1\big)
$$
and
$$
\mathcal{N}_3 = \frac{q(q-1)}{2}\cdot\sum_{D\mid\gcd(r,q+1) \atop D\neq 1}\frac{\varphi^2(D)}{r}\sum_{d\mid\frac{r}{D} \atop \gcd(d, D)=1}\mu(d)\big(q^{\frac{r}{Dd}}+(-1)^{\frac{r}{Dd}+1}\big).
$$
\end{Theorem}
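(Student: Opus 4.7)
The plan is to apply Burnside's Lemma to the action of ${\rm PGL}$ on $\mathcal{I}_r$, which (using the identity $|{\rm Fix}(A)| = |{\rm Fix}((A^T)^{-1},\circ)|$ together with the bijectivity of $A\mapsto (A^T)^{-1}$ on ${\rm PGL}$) already gives
$$q(q^2-1)\cdot|{\rm PGL}\backslash \mathcal{I}_r| = \sum_{A \in {\rm PGL}}|{\rm Fix}(A,\circ)|,$$
as exhibited in the excerpt preceding the statement. Since $|{\rm Fix}(A,\circ)|$ is constant on ${\rm PGL}$-conjugacy classes, I would rewrite this as a weighted sum over representatives of the four families in Lemma \ref{orbitnumber}: one representative $E_2$, one representative $U_1$, $(q-2)/2$ representatives $\{D_{1,a}\}_{a\in S}$, and $q/2$ representatives $\{V_{\gamma_i}\}_{i=1}^{q/2}$, weighted respectively by the class sizes $1$, $q^2-1$, $q(q+1)$, $q(q-1)$.

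Next I would carry out each of the four partial sums. The contribution $\mathcal{N}_0=|\mathcal{I}_r|$ is the classical Gauss formula $\frac{1}{r}\sum_{d|r}\mu(d)q^{r/d}$. The contribution $\mathcal{N}_1=(q^2-1)|{\rm Fix}(U_1,\circ)|$ is read off Lemma \ref{orbitnumber2}(1). For $\mathcal{N}_2$, I would group the representatives $D_{1,a}$ in $S$ by the multiplicative order $D={\rm ord}(a)$ in $\mathbb{F}_q^*$; Lemma \ref{orbitnumber2}(2) forces $D\mid r$ for a nonzero contribution, so the outer sum ranges over $D\mid\gcd(r,q-1)$ with $D\neq 1$. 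For $\mathcal{N}_3$, I would analogously group the $V_{\gamma_i}$ by their order $D$ in ${\rm PGL}$. Here I would use the fact that $V_{\gamma_i}$ has eigenvalues $\gamma_i,\gamma_i^q\in\mathbb{F}_{q^2}$, so its projective order equals the multiplicative order of $\gamma_i^{q-1}$, a divisor of $q+1$, giving the restriction $D\mid\gcd(r,q+1)$ via Lemma \ref{orbitnumber2}(3).

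The main obstacle will be matching the combinatorial multiplicities of class representatives of a given order against the $\varphi(D)$-weighted prefactor in the stated formula. The key observation is that $\mathbb{F}_q^*$ has odd order $q-1$ (since $q=2^n$), so $a=a^{-1}$ in $\mathbb{F}_q^*$ forces $a=1$; consequently the $\varphi(D)$ elements of order $D>1$ partition into $\varphi(D)/2$ inversion-pairs, and the transversal condition $\{1\}\cup S\cup S^{-1}=\mathbb{F}_q^*$ ensures each pair contributes exactly one representative to $S$. A parallel argument in the unique cyclic subgroup of order $q+1$ of $\mathbb{F}_{q^2}^*$, applied to the parameters $\gamma_i$ indexing the anisotropic classes, handles the enumeration for $\mathcal{N}_3$. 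Once these identifications are in place, substitution of Lemma \ref{orbitnumber2} and multiplication by the class sizes $q(q+1)$ and $q(q-1)$ recovers the claimed expressions for $\mathcal{N}_2$ and $\mathcal{N}_3$, and dividing the resulting total by $|{\rm PGL}|=q(q^2-1)$ yields the theorem.
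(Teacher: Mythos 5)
Your route is the same as the paper's: Burnside's lemma over the conjugacy classes of Lemma \ref{orbitnumber}, evaluation of the fixed-point numbers through Lemma \ref{orbitnumber2}, and a regrouping of the class representatives by their projective order $D$. The problem is the final matching step for $\mathcal{N}_2$, which does not go through as you announce it. You correctly observe that $q-1$ is odd, so the $\varphi(D)$ elements of order $D>1$ in $\mathbb{F}_q^*$ split into $\varphi(D)/2$ inverse pairs and the transversal $S$ contains exactly one element of each pair; hence exactly $\varphi(D)/2$ of the representatives $D_{1,a}$, $a\in S$, have order $D$. Feeding this multiplicity into Lemma \ref{orbitnumber2}(2) yields
$$q(q+1)\sum_{a\in S}\big|{\rm Fix}(D_{1,a},\circ)\big|
=\frac{q(q+1)}{2}\sum_{D\mid\gcd(r,q-1),\,D\neq1}\frac{\varphi^2(D)}{r}\sum_{d\mid\frac{r}{D},\,\gcd(d,D)=1}\mu(d)\big(q^{\frac{r}{Dd}}-1\big),$$
which is one half of the $\mathcal{N}_2$ displayed in the theorem. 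So your concluding claim that "substitution \dots recovers the claimed expressions" fails for $\mathcal{N}_2$: your (correct) multiplicity $\varphi(D)/2$ is incompatible with the stated prefactor $q(q+1)$, which corresponds to attributing $\varphi(D)$ elements of each order to $S$. Your treatment of $\mathcal{N}_3$ is fine, because there the analogous halving ($\varphi(D)/2$ of the representatives $V_{\gamma_1},\dots,V_{\gamma_{q/2}}$ have order $D$) is already absorbed into the stated prefactor $\frac{q(q-1)}{2}$.

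Note that this is not a defect of your multiplicity count: the paper's own proof performs exactly the regrouping you describe but writes $\sum_{a\in S}|{\rm Fix}(D_{1,a},\circ)|=\sum_{D}\frac{\varphi^2(D)}{r}\sum_{d}(\cdots)$ with no factor $\frac{1}{2}$, so your pair-counting argument is in direct conflict with that displayed identity. A concrete check inside the theorem's hypotheses shows your count is the right one: take $q=2^5$, $r=31$, so every $a\in S$ has order $D=31$, $|S|=15$, and Lemma \ref{orbitnumber2}(2) gives $|{\rm Fix}(D_{1,a},\circ)|=\frac{\varphi(31)}{31}(q-1)=30$; then $\sum_{a\in S}|{\rm Fix}(D_{1,a},\circ)|=450$, whereas $\frac{\varphi^2(31)}{31}(q-1)=900$. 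Thus, carried out faithfully, your proposal does not prove the statement as written but instead produces $\mathcal{N}_2$ with prefactor $\frac{q(q+1)}{2}$; to close the argument you must either exhibit a factor $2$ that does not exist, or explicitly flag the discrepancy with the stated $\mathcal{N}_2$ (none of the paper's later special cases detect it, since they all have $\gcd(r,q-1)=1$ and hence $\mathcal{N}_2=0$). As it stands, the proposal asserts a final identification that fails.
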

\begin{proof}
According to  the discussions at the beginning of this section,
the number of orbits of $\rm{PGL}$ on the set $\mathcal{I}_r$ is equal to
$$\big|{\rm PGL}\verb|\|\mathcal{I}_r\big|
=\frac{1}{q(q^2-1)}\sum_{A\in {\rm PGL}}\big|{\rm Fix}(A, \circ)\big|.$$
By  Lemma \ref{orbitnumber},  we have
\begin{eqnarray*}
\big|{\rm PGL}\verb|\|\mathcal{I}_r\big|&=&
\frac{1}{q(q^2-1)}\bigg(\big|{\rm Fix}(E_2, \circ)\big|+(q^2-1)\big|{\rm Fix}(U_1, \circ)\big|\\
&&+q(q+1)\sum_{a\in S}\big|{\rm Fix}(D_{1,a}, \circ)\big|
+q(q-1)\sum_{i=1}^{\frac{q}{2}}\big|{\rm Fix}(V_{\xi^{(q-1)i}}, \circ)\big|
\bigg).
\end{eqnarray*}
Assume that
$
\mathcal{N}_0=\big|{\rm Fix}(E_2, \circ)\big|,~
\mathcal{N}_1=(q^2-1)\cdot\big|{\rm Fix}(U_1, \circ)\big|, ~
\mathcal{N}_2=q(q+1)\cdot\sum_{a\in S}\big|{\rm Fix}(D_{1,a}, \circ)\big|
$
and
$
\mathcal{N}_3=q(q-1)\cdot\sum_{i=1}^{\frac{q}{2}}\big|{\rm Fix}(V_{\xi^{(q-1)i}}, \circ)\big|.
$
Note that an enumerative formula for the size of $\mathcal{I}_r$ (see \cite[Theorem 3.25]{lidl}) is given by
$$|\mathcal{I}_r|=\frac{1}{r}\sum_{d|r}\mu(d)q^{\frac{r}{d}}.$$
Then
\begin{eqnarray*}
\mathcal{N}_0&=&\big|{\rm Fix}(E_2, \circ)\big|
=|\mathcal{I}_r|=\frac{1}{r}\sum_{d|r}\mu(d)q^{\frac{r}{d}}.
\end{eqnarray*}
In addition, by Lemma \ref{orbitnumber2}, one has
\begin{equation*}
\mathcal{N}_1 =(q^2-1)\cdot\big|{\rm Fix}(U_1, \circ)\big|=
\begin{cases}
0, & 2\nmid r,\\
\vspace{0.02cm}\\
\frac{q^2-1}{r}\sum\limits_{d|\frac{r}{2} \atop \gcd(2,d)=1}\mu(d)q^{\frac{r}{2d}}, & 2\mid r,
\end{cases}
\end{equation*}

\begin{eqnarray*}
\mathcal{N}_2 &=&q(q+1)\cdot\sum_{a\in S}\big|{\rm Fix}(D_{1,a}, \circ)\big|\\
&=&q(q+1)\cdot\sum_{D\mid(q-1) \atop D|r, D\neq 1}\frac{\varphi^2(D)}{r}\sum_{d|\frac{r}{D} \atop \gcd(d, D)=1}\mu(d)\big(q^{\frac{r}{Dd}}-1\big)\\
&=&q(q+1)\cdot\sum_{D\mid\gcd(r,q-1) \atop D\neq 1}\frac{\varphi^2(D)}{r}\sum_{d|\frac{r}{D} \atop \gcd(d, D)=1}\mu(d)\big(q^{\frac{r}{Dd}}-1\big)
\end{eqnarray*}
and
\begin{eqnarray*}
\mathcal{N}_3 &=&q(q-1)\cdot\sum_{i=1}^{\frac{q}{2}}\big|{\rm Fix}(V_{\xi^{(q-1)i}}, \circ)\big|\\
&=&\frac{q(q-1)}{2}\cdot\sum_{i=1}^q\big|{\rm Fix}(V_{\xi^{(q-1)i}}, \circ)\big|\\
&=&\frac{q(q-1)}{2}\cdot\sum_{D\mid(q+1) \atop D\mid r, D\neq 1}\frac{\varphi^2(D)}{r}\sum_{d|\frac{r}{D} \atop \gcd(d, D)=1}\mu(d)\big(q^{\frac{r}{Dd}}+(-1)^{\frac{r}{Dd}+1}\big)\\
&=&\frac{q(q-1)}{2}\cdot\sum_{D|\gcd(r,q+1) \atop D\neq 1}\frac{\varphi^2(D)}{r}\sum_{d|\frac{r}{D} \atop \gcd(d, D)=1}\mu(d)\big(q^{\frac{r}{Dd}}+(-1)^{\frac{r}{Dd}+1}\big).
\end{eqnarray*}
We are done.
\end{proof}

\section{The number of orbits of ${\rm Gal}$ on ${\rm PGL}\backslash\mathcal{I}_r$}
In order to get the number of orbits of ${\rm P\Gamma L}$ on $\mathcal{I}_r$, by Lemma \ref{book} and
Theorem \ref{orbitsize1}, we are left  to count the number of orbits of   ${\rm Gal}$ on ${\rm PGL}\verb|\|\mathcal{I}_r$.
Recall that   the Galois group ${\rm Gal}={\rm Gal}(\mathbb{F}_{q^r}/\mathbb{F}_2)={\rm Gal}(\mathbb{F}_{2^{rn}}/\mathbb{F}_2)=\langle\sigma \rangle$ is the cyclic group of order $rn$ generated by $\sigma.$
The action of  ${\rm Gal}$   on ${\rm PGL}\verb|\|\mathcal{I}_r$ is given by
\begin{eqnarray*}
{\rm Gal}\times {\rm PGL}\verb|\|\mathcal{I}_r &\longrightarrow & {\rm PGL}\verb|\|\mathcal{I}_r\\
\big(\sigma^i, ~{\rm PGL}(f)\big)&\mapsto & \sigma^i\big({\rm PGL}(f)\big)={\rm PGL}(\sigma^if).
\end{eqnarray*}
Recall also that $n\geq 5$ is a prime number, $q=2^n$ and  $r\geq 3$ is a positive integer relatively prime to $n$.
Thus ${\rm Gal}=\langle\sigma\rangle$ has the following decomposition into direct products:
$${\rm Gal}=\langle\sigma^r\rangle\times \langle\sigma^n\rangle.$$
In order to count the number of orbits of   ${\rm Gal}$ on ${\rm PGL}\verb|\|\mathcal{I}_r$, using Lemma \ref{book} again,  we first consider the
action of $\langle\sigma^n\rangle$ on  ${\rm PGL}\verb|\|\mathcal{I}_r$.
Clearly,  the action of $\langle\sigma^n\rangle$ on ${\rm PGL}\verb|\|\mathcal{I}_r$ is given by
\begin{eqnarray*}
\langle \sigma^n\rangle\times {\rm PGL}\verb|\|\mathcal{I}_r &\longrightarrow & {\rm PGL}\verb|\|\mathcal{I}_r\\
\big(\sigma^{ni}, ~{\rm PGL}(f)\big)&\mapsto & \sigma^{ni}\big({\rm PGL}(f)\big)={\rm PGL}(\sigma^{ni}f).
\end{eqnarray*}
Observe that
$\sigma^na=a^{2^n}=a^q=a \hbox{ for any $a\in \mathbb{F}_q$,}$
which gives
$${\rm PGL}(\sigma^{ni}f)={\rm PGL}(f)~~\hbox{for any $f\in \mathcal{I}_r$}.$$
This means that $\langle\sigma^n\rangle$ fixes each ${\rm PGL}(f)$ in ${\rm PGL}\verb|\|\mathcal{I}_r$; in other words,
the set of orbits of $\langle\sigma^n\rangle$ on ${\rm PGL}\verb|\|\mathcal{I}_r$  remains ${\rm PGL}\verb|\|\mathcal{I}_r$.
By Lemma \ref{book}, the number of orbits of ${\rm Gal}$ on ${\rm PGL}\verb|\|\mathcal{I}_r$ is equal to the
number of orbits of $\langle \sigma^r\rangle$ on ${\rm PGL}\verb|\|\mathcal{I}_r$.
Since $\langle\sigma^r\rangle$ is of prime order $n$, the size of   every orbit of $\langle \sigma^r\rangle$ on ${\rm PGL}\verb|\|\mathcal{I}_r$ is equal to $1$ or $n$.
Thus it is enough to determine the number of  orbits of $\langle \sigma^r\rangle$ on ${\rm PGL}\verb|\|\mathcal{I}_r$ with size $1$.
%Thus the above fact reduces the study of the action of the group $G$ on $\Omega$ to the study of the action of the group $\langle\sigma^r\rangle$ on $\Omega$.
%So the key point is to count the number of all distinct orbits in $\Omega$ under the action of cyclic group $\langle\sigma^r\rangle$.

%For any $g\in \langle\sigma^r\rangle$, let ${\rm Fix}(g)$ denote the set of elements of ${\rm PGL}_2(F_q)f$ fixed by $g$, i.e.,
%$${\rm Fix}(g)=\big\{{\rm PGL}_2(F_q)f\in \Omega\big|g\big({\rm PGL}_2(F_q)f\big)={\rm PGL}_2(F_q)f\big\}.$$
%Since $\langle\sigma^r\rangle$ is of prime order, the set of fixed points of every element,
%except the identity element, of $\langle\sigma^r\rangle$ on $\Omega$ is the same.
%Thus it is enough to count $|{\rm Fix}(\sigma^r)|$.
\subsection{The orbits of $\langle \sigma^r\rangle$ on {\rm PGL}$\backslash\mathcal{I}_r$ with size $1$}
In this subsection we will characterize the orbits of $\langle \sigma^r\rangle$ on {\rm PGL}$\backslash\mathcal{I}_r$ with size $1$.
First note that if $\alpha$ is a root of $f(x)\in \mathcal{I}_r$, then $\sigma\alpha$ is a root of $\sigma f(x)$ and $A\alpha$ is a root of $Af(x)$,
where $A\in {\rm PGL}$ or $A\in {\rm GL}$. Please keep these facts in mind and we shall use them frequently during the following discussions.
\begin{lem}\label{equal}
Let $f\in \mathcal{I}_r$ and let $\alpha$ be a root of $f(x)$. Define a map $\tau$ as follows:
\begin{eqnarray*}
\tau: && {\rm PGL}\longrightarrow {\rm PGL}(\alpha)\\
&& ~~~~A\mapsto A\alpha,
\end{eqnarray*}
then $\tau$ is a bijection between ${\rm PGL}$ and ${\rm PGL}(\alpha)$.
In particular,  ${\rm PGL}$ and ${\rm PGL}(\alpha)$ have the same size, i.e., $\big|{\rm PGL}\big|=\big|{\rm PGL}(\alpha)\big|$.
\end{lem}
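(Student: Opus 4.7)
The plan is to establish the bijection $\tau$ by noting that surjectivity follows from the very definition of the orbit $\mathrm{PGL}(\alpha)=\{A\alpha\mid A\in\mathrm{PGL}\}$, so the entire content of the lemma lies in showing that $\tau$ is injective, equivalently, that $\mathrm{Stab}_{\mathrm{PGL}}(\alpha)$ is trivial. Once this is done, the cardinality statement $|\mathrm{PGL}|=|\mathrm{PGL}(\alpha)|$ is immediate from the orbit--stabilizer formula $|\mathrm{PGL}(\alpha)|=[\mathrm{PGL}:\mathrm{Stab}_{\mathrm{PGL}}(\alpha)]$ recalled in Section~2.

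To compute the stabilizer, I would take an arbitrary representative $C=\begin{pmatrix}a & b\\ c & d\end{pmatrix}\in\mathrm{PGL}$ satisfying $C\alpha=\alpha$, which by the definition of the action translates into
\begin{equation*}
\frac{a\alpha+b}{c\alpha+d}=\alpha,
\end{equation*}
and hence, after clearing denominators, into the polynomial identity
\begin{equation*}
c\alpha^{2}+(d-a)\alpha-b=0.
\end{equation*}
The key observation is that $\alpha$ lies in $\mathcal{S}$ and therefore has degree exactly $r$ over $\mathbb{F}_q$, where $r\geq 3$ by our standing assumption. Consequently $\alpha$ cannot satisfy any nonzero polynomial over $\mathbb{F}_q$ of degree less than $r$.

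From here the analysis is forced: the polynomial $cx^{2}+(d-a)x-b\in\mathbb{F}_q[x]$ has degree at most $2<r$, so it must vanish identically. This yields $c=0$, $d=a$ and $b=0$, i.e.\ $C$ is a nonzero scalar multiple of the identity matrix and therefore represents the identity element of $\mathrm{PGL}$. Thus $\mathrm{Stab}_{\mathrm{PGL}}(\alpha)=\{E_{2}\}$, which simultaneously gives $|\mathrm{PGL}(\alpha)|=|\mathrm{PGL}|$ and, by interpreting the triviality of the stabilizer as injectivity of $\tau$, completes the proof that $\tau$ is a bijection.

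I do not anticipate any genuine obstacle in this argument; the only point that requires care is invoking the degree hypothesis $r\geq 3$ at exactly the right moment, so that a $2\times 2$ Möbius transformation fixing $\alpha$ is forced to be the identity rather than merely lying in some small subgroup. In particular the proof does not use the primality of $n$ or the extra coprimality conditions — it only needs $r\geq 3$, which makes this lemma a clean structural ingredient for the subsequent orbit analysis.
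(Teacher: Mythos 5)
Your proposal is correct and follows essentially the same route as the paper: the paper also reduces $A\alpha=B\alpha$ to a relation $c\alpha^{2}+(a+d)\alpha+b=0$ and uses $\deg\alpha=r\geq 3$ to force all coefficients to vanish, so the fixing matrix is a scalar multiple of $E_2$. Phrasing it as triviality of $\mathrm{Stab}_{\rm PGL}(\alpha)$ plus orbit–stabilizer is just a mild repackaging of the paper's direct injectivity argument.
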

\begin{proof}
It is clear that the map $\tau$ is well-defined and surjective.
Assume that $A\alpha=B\alpha$, where $A,B\in {\rm PGL}$. Then $A^{-1}B\alpha=\alpha$.
Let $A^{-1}B=\begin{pmatrix} a & b\\c & d\end{pmatrix}$, which leads to
$\frac{a\alpha+b}{c\alpha+d}=\alpha,$
or equivalently   $c\alpha^2+(a+d)\alpha+b=0$. Since $r\geq 3$, we obtain
$c=0, a+d=0$ and $b=0,$
yielding $A^{-1}B=aE_2=E_2$ in ${\rm PGL}$ and $A=B$.
It follows that $\tau$ is injective. Therefore $\tau$ is a bijection between ${\rm PGL}$ and ${\rm PGL}(\alpha)$,
which implies that $\big|{\rm PGL}\big|=\big|{\rm PGL}(\alpha)\big|$.
\end{proof}

The next result improves \cite[Lemma 3.6]{cz} by removing the numerical condition $\gcd(r,q^3-q)=1$, which is one of the key steps
in this paper.
\begin{lem}\label{equivalent}
Let $f\in \mathcal{I}_r$ and let $\alpha$ be a root of $f(x)$. Then
${\rm PGL}(\sigma^r f)={\rm PGL}(f)$
if and only if
${\rm PGL}(\sigma^r\alpha)={\rm PGL}(\alpha)$.
\end{lem}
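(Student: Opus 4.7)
The $(\Leftarrow)$ direction I would dispatch immediately. Suppose $B\alpha=\sigma^r\alpha$ for some $B\in\mathrm{PGL}$. The minimal polynomial of $B\alpha$ over $\mathbb{F}_q$ is $(Bf)^{*}$ (it is monic of degree $r$ and has $B\alpha$ as a root because $\alpha$ is a root of $f$), while the minimal polynomial of $\sigma^r\alpha$ is $\sigma^r f$. Two monic irreducible polynomials with a common root are equal, so $\sigma^r f=(Bf)^{*}\in\mathrm{PGL}(f)$, which is what we want.

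For $(\Rightarrow)$, I would begin by exploiting that the roots of $\sigma^r f$ are exactly the $\mathbb{F}_q$-Galois conjugates $\{(\sigma^r\alpha)^{q^i}\}_{0\le i<r}$ and the roots of $(Af)^{*}$ are exactly $\{A(\alpha^{q^i})\}_{0\le i<r}$. The hypothesis $\sigma^r f=(Af)^{*}$ thus yields some $j\in\{0,\dots,r-1\}$ with $\sigma^r\alpha=A(\alpha^{q^j})$. If $j=0$ we are finished, so the substantive case is $j\neq 0$, and it suffices to prove $\alpha^{q^j}\in\mathrm{PGL}(\alpha)$. To extract information, set $\tau=\sigma^r$; because $A$ has entries in $\mathbb{F}_q$, Frobenius $\sigma^n$ commutes with $A$ as a M\"obius map. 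A straightforward induction using $\tau(A(\beta))=A^{\tau}(\tau\beta)$ and $\sigma^{nj}(A\beta)=A(\beta^{q^j})$ gives
\[
\tau^k\alpha=A^{\tau^{k-1}}A^{\tau^{k-2}}\cdots A^{\tau}A\,(\alpha^{q^{kj}}),\qquad k\ge 1.
\]
Since $\tau=\sigma^r$ has order $n$ in $\mathrm{Gal}(\mathbb{F}_{q^r}/\mathbb{F}_2)$ (because $\gcd(r,n)=1$), setting $k=n$ gives $\tau^n\alpha=\alpha$ and $\alpha^{q^{nj}}=\alpha^{q^{nj\bmod r}}$. Therefore $\alpha=M\,\alpha^{q^{nj\bmod r}}$ for some $M\in\mathrm{PGL}$, so $\alpha^{q^{nj\bmod r}}=M^{-1}\alpha\in\mathrm{PGL}(\alpha)$.

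The final, and in my view the conceptually hardest, step is the descent from $nj\bmod r$ to $j$, and this is the place where the hypothesis $\gcd(r,n)=1$ is indispensable. I would introduce $J=\{k\in\mathbb{Z}/r\mathbb{Z}\mid \alpha^{q^k}\in\mathrm{PGL}(\alpha)\}$ and check that $J$ is an additive subgroup: if $B\alpha=\alpha^{q^k}$ and $C\alpha=\alpha^{q^\ell}$, then applying $\sigma^{n\ell}$ (which fixes the $\mathbb{F}_q$-entries of $B$) to the first equation gives $B\alpha^{q^\ell}=\alpha^{q^{k+\ell}}$, hence $BC\alpha=\alpha^{q^{k+\ell}}$. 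So $J=d\mathbb{Z}/r\mathbb{Z}$ for the unique such subgroup of index equal to $|J|$ in the cyclic group $\mathbb{Z}/r\mathbb{Z}$. Because $\gcd(n,r)=1$, multiplication by $n$ is a unit of $\mathbb{Z}/r\mathbb{Z}$ that preserves the (unique) subgroup of each order, so $nJ=J$. From $nj\bmod r\in J$ we then conclude $j\in J$, i.e., $\alpha^{q^j}\in\mathrm{PGL}(\alpha)$, and therefore $\sigma^r\alpha=A(\alpha^{q^j})\in\mathrm{PGL}(\alpha)$, as required. This is precisely the reason the stronger numerical assumption $\gcd(r,q^3-q)=1$ used in \cite{cz} can be dispensed with here.
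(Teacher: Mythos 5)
Your proof is correct, but it follows a genuinely different route from the paper's. The paper proves the hard direction by a counting argument: it forms the set $\mathcal{Q}=\{\mathrm{PGL}(\alpha^{q^i})\mid 0\le i\le r-1\}$, shows $\langle\sigma^r\rangle$ acts on $\mathcal{Q}$, proves $|\mathcal{Q}|$ divides $r$ via a cardinality bookkeeping argument that relies on Lemma \ref{equal} ($|\mathrm{PGL}(\beta)|=|\mathrm{PGL}|$), and then invokes the primality of $n$ (orbits of the order-$n$ group $\langle\sigma^r\rangle$ have size $1$ or $n$, and $n\nmid|\mathcal{Q}|$ by $\gcd(r,n)=1$) to find a fixed orbit $\mathrm{PGL}(\alpha^{q^{j_t}})$, from which it descends to $\alpha$ by injectivity of $x\mapsto x^{q^{j_t}}$. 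You instead start from the single relation $\sigma^r\alpha=A(\alpha^{q^j})$, iterate it $n$ times using the cocycle identity $\tau^k\alpha=A^{\tau^{k-1}}\cdots A^{\tau}A(\alpha^{q^{kj}})$ to get $\alpha^{q^{nj\bmod r}}\in\mathrm{PGL}(\alpha)$, and then descend from $nj$ to $j$ via the observation that $J=\{k\in\mathbb{Z}/r\mathbb{Z}\mid\alpha^{q^k}\in\mathrm{PGL}(\alpha)\}$ is a subgroup of the cyclic group $\mathbb{Z}/r\mathbb{Z}$ stable under multiplication by the unit $n$. Both arguments hinge on $\gcd(r,n)=1$, but yours buys something the paper's does not: it never uses that $n$ is prime, nor the orbit-size dichotomy, nor Lemma \ref{equal}, so it is more elementary and slightly more general in that respect; the paper's approach, on the other hand, fits the orbit-counting machinery it reuses later (e.g., in Lemma \ref{fixpoint}) and requires no explicit cocycle computation. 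One cosmetic remark: the order of $\tau=\sigma^r$ in $\mathrm{Gal}(\mathbb{F}_{q^r}/\mathbb{F}_2)$ is $n$ simply because $\sigma$ has order $rn$; the parenthetical appeal to $\gcd(r,n)=1$ there is unnecessary (that hypothesis is genuinely needed only in your final descent step), though this does not affect correctness.
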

\begin{proof}
Suppose ${\rm PGL}(\sigma^r f)={\rm PGL}(f)$ and let
$$\mathcal{Q}=\Big\{{\rm PGL}(\alpha^{q^i})\,\big|\,0\leq i\leq r-1\Big\}.$$

\textbf{Claim 1}: There is a group action $\langle \sigma^r\rangle$ on the set $\mathcal{Q}$ defined by
\begin{eqnarray*}
\langle \sigma^r\rangle \times \mathcal{Q} &\longrightarrow & \mathcal{Q}\\
\big(\sigma^{rj}, ~{\rm PGL}(\alpha^{q^i})\big)& \mapsto & \sigma^{rj}{\rm PGL}(\alpha^{q^i})={\rm PGL}(\sigma^{rj}\alpha^{q^i}).
\end{eqnarray*}
Since ${\rm PGL}(\sigma^r f)={\rm PGL}(f)$, then there exists an element $A\in {\rm PGL}$ such that
$\sigma^r f=Af$. For any fixed $0\leq j\leq r-1$, we have
\begin{eqnarray*}
\sigma^{rj}f&=&\big(\sigma^{r(j-1)}\sigma^r\big)f=\sigma^{r(j-1)}\big(\sigma^rf\big)=\sigma^{r(j-1)}\big(Af\big)\\
&=&\sigma^{r(j-1)}(A)\sigma^{r(j-1)}(f)\\
&=&\sigma^{r(j-1)}(A)\sigma^{r(j-2)}(A)\sigma^{r(j-2)}(f)\\
&=&\cdots\\
&=&\sigma^{r(j-1)}(A)\cdots\sigma^{r}(A)\sigma^{r}(f)\\
&=&\sigma^{r(j-1)}(A)\cdots\sigma^{r}(A)\sigma^{r}(A)Af\\
&\in&{\rm PGL}(f).
\end{eqnarray*}
Hence
$${\rm PGL}(\sigma^{rj}f)={\rm PGL}(f),$$
yielding that there exists  $B\in {\rm PGL}$ such that $B(\sigma^{rj}f)=f$.
Consequently,  $B(\sigma^{rj}\alpha^{q^i})$ is a root of $f$, giving
$B(\sigma^{rj}\alpha^{q^i})=\alpha^{q^s},$
where $0\leq s\leq r-1$. Therefore ${\rm PGL}(\sigma^{rj}\alpha^{q^i})\in \mathcal{Q}$.

On the other hand, it is easy to see that for $0\leq i\leq r-1$, we have
$\sigma^{0}{\rm PGL}(\alpha^{q^i})={\rm PGL}(\sigma^{0}\alpha^{q^i})={\rm PGL}(\alpha^{q^i})$
and
\begin{eqnarray*}
&&(\sigma^{rj_1}\sigma^{rj_2}){\rm PGL}(\alpha^{q^i})\\
&=&{\rm PGL}\big((\sigma^{rj_1}\sigma^{rj_2})\alpha^{q^i}\big)={\rm PGL}\big(\sigma^{rj_1}(\sigma^{rj_2}\alpha^{q^i})\big)\\
&=&\sigma^{rj_1}{\rm PGL}(\sigma^{rj_2}\alpha^{q^i})=\sigma^{rj_1}\big(\sigma^{rj_2}{\rm PGL}(\alpha^{q^i})\big),
\end{eqnarray*}
where $j_1, j_2$ are two positive integers satisfying $0\leq j_1, j_2 \leq n-1$.
Claim $1$ is thus proved.

\textbf{Claim 2}: $\big| \mathcal{Q}\big|$ is a divisor of $r$, i.e., $\big| \mathcal{Q}\big|\mid r$.
To this end,
let $$\overline{\mathcal{Q}}=\Big\{A(\alpha^{q^i})\,\big|\,A\in {\rm PGL}, ~0\leq i\leq r-1\Big\}.$$
Then $\overline{\mathcal{Q}}$ is the set of all roots of the polynomials in ${\rm PGL}(f)$,  so
$$\big|\overline{\mathcal{Q}}\big|=r\cdot \big|{\rm PGL}(f)\big|=r\cdot\big[{\rm PGL}: {\rm Stab}_{\rm PGL}(f)\big].$$
In addition,   the set $\mathcal{Q}$ can be rewritten as
$$\mathcal{Q}=\big\{{\rm PGL}(\alpha^{q^{j_1}}),{\rm PGL}(\alpha^{q^{j_2}}),\cdots,{\rm PGL}(\alpha^{q^{j_{|\mathcal{Q}|}}})\big\}.$$
Using Lemma \ref{equal},  we have
\begin{eqnarray*}
\big|\overline{\mathcal{Q}}\big|&=& \big|{\rm PGL}(\alpha^{q^{j_1}})\big|+\big|{\rm PGL}(\alpha^{q^{j_2}})\big|+\cdots+
\big|{\rm PGL}(\alpha^{q^{j_{|\mathcal{Q}|}}})\big|\\
&=&\underbrace{\big|{\rm PGL}\big|+\big|{\rm PGL}\big|+\cdots+
\big|{\rm PGL}\big|}_{|\mathcal{Q}|~ \mbox{times}}\\
&=&\big|{\rm PGL}\big|\cdot \big|\mathcal{Q}\big|.
\end{eqnarray*}
Thus,
$$\big|\overline{\mathcal{Q}}\big|=r\cdot\big[{\rm PGL}: {\rm Stab}_{\rm PGL}(f)\big]=\big|{\rm PGL}\big|\cdot \big|\mathcal{Q}\big|.$$
That is to say,
$$r=\big|\mathcal{Q}\big|\cdot\frac{\big|{\rm PGL}\big|}{\big[{\rm PGL}: {\rm Stab}_{\rm PGL}(f)\big]}
=\big|\overline{\mathcal{Q}}\big| \big[{\rm Stab}_{\rm PGL}(f)\big],$$
which shows
$\big| \mathcal{Q}\big|\mid r.$
The proof of Claim $2$ is completed.

According to the two claims above, there is a group action $\langle \sigma^r\rangle$
on the set $\mathcal{Q}$ and the size of $\mathcal{Q}$ is a divisor of $r$.
Since $\langle\sigma^r\rangle$ is of prime order $n$, the size of every orbit of $\langle\sigma^r\rangle$ on $\mathcal{Q}$ is equal to
$1$ or $n$. From $\gcd(r,n)=1$ and $\big| \mathcal{Q}\big|\mid r$ we obtain
$\gcd\big(|\mathcal{Q}|, n\big)=1,$
i.e.,
$n \nmid |\mathcal{Q}|.$
It follows that there exists an orbit of $\langle\sigma^r\rangle$ on $\mathcal{Q}$
with size $1$. Suppose that this orbit with size $1$ is ${\rm PGL}(\alpha^{q^{j_t}})$ with $0\leq t \leq r-1$.
Then
$$\sigma^r{\rm PGL}(\alpha^{q^{j_t}})={\rm PGL}(\sigma^r\alpha^{q^{j_t}})={\rm PGL}(\alpha^{q^{j_t}}).$$
Hence, there exists  $D\in {\rm PGL}$ satisfying
$D(\sigma^r\alpha^{q^{j_t}})=\alpha^{q^{j_t}},$
implying
$(D\sigma^r\alpha)^{q^{j_t}}=\alpha^{q^{j_t}}.$
Therefore
$D\sigma^r\alpha=\alpha,$
which implies that
${\rm PGL}(\sigma^r\alpha)={\rm PGL}(\alpha).$

Conversely, suppose that ${\rm PGL}(\sigma^r\alpha)={\rm PGL}(\alpha)$.
Then there is a matrix $A\in {\rm PGL}$ such that $A(\sigma^r\alpha)=\alpha$.
Note that $A(\sigma^r\alpha)$ is a root of $A(\sigma^rf)$, and then we obtain $A(\sigma^r f)=f$,
which implies that ${\rm PGL}(\sigma^r f)={\rm PGL}(f).$
We are done.
\end{proof}
%We have shown that  is fixed by $\langle \sigma^r\rangle$ if and only if ${\rm PGL}(\alpha)$ is fixed by $\langle\sigma^r\rangle$,
%where $\alpha$ is a root of $f(x)$.
To count the number of
${\rm PGL}(f)\in {\rm PGL}\verb|\|\mathcal{I}_r$ that are fixed by $\langle \sigma^r\rangle$,  we need to use the
affine general linear group ${\rm AGL}$. The affine general linear group ${\rm AGL}$ can be viewed naturally as a subgroup of ${\rm PGL}$.
Hence, the group ${\rm AGL}$ acts on the set $\mathcal{S}$ naturally.
Let
$$
{\rm AGL}\verb|\|\mathcal{S}=\big\{{\rm AGL}(\alpha)\,\big|\,\alpha\in \mathcal{S}\big\}
$$
be the set of all orbits of ${\rm AGL}$ on $\mathcal{S}$. Then the cyclic group $\langle \sigma^r\rangle$ acts on
${\rm AGL}\verb|\|\mathcal{S}$ in the following way:
\begin{equation}\label{AGL-action}
\langle \sigma^r\rangle\times {\rm AGL}\verb|\|\mathcal{S}  \longrightarrow  {\rm AGL}\verb|\|\mathcal{S},~~~~
\\
\big(\sigma^{ri}, ~{\rm AGL}(\alpha)\big)\mapsto  \sigma^{ri}\big({\rm AGL}(\alpha)\big)={\rm AGL}(\sigma^{ri}\alpha).
\end{equation}
It is not hard to verify that this is indeed a group action.
We now turn to consider the orbit ${\rm PGL}(\alpha)$ where $\alpha\in \mathcal{S}$.
There is an action of ${\rm AGL}$ on ${\rm PGL}(\alpha)$:
\begin{eqnarray*}
{\rm AGL}\times {\rm PGL}(\alpha) &\longrightarrow & {\rm PGL}(\alpha)\\
(C,  ~A\alpha)&\mapsto & CA\alpha.
\end{eqnarray*}
Therefore, ${\rm PGL}(\alpha)$ is the disjoint union of ${\rm AGL}$-orbits. Indeed, one can easily check that there are exactly
$q+1$ right cosets of ${\rm AGL}$ in ${\rm PGL}$ and
$$
t_0=E_2=\left(
                         \begin{array}{cc}
                           1 &0 \\
                           0 & 1 \\
                         \end{array}
                       \right),~~ t_1=\left(
                         \begin{array}{cc}
                           0 & 1 \\
                           1 & 0 \\
                         \end{array}
                       \right)
\hbox{~~and~~}
t_\gamma=\left(
                         \begin{array}{cc}
                           0 & 1 \\
                           1 & \gamma \\
                         \end{array}
                      \right)    \hbox{~~for any $\gamma\in \mathbb{F}_q^*$}
$$
consists of a right coset representative of ${\rm AGL}$ in ${\rm PGL}$.
The coset decomposition
$${\rm PGL}={\rm AGL}t_0\bigcup{\rm AGL}t_1\bigcup_{\gamma\in \mathbb{F}_q^*}{\rm AGL}t_\gamma$$
gives rise to the orbit decomposition of ${\rm PGL}(\alpha)$ into ${\rm AGL}$-orbits
$${\rm PGL}(\alpha)={\rm AGL}(t_0\alpha)\bigcup{\rm AGL}(t_1\alpha)\bigcup_{\gamma\in \mathbb{F}_q^*}{\rm AGL}(t_\gamma\alpha).$$
We have arrived at the following result (which has been appeared previously in \cite{cz}).
\begin{lem}\label{partition}
Let $\alpha\in \mathcal{S}$. Then
$${\rm PGL}(\alpha)=\bigcup_{\gamma\in \mathbb{F}_q}{\rm AGL}\Big(\frac{1}{\alpha+\gamma}\Big)\bigcup{\rm AGL}(\alpha),$$
is a partition of ${\rm PGL}(\alpha)$ into ${\rm AGL}$-orbits.
\end{lem}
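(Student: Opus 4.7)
The plan is to derive the lemma as an essentially direct consequence of the right-coset decomposition of ${\rm AGL}$ in ${\rm PGL}$ displayed immediately before the statement. The paragraph just above the lemma has already argued that this coset decomposition yields
$${\rm PGL}(\alpha)={\rm AGL}(t_0\alpha)\cup{\rm AGL}(t_1\alpha)\cup\bigcup_{\gamma\in \mathbb{F}_q^*}{\rm AGL}(t_\gamma\alpha),$$
so only two tasks remain: (i) compute each $t_i\alpha$ explicitly and rewrite the union in the compact form stated, and (ii) verify that this is genuinely a partition and not merely a covering by ${\rm AGL}$-orbits.

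For (i), applying the defining formula $A\alpha=(a\alpha+b)/(c\alpha+d)$ to each coset representative one immediately reads off
$$t_0\alpha=\alpha,\qquad t_1\alpha=\frac{1}{\alpha},\qquad t_\gamma\alpha=\frac{1}{\alpha+\gamma}\text{ for }\gamma\in\mathbb{F}_q^*.$$
Since $1/(\alpha+0)=1/\alpha=t_1\alpha$, the $t_1$-term coincides with the $\gamma=0$ slot of $\bigcup_{\gamma\in \mathbb{F}_q}{\rm AGL}(1/(\alpha+\gamma))$, and merging the two produces exactly the union written in the lemma.

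For (ii), I would show that the $q+1$ listed ${\rm AGL}$-orbits are pairwise disjoint. Two ${\rm AGL}$-orbits are automatically either equal or disjoint, so it suffices to prove ${\rm AGL}(t_i\alpha)={\rm AGL}(t_j\alpha)$ forces $i=j$. Equality means there exists $C\in {\rm AGL}$ with $Ct_i\alpha=t_j\alpha$, equivalently $t_j^{-1}Ct_i\in {\rm Stab}_{{\rm PGL}}(\alpha)$. By Lemma \ref{equal} the map $A\mapsto A\alpha$ is a bijection, so ${\rm Stab}_{{\rm PGL}}(\alpha)=\{E_2\}$, which forces $C=t_jt_i^{-1}$. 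Thus $t_jt_i^{-1}\in {\rm AGL}$, meaning $t_i$ and $t_j$ lie in the same right coset of ${\rm AGL}$ in ${\rm PGL}$; since the preamble has exhibited them as a transversal, this forces $i=j$. The only step with mathematical content is the triviality of ${\rm Stab}_{{\rm PGL}}(\alpha)$, which is already supplied by Lemma \ref{equal}, so there is no real obstacle beyond the bookkeeping of cosets and the one-line computation of $t_i\alpha$.
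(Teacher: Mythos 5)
Your proposal is correct and follows essentially the same route as the paper: the paper's entire argument is the preamble you cite (right-coset decomposition of ${\rm AGL}$ in ${\rm PGL}$, pushed forward to the orbit decomposition of ${\rm PGL}(\alpha)$, with the lemma itself attributed to the earlier work \cite{cz}). Your only addition is the explicit verification that the $q+1$ listed ${\rm AGL}$-orbits are pairwise distinct via the triviality of ${\rm Stab}_{\rm PGL}(\alpha)$ from Lemma \ref{equal}, a detail the paper leaves implicit, and that verification is correct.
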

Lemma \ref{partition} implies that
\begin{equation*}
\begin{split}
{\rm PGL}\big(\sigma^r(\alpha)\big)&=\bigcup_{\gamma\in
\mathbb{F}_q}{\rm AGL}\Big(\frac{1}{\sigma^r(\alpha)+\gamma}\Big)\bigcup{\rm AGL}\big(\sigma^r(\alpha)\big)\\
&=\bigcup_{\gamma\in
\mathbb{F}_q}{\rm AGL}\Big(\frac{1}{\sigma^r(\alpha)+\sigma^r(\gamma)}\Big)\bigcup{\rm AGL}\big(\sigma^r(\alpha)\big)\\
&=\bigcup_{\gamma\in
\mathbb{F}_q}{\rm AGL}\Big(\sigma^r\Big(\frac{1}{\alpha+\gamma}\Big)\Big)\bigcup{\rm AGL}\big(\sigma^r(\alpha)\big).\\
\end{split}
\end{equation*}
Suppose now that ${\rm PGL}(\alpha)$ is fixed by the cyclic group $\langle\sigma^r\rangle$,
i.e., ${\rm PGL}(\sigma^r\alpha)={\rm PGL}(\alpha)$.
In this case, the cyclic group $\langle \sigma^r\rangle$ acts on the set of ${\rm AGL}$-orbits
$${\rm AGL}\verb|\|{\rm PGL}(\alpha)=\Big\{{\rm AGL}(\alpha), {\rm AGL}\Big(\frac{1}{\alpha+\gamma}\Big)\,\Big|\,\gamma\in \mathbb{F}_q\Big\}$$
in the way given in (\ref{AGL-action}).

\begin{lem}\label{fixpoint}
{\rm(\cite[Lemma 3.8.]{cz})} Let $n\geq5$ be a prime number. If ${\rm PGL}(\sigma^r\alpha)={\rm PGL}(\alpha)$,
then there exists a fixed point of $\langle\sigma^r\rangle$  on ${\rm AGL}\verb|\|{\rm PGL}(\alpha)$. In other words,
either ${\rm AGL}(\sigma^r\alpha)={\rm AGL}(\alpha)$ or
${\rm AGL}\big(\sigma^r(\frac{1}{\alpha+\gamma})\big)={\rm AGL}\big(\frac{1}{\alpha+\gamma}\big)$
for some $\gamma\in \mathbb{F}_q$.
\end{lem}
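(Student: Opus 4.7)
The plan is to apply an orbit-counting argument to the action of the prime-order cyclic group $\langle\sigma^r\rangle$ on the finite set ${\rm AGL}\verb|\|{\rm PGL}(\alpha)$ and then invoke Fermat's little theorem to exhibit a fixed point.

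First I would record that ${\rm AGL}\verb|\|{\rm PGL}(\alpha)$ has exactly $q+1$ elements, which is immediate from Lemma~\ref{partition}. Next I would verify that $\langle\sigma^r\rangle$ genuinely acts on this set. Since $\sigma^r$ preserves the subfield $\mathbb{F}_q\subseteq\mathbb{F}_{q^r}$ and hence the subgroup ${\rm AGL}_2(\mathbb{F}_q)\subseteq{\rm PGL}$, the action~(\ref{AGL-action}) is well-defined on ${\rm AGL}\verb|\|\mathcal{S}$. Moreover, for any $P\in{\rm PGL}$ one has $\sigma^r(P\alpha)=\sigma^r(P)\sigma^r(\alpha)$ with $\sigma^r(P)\in{\rm PGL}$, so the hypothesis ${\rm PGL}(\sigma^r\alpha)={\rm PGL}(\alpha)$ keeps $\sigma^r(P\alpha)$ inside ${\rm PGL}(\alpha)$. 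Thus $\langle\sigma^r\rangle$ permutes the $q+1$ ${\rm AGL}$-orbits that partition ${\rm PGL}(\alpha)$.

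Because $\langle\sigma^r\rangle$ has prime order $n$, every orbit on this $(q+1)$-element set has size $1$ or $n$. Writing $F$ for the number of fixed orbits gives $F+nM=q+1$ for some $M\geq 0$, i.e., $F\equiv q+1\pmod{n}$. Applying Fermat's little theorem, $q=2^n\equiv 2\pmod n$, whence $q+1\equiv 3\pmod n$. Since $n\geq 5>3$, this forces $F\equiv 3\not\equiv 0\pmod n$, so in particular $F\geq 1$. Any such fixed orbit, coming from the list in Lemma~\ref{partition}, yields the stated dichotomy: either ${\rm AGL}(\sigma^r\alpha)={\rm AGL}(\alpha)$ or ${\rm AGL}(\sigma^r(1/(\alpha+\gamma)))={\rm AGL}(1/(\alpha+\gamma))$ for some $\gamma\in\mathbb{F}_q$.

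I expect the only delicate point to be the verification that $\langle\sigma^r\rangle$ really acts on ${\rm AGL}\verb|\|{\rm PGL}(\alpha)$, rather than merely on the larger set ${\rm AGL}\verb|\|\mathcal{S}$; this is where the hypothesis ${\rm PGL}(\sigma^r\alpha)={\rm PGL}(\alpha)$ enters. Once that is in place, the remainder collapses to a single congruence computation modulo $n$. Note that the prime-order hypothesis $n\geq 5$ is genuinely essential here: for $n=3$ one has $n\mid q+1=9$ and the argument fails, consistent with the fact that the whole program requires $n\geq 5$.
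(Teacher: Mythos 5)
Your proof is correct and follows essentially the same route as the proof this paper cites from \cite{cz}: the set ${\rm AGL}\backslash{\rm PGL}(\alpha)$ has exactly $q+1$ elements by Lemma \ref{partition}, the prime-order group $\langle\sigma^r\rangle$ acts on it (using ${\rm PGL}(\sigma^r\alpha)={\rm PGL}(\alpha)$ and $\sigma^r(P\alpha)=\sigma^r(P)\sigma^r(\alpha)$ with $\sigma^r(P)\in{\rm PGL}$), and Fermat's little theorem gives $q+1=2^n+1\equiv 3\pmod{n}$, so $n\nmid q+1$ and a fixed orbit must exist since $n\geq5$. Your closing remark about $n=3$ correctly identifies why the hypothesis $n\geq5$ is needed.
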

%\begin{proof}
%Note that the following properties hold:
%(1) $|{\rm AGL}\verb|\|{\rm PGL}(\alpha)|=q+1$; (2) $n$ does not divide $q+1$;
%(3) the size of each orbit of $\langle \sigma^r\rangle$ on ${\rm AGL}\verb|\|{\rm PGL}(\alpha)$ is either $1$ or $n$.
%We get the required result.
%\end{proof}

By Lemma \ref{equivalent}, we derive the next result which   is crucial to our enumeration.
\begin{lem}\label{nexttolastlem}
Let $f\in \mathcal{I}_r$.
Then
${\rm PGL}(\sigma^rf)={\rm PGL}(f)$
if and only if there is a polynomial $g(x)\in {\rm PGL}(f)$ such that
$g(x)$ divides $x^{2^r}+x.$
\end{lem}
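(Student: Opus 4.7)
The plan is to use Lemma \ref{equivalent} to translate the claim about polynomial orbits into a claim about elements of $\mathcal{S}$, then settle the nontrivial direction by a short counting argument that exploits the fact that $\langle\sigma^r\rangle$ has prime order $n$. Fix a root $\alpha\in\mathbb{F}_{q^r}$ of $f$. By Lemma \ref{equivalent}, the condition ${\rm PGL}(\sigma^r f)={\rm PGL}(f)$ is equivalent to ${\rm PGL}(\sigma^r\alpha)={\rm PGL}(\alpha)$. Moreover, an element $g\in\mathcal{I}_r$ divides $x^{2^r}+x$ iff all its roots lie in $\mathbb{F}_{2^r}$, iff any one of its roots $\beta$ satisfies $\sigma^r\beta=\beta$. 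So the statement reduces to: the ${\rm PGL}$-orbit of $\alpha$ is $\sigma^r$-stable if and only if this orbit contains an element of $\mathbb{F}_{2^r}$.

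For the ``if'' direction I would argue as follows. Suppose $\beta=B\alpha\in {\rm PGL}(\alpha)$ with $\sigma^r\beta=\beta$. Applying $\sigma^r$ entrywise to $\beta=B\alpha$ yields $B\alpha=\sigma^r(B)\sigma^r(\alpha)$, whence $\sigma^r\alpha=\sigma^r(B)^{-1}B\alpha\in{\rm PGL}(\alpha)$, giving ${\rm PGL}(\sigma^r\alpha)={\rm PGL}(\alpha)$. To produce the required polynomial, take $g=(Bf)^*$; since $B\alpha$ is a root of $Bf$ that lies in $\mathbb{F}_{2^r}$, we get $g\in{\rm PGL}(f)$ with $g\mid x^{2^r}+x$.

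For the ``only if'' direction, assume ${\rm PGL}(\sigma^r\alpha)={\rm PGL}(\alpha)$, so that $\langle\sigma^r\rangle$ acts on the finite set ${\rm PGL}(\alpha)$. Since $\sigma$ has order $rn$ in ${\rm Gal}$ and $\gcd(r,n)=1$, $\langle\sigma^r\rangle$ is cyclic of prime order $n$. Every orbit of this action therefore has size $1$ or $n$, so the number of $\sigma^r$-fixed points in ${\rm PGL}(\alpha)$ is congruent to $|{\rm PGL}(\alpha)|$ modulo $n$. By Lemma \ref{equal}, $|{\rm PGL}(\alpha)|=|{\rm PGL}|=q(q-1)(q+1)$, and the standing hypothesis that $n\geq 5$ is prime ensures $n\nmid q(q-1)(q+1)$: indeed $n$ is odd so $n\nmid q=2^n$; Fermat gives $q-1=2^n-1\equiv 1\pmod n$; and $q+1\equiv 3\pmod n$ with $n\geq 5$ forces $n\nmid q+1$. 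Hence there exists a fixed point $\beta\in{\rm PGL}(\alpha)\cap\mathbb{F}_{2^r}$, and writing $\beta=B\alpha$ as before, its (monic) minimal polynomial $(Bf)^*\in\mathcal{I}_r$ lies in ${\rm PGL}(f)$ and divides $x^{2^r}+x$.

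The main subtlety is precisely this numerical check $n\nmid|{\rm PGL}|$, which is where the hypothesis $n\geq 5$ is used (for $n=3$ one has $3\mid 2^3+1$, breaking the argument); beyond this, the proof is essentially formal given Lemmas \ref{equivalent} and \ref{equal}, and in fact the counting argument above is strong enough that it seems to bypass the weaker orbit-level fixed-point statement of Lemma \ref{fixpoint} entirely.
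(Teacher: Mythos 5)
Your proof is correct, but its decisive step is genuinely different from the route the paper takes. The paper proves this lemma by citing the argument of \cite[Lemma 3.9]{cz}, which is exactly why Lemmas \ref{partition} and \ref{fixpoint} are set up immediately before it: after the reduction to ${\rm PGL}(\sigma^r\alpha)={\rm PGL}(\alpha)$ via Lemma \ref{equivalent} (the new ingredient of this paper, and the only one you both need), that route works at the level of the $q+1$ ${\rm AGL}$-orbits partitioning ${\rm PGL}(\alpha)$, first producing a $\sigma^r$-stable ${\rm AGL}$-orbit (Lemma \ref{fixpoint}, where $n\geq5$ enters because $q+1\equiv3\pmod n$) and then extracting a $\sigma^r$-fixed element from that orbit, whose size $|{\rm AGL}|=q(q-1)\equiv2\pmod n$ is prime to $n$ for every odd prime $n$. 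You instead run a single counting argument directly on ${\rm PGL}(\alpha)$, using $|{\rm PGL}(\alpha)|=|{\rm PGL}|=q(q^2-1)\equiv 6\pmod n$ from Lemma \ref{equal} together with $n\geq5$; this is sound, since the $\langle\sigma^r\rangle$-action on ${\rm PGL}(\alpha)$ is well defined precisely under the stability hypothesis, the fixed point $\beta=B\alpha\in\mathbb{F}_{2^r}$ gives $(Bf)^*\in{\rm PGL}(f)$ dividing $x^{2^r}+x$, and the converse is the easy computation you give. Your version is leaner and makes the role of the hypothesis $n\geq5$ completely transparent ($n\nmid 6$), and, as you observe, it makes Lemmas \ref{partition} and \ref{fixpoint} superfluous for this lemma; the paper's version buys the ability to reuse the proof of \cite{cz} essentially verbatim once Lemma \ref{equivalent} has been upgraded, with the divisibility requirement split so that only the coarse count over the $q+1$ orbits needs $n\geq5$. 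One small point you use tacitly and should state when reducing the polynomial condition to the element condition: every $g\in{\rm PGL}(f)$ has the form $(Bf)^*$ with $B\in{\rm PGL}$, and $B\alpha$ is a root of it, so $g\mid x^{2^r}+x$ really does force ${\rm PGL}(\alpha)\cap\mathbb{F}_{2^r}\neq\emptyset$.
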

\begin{proof}
The proof is essentially the same as that given in \cite[Lemma 3.9]{cz}, since we have established
Lemma \ref{equivalent}.
\end{proof}

\subsection{The number of orbits of $\langle \sigma^r\rangle$ on {\rm PGL}$\backslash\mathcal{I}_r$ with size $1$}
Now we are in a position  to determine the number of orbits of $\langle \sigma^r\rangle$ on ${\rm PGL}\verb|\|\mathcal{I}_r$ with size $1$.
For convenience, we adopt the following notation throughout this subsection.
\begin{itemize}
\item[] $A_1=\begin{pmatrix} 1 & 0 \\0 & 1\end{pmatrix}=E_2,~
A_2=\begin{pmatrix} 0 & 1 \\1 & 0\end{pmatrix},~
A_3=\begin{pmatrix} 1 & 1 \\0 & 1\end{pmatrix},~A_4=\begin{pmatrix} 1 & 0 \\1 & 1\end{pmatrix},~
A_5=\begin{pmatrix} 1 & 1 \\1 & 0\end{pmatrix},~
A_6=\begin{pmatrix} 0 & 1 \\1 & 1\end{pmatrix}.$

\item[] $G=\Big\{A_1, A_2, A_3, A_4, A_5, A_6\Big\}.$

\item[] $\mathcal{X}=\Big\{f(x)\in \mathcal{I}_r\,\Big|\,\hbox{$f(x)$ divides $x^{2^r}+x$}\Big\}.$
\item[] $\Delta_i=\Big\{h(x)\in \mathcal{X}\,\big|\,A_ih(x)=h(x)\Big\} ~\hbox{for}~ 2\leq i\leq 6.$
\item[] $\Delta_7=\Big\{h(x)\in \mathcal{X}\,\big|\,A_ih(x)\neq h(x), ~\hbox{for any}~ 2\leq i\leq 6\Big\}.$
\item[] $\mathcal{G}_f=\Big\{A_1f=f, A_2f, A_3f, A_4f, A_5f, A_6f\Big\}, ~f\in \mathcal{X}.$
\end{itemize}
%\big{|}\mathcal{X}\big|&=&
%\sum_{e\in {\rm E}(r,q)}\frac{\varphi(e)}{r}%=\frac{1}{r}\sum_{e\in {\rm E}(r,q)}\varphi(e)
%=\frac{1}{r}\sum_{d|r}\big(2^{\frac{r}{d}}-1\big)\mu(d),
%where $\varphi$ is the Euler's function and $\mu$ is the M\"{o}bius function.

The following  result reveals that if ${\rm PGL}(f)$ contains a polynomial that divides $x^{2^r}+x$, then
${\rm PGL}(f)$ contains  $2,3$ or $6$ such polynomials.

\begin{lem}\label{lastlem}
Suppose that $f(x)\in \mathcal{I}_r$ such that $f(x)$ divides $x^{2^r}+x$. Then
$$\Big\{h(x)\,\Big|\,h(x)\in {\rm PGL}(f), ~\hbox{$h(x)$ divides $x^{2^r}+x$}\Big\}=\mathcal{G}_f;$$
in particular,
$$\Big|\Big\{h(x)\,\Big|\,h(x)\in {\rm PGL}(f), ~\hbox{$h(x)$ divides $x^{2^r}+x$}\Big\}\Big|=\big|\mathcal{G}_f\big|.$$
\end{lem}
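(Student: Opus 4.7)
The strategy is to identify the set of matrices $A \in {\rm PGL}$ for which $Af$ divides $x^{2^r}+x$ (given that $f$ already does) and show this set is exactly the $6$-element subgroup ${\rm PGL}_2(\mathbb{F}_2)=G$. The containment $\mathcal{G}_f \subseteq \{h \in {\rm PGL}(f) : h \text{ divides } x^{2^r}+x\}$ is immediate: since $f$ divides $x^{2^r}+x$, each root $\alpha$ of $f$ lies in $\mathbb{F}_{2^r}$; for any $A_i \in G$ the entries of $A_i$ lie in $\mathbb{F}_2 \subseteq \mathbb{F}_{2^r}$, so $A_i\alpha = (a\alpha+b)/(c\alpha+d)\in \mathbb{F}_{2^r}$, and hence $A_i f$, whose roots are precisely the Galois conjugates of $A_i\alpha$, divides $x^{2^r}+x$.

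For the reverse containment, let $h = Af \in {\rm PGL}(f)$ divide $x^{2^r}+x$ with $A = \begin{pmatrix} a & b \\ c & d \end{pmatrix} \in {\rm PGL}$. Pick any root $\alpha$ of $f$; then $A\alpha$ is a root of $h$ and lies in $\mathbb{F}_{2^r}$. Applying $\sigma^r$ and using $\sigma^r(\alpha)=\alpha$ yields
\[
A\alpha = \sigma^r(A\alpha) = \sigma^r(A)\sigma^r(\alpha) = \sigma^r(A)\alpha,
\]
so the M\"obius transformation $A^{-1}\sigma^r(A)$ fixes $\alpha$. The same argument applied to each of the $r$ Galois conjugates $\alpha^{q^j}$ (all of which lie in $\mathbb{F}_{2^r}$) shows that $A^{-1}\sigma^r(A)$ fixes at least $r\geq 3$ points in $\mathbb{F}_{q^r}\cup\{\infty\}$. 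Since any non-identity element of ${\rm PGL}$ has at most two fixed points, we conclude $A^{-1}\sigma^r(A) = E_2$ in ${\rm PGL}$, equivalently $\sigma^r(A) = \lambda A$ in ${\rm GL}$ for some $\lambda \in \mathbb{F}_q^*$.

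To finish, choose any nonzero entry $x_0$ of $A$; rescaling $A$ by $x_0^{-1}$ (which preserves its class in ${\rm PGL}$) makes that entry equal to $1$, and the relation $\sigma^r(A) = \lambda A$ at this entry forces $\lambda = 1$. Every entry $y$ of $A$ then satisfies $y^{2^r} = y$, so $y \in \mathbb{F}_{2^r}\cap\mathbb{F}_q = \mathbb{F}_{2^{\gcd(r,n)}} = \mathbb{F}_2$, using $\gcd(r,n)=1$. Thus $A \in {\rm PGL}_2(\mathbb{F}_2)$, which has exactly six elements, namely $A_1,\ldots,A_6=G$. Hence $h = A_i f \in \mathcal{G}_f$, establishing the set equality; the cardinality statement is immediate. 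The main obstacle is the Frobenius descent step: the hypothesis $r \geq 3$ is essential so that $A^{-1}\sigma^r(A)$ is forced to be the identity in ${\rm PGL}$ rather than a nontrivial fractional linear transformation with two fixed points, while the coprimality $\gcd(r,n)=1$ is what collapses the intersection $\mathbb{F}_{2^r}\cap\mathbb{F}_q$ to $\mathbb{F}_2$ and thereby pins $A$ down to the subgroup $G$.
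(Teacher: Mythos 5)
Your proof is correct, and it reaches the conclusion by a somewhat different route than the paper. The paper fixes a root $\alpha$ with $\alpha^{2^r}=\alpha$, rewrites the condition $(A\alpha)^{2^r}+A\alpha=0$ as a polynomial relation of degree at most $2$ in $\alpha$ over $\mathbb{F}_q$, concludes that the three coefficients $ca^{2^r}+ac^{2^r}$, $da^{2^r}+bc^{2^r}+ad^{2^r}+cb^{2^r}$, $bd^{2^r}+db^{2^r}$ all vanish (using $\deg\alpha=r\geq 3$), and then carries out an explicit case analysis on $a,b,c,d$ to show that, up to a scalar, $A$ is one of the six binary matrices $A_1,\dots,A_6$; in that case analysis the hypothesis $\gcd(r,n)=1$ is used implicitly, e.g.\ deducing $a=d$ from $da^{2^r}=ad^{2^r}$ amounts to $a/d\in\mathbb{F}_{2^r}\cap\mathbb{F}_q=\mathbb{F}_2$. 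You instead argue structurally: from $\sigma^r(\alpha)=\alpha$ and $\sigma^r(A\alpha)=A\alpha$ you get that $A^{-1}\sigma^r(A)\in{\rm PGL}$ fixes the $r\geq 3$ distinct roots of $f$, hence is trivial in ${\rm PGL}$ (a non-identity M\"obius map has at most two fixed points; fixing a single element of degree $\geq 3$ would already suffice, as in Lemma \ref{equal}), so a suitably normalized representative satisfies $\sigma^r(A)=A$ entrywise and its entries lie in $\mathbb{F}_{2^r}\cap\mathbb{F}_q=\mathbb{F}_2$, i.e.\ $A\in{\rm PGL}_2(\mathbb{F}_2)=G$; you also verify the easy containment $\mathcal{G}_f\subseteq\Delta$ separately. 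The two arguments rest on the same two facts ($r\geq3$ kills low-degree relations; $\gcd(r,n)=1$ collapses $\mathbb{F}_{2^r}\cap\mathbb{F}_q$ to $\mathbb{F}_2$), but your Galois-descent formulation avoids the coefficient-by-coefficient case analysis and makes the role of $\gcd(r,n)=1$ explicit, while the paper's computation is more elementary and self-contained. One cosmetic point: after rescaling a representative of $A$ so that a chosen entry equals $1$, the scalar in $\sigma^r(A)=\lambda A$ is a priori a new scalar $\lambda'$ attached to the new representative; the argument is unaffected, since triviality of $A^{-1}\sigma^r(A)$ in ${\rm PGL}$ gives such a relation for every representative, and the entry equal to $1$ then forces $\lambda'=1$.
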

\begin{proof}
For simplifying notation, let  $\Delta=\big\{h(x)\,\big|\,h(x)\in {\rm PGL}(f), ~\hbox{$h(x)$ divides $x^{2^r}+x$}\big\}$.
Let $\alpha$ be   a root of $f(x)$, which gives $\alpha^{2^r}=\alpha$ since $f(x)$ divides $x^{2^r}+x$.
Observe that
\begin{equation*}
\begin{split}
\Delta=&\Big\{h(x)\,\Big|\,h(x)\in {\rm PGL}(f), ~\hbox{$h(x)$ divides $x^{2^r}+x$}\Big\}\\
=&\Big\{Af(x)\,\Big|\,   \hbox{$Af(x)$ divides $x^{2^r}+x$}\Big\}\\
%=&\Big|\Big\{A\in {\rm PGL}\,\Big|\,   \hbox{$Af(x)$ divides $x^{2^r}+x$}\Big\}\Big|\\
=&\Big\{Af(x)\Big|\,   (A\alpha)^{2^r}+A\alpha=0\Big\}.
\end{split}
\end{equation*}
%\begin{eqnarray*}
%{\rm PGL}(h)={\rm PGL}(f)
%&\Leftrightarrow & h(x)=Ag(x), \exists~ A\in {\rm PGL}_2(F_q)\\
%&\Leftrightarrow & (A\alpha)^{2^r}+A\alpha=0,\\
%\end{eqnarray*}
%and we obtain that
%$$\big|\Delta\big|=\Big|\big\{A\in {\rm PGL}_2(F_q)\big|(A\alpha)^{2^r}+A\alpha=0\big\}\Big|.$$
Assume that $A=\begin{pmatrix} a & b \\c & d\end{pmatrix}.$
Then
\begin{eqnarray*}
(A\alpha)^{2^r}+A\alpha=0
&\Leftrightarrow & \Big(\frac{a\alpha+b}{c\alpha+d}\Big)^{2^r}+\Big(\frac{a\alpha+b}{c\alpha+d}\Big)=0\\
&\Leftrightarrow & \frac{a^{2^r}\alpha^{2^r}+b^{2^r}}{c^{2^r}\alpha^{2^r}+d^{2^r}}+\frac{a\alpha+b}{c\alpha+d}=0\\
&\Leftrightarrow & \frac{a^{2^r}\alpha+b^{2^r}}{c^{2^r}\alpha+d^{2^r}}+\frac{a\alpha+b}{c\alpha+d}=0\\
&\Leftrightarrow & (ca^{2^r}+ac^{2^r})\alpha^2+(da^{2^r}+bc^{2^r}+ad^{2^r}+cb^{2^r})\alpha+(bd^{2^r}+db^{2^r})=0\\
&\Leftrightarrow &
\begin{cases}
ca^{2^r}+ac^{2^r}=0,\\
da^{2^r}+bc^{2^r}+ad^{2^r}+cb^{2^r}=0,\\
bd^{2^r}+db^{2^r}=0.
\end{cases}
\end{eqnarray*}
We have to consider three cases separately.

Case 1: $a\neq 0, ~c=0$.
Since $A$ is invertible, one must have $d\neq0$. From the second equality we have $a=d$.
If $b\neq 0$, then $b=d$. Hence, there are   two cases:
$$b=c=0,~a=d\neq0~~\hbox{and}~~c=0, ~a=b=d\neq0.$$
Therefore in this case
$$A=\begin{pmatrix} 1 & 0 \\0 & 1\end{pmatrix}~\mbox{or}~
\begin{pmatrix} 1 & 1 \\0 & 1\end{pmatrix}.$$
Case 2: $c\neq 0, ~a=0$.
Since $A$ is invertible, $b\neq0$ and $ c\neq 0$. By the second equality we obtain   $b=c$.
If $d\neq 0$, then $b=d$. Hence, there are   two cases:
$$b=c\neq 0,~a=d=0~~\hbox{and}~~a=0, ~b=c=d\neq0.$$
Therefore
$$A=\begin{pmatrix} 0 & 1 \\1 & 0\end{pmatrix}~\mbox{or}~
\begin{pmatrix} 0 & 1 \\1 & 1\end{pmatrix}.$$
Case 3: $c\neq 0, ~a\neq0$.
From the first equality we get $a=c$. We consider three subcases separately.

Subcase 3.1: $b=0, ~d\neq 0$. From the second equality,  we have $a=d$.

Subcase 3.2: $b\neq0, ~d=0$. From the second equality, we have $b=c$.

Subcase 3.3: $b\neq0, ~d\neq0$. From the last equality,  we have $b=d$. However, the determinate of $A$ is  $ad-bc=0$. This is impossible.

Hence, there are  two cases:
$$b=0,~a=c=d\neq 0~~\hbox{and}~~d=0,~a=b=c\neq0.$$
Therefore in this case
$A=\begin{pmatrix} 1 & 0 \\1 & 1\end{pmatrix}~\mbox{or}~
\begin{pmatrix} 1 & 1 \\1 & 0\end{pmatrix}.$

In conclusion, we have
\begin{eqnarray*}
\Delta&=&\Bigg\{\begin{pmatrix} 1 & 0 \\0 & 1\end{pmatrix}f(x),
\begin{pmatrix} 1 & 1 \\0 & 1\end{pmatrix}f(x),
\begin{pmatrix} 0 & 1 \\1 & 0\end{pmatrix}f(x),
\begin{pmatrix} 0 & 1 \\1 & 1\end{pmatrix}f(x),
\begin{pmatrix} 1 & 0 \\1 & 1\end{pmatrix}f(x),
\begin{pmatrix} 1 & 1 \\1 & 0\end{pmatrix}f(x)
\Bigg\}\\
&=&\Big\{A_1f=f, A_2f, A_3f, A_4f, A_5f, A_6f\Big\}\\
&=&\mathcal{G}_f.
\end{eqnarray*}
Hence we get at once that $\big|\Delta\big|=\big|\mathcal{G}_f\big|$, which is the required result.
\end{proof}

%\begin{Corollary}
%With notations as above. Let $f\in {\rm I}_r$, then
%$$\sigma^r\big({\rm PGL}(f)\big)={\rm PGL}(f)$$
%if and only if
%there is a polynomial $g(x)\in {\rm AGL}(f)$ such that
%$$g(x)\big|(x^{2^r}+x).$$
%\end{Corollary}
We now provide some properties of the sets $G, \mathcal{X}$ and $\Delta_i~(i=2,3,\cdots,7)$.
We first observe  that $G$ is a group of order $6$ and this group is isomorphism to the symmetric group $S_3$ of degree $3$.
Since $r\geq3$,
$f(x)\in \mathcal{I}_r$ divides $x^{2^r}+x$  if and only if   $f(x)\in \mathcal{I}_r$ divides $x^{2^r-1}-1$ .
%Let ${\rm ord}(f)$ denote the order of the polynomial $f$ (see \cite[Definition 3.2]{lidl}).
%It follows from \cite[Lemma 3.6]{lidl} that $f(x)$ divides $x^{2^r}+x$ if and only if ${\rm ord}(f)$ divides $2^r-1$.
%The set ${\rm E}(r,q)$ is defined by
%\begin{equation}\label{erq}
%$${\rm E}(r,q)=\Big\{e \,\Big|\,\hbox{$e>1$ is an integer dividing $2^r-1$~but~$e$ does not divide $q^{d}-1$~for any~$1\leq d<r$}\Big\}.$$
%\end{equation}
%Then according to \cite[Theorem 3.5]{lidl},  the number of polynomials  $f(x)\in \mathcal{I}_r$
%such that $f(x)$ divides $x^{2^r}+x$ is equal to
%$$
%\big{|}\mathcal{X}\big|=\Big{|}\Big\{f(x)\in \mathcal{I}_r\,\Big|\,\hbox{$f(x)$ divides $x^{2^r}+x$}\Big\}\Big|=
%\sum_{e\in {\rm E}(r,q)}\frac{\varphi(e)}{r},$$
%where $\varphi$ is the Euler's Totient function.

The size of $\mathcal{X}$ has been determined explicitly in \cite[Lemma 3.10]{cz} in terms of the  M$\ddot{o}$bius function.

\begin{lem}\label{firstnumber}
With   notation as given above, we have
$$\big{|}\mathcal{X}\big|=\Big{|}\Big\{f(x)\in \mathcal{I}_r\,\Big|\,\hbox{$f(x)$ divides $x^{2^r}+x$}\Big\}\Big|=
\frac{1}{r}\sum_{d|r}\big(2^{\frac{r}{d}}-1\big)\mu(d),$$
where $\mu$ is the M$\ddot{o}$bius function.
\end{lem}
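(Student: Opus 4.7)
The plan is to translate the divisibility condition into a condition on roots and then match counts. Since $x^{2^r}-x$ factors as the product of all monic irreducible polynomials over $\mathbb{F}_2$ whose degree divides $r$, with root set exactly $\mathbb{F}_{2^r}$, a polynomial $f\in\mathcal{I}_r$ divides $x^{2^r}+x$ if and only if every root of $f$ lies in $\mathbb{F}_{2^r}$. So my first step would be to fix a root $\alpha\in\mathbb{F}_{2^r}$ of such an $f$ and pin down its degree over $\mathbb{F}_2$.

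Set $d=[\mathbb{F}_2(\alpha):\mathbb{F}_2]$, so $d\mid r$. The degree of $\alpha$ over $\mathbb{F}_q=\mathbb{F}_{2^n}$ is $\operatorname{lcm}(d,n)/n$, and this must equal $r$ because $f\in\mathcal{I}_r$. Since $\gcd(r,n)=1$ and $d\mid r$, we have $\gcd(d,n)=1$, hence $\operatorname{lcm}(d,n)/n=d$. Therefore $d=r$; equivalently, $\alpha$ has minimal polynomial of degree $r$ over $\mathbb{F}_2$. Conversely, the same $\operatorname{lcm}$ calculation shows that every $\alpha\in\mathbb{F}_{2^r}$ of degree $r$ over $\mathbb{F}_2$ has degree $r$ over $\mathbb{F}_q$, so its $\mathbb{F}_q$-minimal polynomial lies in $\mathcal{X}$. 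Hence $\mathcal{X}$ is in bijection with the set of $\mathbb{F}_q$-conjugacy classes of degree-$r$ elements of $\mathbb{F}_{2^r}$.

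The key step is comparing $\mathbb{F}_q$- and $\mathbb{F}_2$-orbits on this set. The $\mathbb{F}_q$-conjugates of $\alpha$ are $\alpha^{q^i}=\alpha^{2^{ni}}$ for $0\le i\le r-1$, while the $\mathbb{F}_2$-conjugates are $\alpha^{2^j}$ for $0\le j\le r-1$. Because $\alpha^{2^r}=\alpha$ and $\gcd(n,r)=1$, the map $i\mapsto ni\bmod r$ is a bijection of $\{0,1,\dots,r-1\}$, so the two orbits coincide as subsets of $\mathbb{F}_{2^r}$. Consequently $|\mathcal{X}|$ equals the number of $\mathbb{F}_2$-conjugacy classes of degree-$r$ elements of $\mathbb{F}_{2^r}$, i.e.\ the number of monic irreducible polynomials of degree $r$ over $\mathbb{F}_2$, which is the classical count $\tfrac{1}{r}\sum_{d\mid r}\mu(d)\,2^{r/d}$.

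The step I expect to be the main obstacle is the orbit-coincidence in the last paragraph: each $f\in\mathcal{X}$ has $r$ distinct roots, and one must be careful not to overcount by a factor of $r$ or miss some polynomials. Without the hypothesis $\gcd(r,n)=1$, the $\mathbb{F}_q$-orbit on a root $\alpha$ would generally be strictly smaller than the $\mathbb{F}_2$-orbit and a correction factor would intervene; it is precisely this coprimality that makes the two orbit structures agree and delivers the clean Möbius formula. Everything else reduces to standard finite-field degree bookkeeping.
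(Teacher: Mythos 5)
Your argument is correct. Note first that the paper itself gives no proof of this lemma: it simply quotes \cite[Lemma 3.10]{cz}, so what you have supplied is a self-contained replacement. Your route -- translating $f\mid x^{2^r}+x$ into ``the roots of $f$ lie in $\mathbb{F}_{2^r}$'', using $\gcd(r,n)=1$ and the $\mathrm{lcm}$ computation to show those roots must have degree exactly $r$ over $\mathbb{F}_2$ (and conversely), and then observing that on such elements the $\mathbb{F}_q$-Frobenius orbit $\{\alpha^{2^{ni}}\}$ and the $\mathbb{F}_2$-Frobenius orbit $\{\alpha^{2^{j}}\}$ coincide because $i\mapsto ni \bmod r$ permutes $\{0,\dots,r-1\}$ -- is sound, and it cleanly identifies $\mathcal{X}$ with the set of monic irreducible polynomials of degree $r$ over $\mathbb{F}_2$, avoiding any root-counting-and-divide-by-$r$ bookkeeping. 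The only loose end is cosmetic: you land on $\frac{1}{r}\sum_{d\mid r}\mu(d)2^{r/d}$, while the lemma is stated as $\frac{1}{r}\sum_{d\mid r}\mu(d)\bigl(2^{r/d}-1\bigr)$; these agree because $\sum_{d\mid r}\mu(d)=0$ for $r>1$ (a fact the paper records just before Lemma \ref{f2irre}), and since $r\geq 3$ here you should say this in one line to match the stated formula. With that sentence added, your proof is complete and, if anything, more transparent than an appeal to the external reference, since it makes explicit where the hypothesis $\gcd(r,n)=1$ enters.
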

%\begin{proof}
%Let $d$ be a positive integer.
%Suppose that
%$$\Omega_d(x)=\prod_{f(x)\in \mathcal{S}_d(x)}f(x),$$
%where
%$$\mathcal{S}_d(x)=\Big\{f(x)\in \mathcal{I}_d\,\Big|\,\hbox{$f(x)$ divides $x^{2^r-1}-1$}\Big\}.$$
%Note that $\gcd(2^r-1, q)=\gcd(2^r-1, 2^n)=1$ and $\gcd(r,n)=1$ and we get that
%${\rm ord}_{2^r-1}(q)=r$, which shows that $d|r$. Then
%$$x^{2^r-1}-1=\prod_{d|r}\Omega_d(x).$$
%So
%$$2^r-1=\sum_{d|r}d\big|\mathcal{S}_d(x)\big|.$$
%In virtue of the M$\ddot{o}$bius inversion formula we obtain that
%$$r\big|\mathcal{S}_r(x)\big|=\frac{1}{r}\sum_{d|r}\big(2^{\frac{r}{d}}-1\big)\mu(d).$$
%Hence we have
%$$\Big{|}\Big\{f(x)\in \mathcal{I}_r\,\Big|\,\hbox{$f(x)$ divides $x^{2^r}+x$}\Big\}\Big|=\big|\mathcal{S}_r(x)\big|=
%\frac{1}{r}\sum_{d|r}\big(2^{\frac{r}{d}}-1\big)\mu(d),$$
%where $\mu$ is the M$\ddot{o}$bius function.
%\end{proof}
Let $M$ be a subgroup of ${\rm PGL}$ and $f(x)\in \mathcal{I}_r$.
If $A\circ f(x)=f(x)$ for any $A\in M$, then according to \cite[Theorem 1.3]{Reis20} we see that $M$ must be a cyclic subgroup of ${\rm PGL}$.
Naturally, we have an analogous result about the group action involved.

\begin{lem}\label{cyclic}
Let $M$ be a subgroup of ${\rm PGL}$ and $f(x)\in \mathcal{I}_r$.
If $Af(x)=f(x)$ for any $A\in M$, then $M$ is a cyclic subgroup of ${\rm PGL}$.
\end{lem}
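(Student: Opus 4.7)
The plan is to deduce this from the analogous statement for the $\circ$-action, which (as the paper points out) is \cite[Theorem 1.3]{Reis20}. The link between the two actions has already been established in Section 3 via the identity $\mathrm{Fix}(A)=\mathrm{Fix}\bigl((A^{T})^{-1},\circ\bigr)$, so the only real task is to transport a subgroup from one side to the other without losing the cyclicity property.

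First, I would consider the map
\[
\phi:\mathrm{PGL}\longrightarrow \mathrm{PGL},\qquad \phi(A)=(A^{T})^{-1}.
\]
A short computation using $(AB)^{-1}=B^{-1}A^{-1}$ and $(AB)^{T}=B^{T}A^{T}$ shows
\[
\phi(A)\phi(B)=(A^{-1})^{T}(B^{-1})^{T}=(B^{-1}A^{-1})^{T}=((AB)^{-1})^{T}=\phi(AB),
\]
so $\phi$ is a group homomorphism, and since $\phi\circ\phi=\mathrm{id}$ it is an automorphism of $\mathrm{PGL}$. In particular, for any subgroup $M\leq \mathrm{PGL}$, the image $M':=\phi(M)$ is again a subgroup of $\mathrm{PGL}$ and is isomorphic to $M$.

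Next I would translate the hypothesis. If $Af(x)=f(x)$ for every $A\in M$, then $f\in\mathrm{Fix}(A)$ for every $A\in M$; by the identity $\mathrm{Fix}(A)=\mathrm{Fix}\bigl((A^{T})^{-1},\circ\bigr)$ recalled just before Lemma \ref{orbitnumber}, this is equivalent to $A'\circ f = f$ for every $A'=\phi(A)\in M'$. Hence $M'$ sits inside the stabilizer $\mathrm{Stab}_{\mathrm{PGL}}(f,\circ)$ of $f$ under the $\circ$-action.

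Finally, by \cite[Theorem 1.3]{Reis20} (which is precisely the statement already invoked in the paragraph preceding this lemma), the stabilizer of any $f\in\mathcal{I}_r$ under the $\circ$-action is a cyclic subgroup of $\mathrm{PGL}$, so in particular its subgroup $M'$ is cyclic. Since $M\cong M'$ via $\phi$, the group $M$ is cyclic as well. The only subtlety to be careful about is checking that $\phi$ is genuinely well defined on the quotient $\mathrm{PGL}=\mathrm{GL}/\mathcal{Z}$, i.e.\ that scalar matrices map to scalar matrices under $A\mapsto (A^{T})^{-1}$; this is immediate because $(\lambda E_2)^{T}=\lambda E_2$ and $(\lambda E_2)^{-1}=\lambda^{-1}E_2$, so scalars go to scalars, and the map descends to $\mathrm{PGL}$ without issue. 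I do not expect any serious obstacle: once the correspondence $\mathrm{Fix}(A)\leftrightarrow\mathrm{Fix}((A^{T})^{-1},\circ)$ is promoted from single elements to subgroups via the automorphism $\phi$, the result is a direct corollary of the cited theorem.
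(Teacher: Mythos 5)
Your proposal is correct and follows essentially the same route as the paper: the paper likewise passes to $M_0=\{(A^{T})^{-1}\mid A\in M\}$, uses the identity ${\rm Fix}(A)={\rm Fix}((A^{T})^{-1},\circ)$ to see that $M_0$ fixes $f$ under the $\circ$-action, invokes \cite[Theorem 1.3]{Reis20} to get that $M_0$ is cyclic, and transfers cyclicity back to $M$ via the isomorphism. Your additional checks (that $A\mapsto(A^{T})^{-1}$ is a well-defined automorphism of ${\rm PGL}$) only make explicit what the paper leaves implicit.
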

\begin{proof}
Assume that $M_0=\big\{(A^T)^{-1}\,\big|\,A\in M\big\}$. Then $M_0$ is a subgroup of ${\rm PGL}$, which is isomorphic to $M$.
For any $B\in M_0$, there exists a matrix $A\in M$ such that $B=(A^T)^{-1}$. Thus
$$B\circ f(x)=Af(x)=f(x).$$
It follows from \cite[Theorem 1.3]{Reis20}  that $M_0$ is cyclic, which implies that $M$ is also cyclic.
\end{proof}

\begin{lem}\label{beforelastlem}
With   notation as given above, we have
\begin{itemize}
\item[(1)] $\Delta_5=\Delta_6$.

\item[(2)] $|\Delta_2|=|\Delta_3|=|\Delta_4|$.

\item[(3)] $\Delta_i\cap \Delta_j=\emptyset$ for any $i,j\in \{2,3,4,5,7\}$ with $i\neq j$.

\item[(4)] $\mathcal{X}=\Delta_2\cup\Delta_3\cup\Delta_4\cup\Delta_5\cup\Delta_7$. In particular,
$\big{|}\mathcal{X}\big|=3|\Delta_2|+|\Delta_5|+|\Delta_7|.$
\end{itemize}
\end{lem}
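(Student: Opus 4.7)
The plan is to regard $G = \{A_1, \ldots, A_6\}$ as a concrete copy of $S_3$ acting on $\mathcal{X}$, with $A_5, A_6$ the $3$-cycles and $A_2, A_3, A_4$ the transpositions. Direct matrix computation in characteristic $2$ gives $A_5^2 = A_6$, $A_5^3 = E_2$, and shows that conjugation by $A_5$ cyclically permutes the transpositions ($A_2 \mapsto A_4 \mapsto A_3 \mapsto A_2$). Since every $A_i$ has entries in $\mathbb{F}_2 \subseteq \mathbb{F}_{2^r}$, the map $h \mapsto A_i h$ sends $\mathcal{X}$ to itself: if $\alpha \in \mathbb{F}_{2^r}$ is a root of $h \in \mathcal{X}$, then $A_i \alpha \in \mathbb{F}_{2^r}$ is a root of $A_i h$, so $A_i h$ still divides $x^{2^r}+x$.

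For (1), from $A_6 = A_5^2$ and the group-action identity $A_5^2 h = A_5(A_5 h)$ one gets $A_5 h = h \Rightarrow A_6 h = h$; applying the same argument to $A_5 = A_6^2$ yields the converse, so $\Delta_5 = \Delta_6$. For (2), whenever $A_5 A_i A_5^{-1} = A_j$ one has $A_i h = h \iff A_j(A_5 h) = A_5 h$, so $h \mapsto A_5 h$ restricts to a bijection $\Delta_i \to \Delta_j$ with inverse $h \mapsto A_6 h$; combined with the cyclic action of $A_5$ on transpositions, this gives $|\Delta_2| = |\Delta_3| = |\Delta_4|$.

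For (3), $\Delta_7$ is disjoint from the others by definition, so only the pairs drawn from $\{2,3,4,5\}$ need attention. If $h \in \Delta_i \cap \Delta_j$ with $i \neq j$, then every element of $M := \langle A_i, A_j \rangle$ fixes $h$, so Lemma~\ref{cyclic} forces $M$ to be cyclic. But any two distinct elements of $\{A_2, A_3, A_4, A_5\}$ in fact generate $G$ itself (two distinct transpositions, or a transposition together with the $3$-cycle $A_5$, each suffice in $S_3$), and $G \cong S_3$ is non-abelian, hence non-cyclic --- contradiction. Finally (4) is formal: any $h \in \mathcal{X}$ is either fixed by some $A_i$ with $i \in \{2,\ldots,6\}$, in which case part (1) places $h$ into one of $\Delta_2, \Delta_3, \Delta_4, \Delta_5$, or else $h \in \Delta_7$ by definition; combining with (2) and (3) gives $|\mathcal{X}| = 3|\Delta_2| + |\Delta_5| + |\Delta_7|$.

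The only real obstacle is the computational bookkeeping: one must verify the $\mathrm{PGL}$-relations $A_5^2 = A_6$ and the three conjugation identities, and then check that each subgroup $\langle A_i, A_j \rangle$ really equals the whole group $G$ (rather than a cyclic subgroup of order $2$ or $3$); without the latter, Lemma~\ref{cyclic} could not rule out a common fixed point. Once the $S_3$-structure of $G$ is in hand, every part becomes a short consequence, with Lemma~\ref{cyclic} as the essential bridge converting group-theoretic non-cyclicity into the analytic disjointness needed in (3).
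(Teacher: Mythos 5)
Your proposal is correct and follows essentially the same route as the paper: exploit the $S_3$-structure of $G$ (with $A_2,A_3,A_4$ conjugate and $A_5^2=A_6$) to get (1) and (2), invoke Lemma \ref{cyclic} on $\langle A_i,A_j\rangle$ for the disjointness in (3), and deduce (4) formally. Your explicit checks that $G$ preserves $\mathcal{X}$ and that the pairs involving $\Delta_7$ are disjoint by definition are small clarifications of steps the paper leaves implicit, but the argument is the same.
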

\begin{proof}
(1) Since $A_5^2=A_6$ and $A_6^2=A_5$, it is easy to see that if $A_5f=f$ then $A_6f=A_5^2f=f$;
if $A_6f=f$  then $A_5f=A_6^2f=f$. This shows that $\Delta_5=\Delta_6$.

(2) Note that $G$ is a group of order $6$ and this group is isomorphism to the symmetric group of degree $3$.
Then $G$ has three conjugacy classes given as follows:
$$\big\{A_1\big\}, ~\big\{A_2, A_3, A_4\big\}, ~\big\{A_5, A_6\big\}.$$
Additionally, there is a group action $G$ on the set $\mathcal{X}$:
\begin{eqnarray*}
G\times \mathcal{X} &\longrightarrow & \mathcal{X}\\
(A_i, ~f)&\mapsto & A_if.
\end{eqnarray*}
Thus for $i=1,2,\cdots,6$
$${\rm Fix}(A_i)=\big\{f\in \mathcal{X}\,\big|\,A_if=f\big\}=\Delta_i.$$
Assume that two matrices $A_i$ and $A_j$ of $G$ belong to the same conjugacy class, i.e., there exists a matrix $A_k\in G$ such that
$A_k^{-1}A_iA_k=A_j,$
where $1\leq i,j,k\leq 6$. Then
$${\rm Fix}(A_j)=A_k{\rm Fix}(A_i)\Rightarrow \Delta_j=A_k\Delta_i,$$
which implies that $\big|\Delta_i\big|=\big|\Delta_j\big|.$ Therefore we have
$|\Delta_2|=|\Delta_3|=|\Delta_4|.$

(3) Suppose that $f\in \Delta_i\cap \Delta_j$ with $i,j\in \{2,3,4,5,7\}$ and $i\neq j$.
Then $A_if=A_jf=f$. Let $\langle A_i, A_j\rangle$ be the subgroup of $G$ generated by $A_i, A_j$
(which is the smallest subgroup of $G$ containing $A_i$ and $A_j$). Thus the following holds:
$$Af=f,~\mbox{for any} ~A\in \langle A_i, A_j\rangle.$$
Since $r\geq 3$, according to Lemma \ref{cyclic},  the subgroup $\langle A_i, A_j\rangle$ is a cyclic subgroup of $G$.
Observe that for any $i,j\in \{2,3,4,5,7\}$ with $i\neq j$,    $\langle A_i, A_j\rangle$ cannot be a cyclic subgroup of $G$.
We have obtained a contradiction. Hence $\Delta_i\cap \Delta_j=\emptyset$ for any $i,j\in \{2,3,4,5,7\}$ with $i\neq j$.

(4) It follows from  (3) that  there is a decomposition of $\mathcal{X}$,
$$\mathcal{X}=\Delta_2\cup\Delta_3\cup\Delta_4\cup\Delta_5\cup\Delta_7,$$
and by (1) and (2) we get
$$\big{|}\mathcal{X}\big|=3|\Delta_2|+|\Delta_5|+|\Delta_7|.$$
\end{proof}

We have seen that  the number of monic irreducible polynomials of degree
$r$ over $\mathbb{F}_q$  that divide $x^{2^r}+x$ is equal to
$\frac{1}{r}\sum_{d|r}\big(2^{\frac{r}{d}}-1\big)\mu(d).$
Our next goals are to determine the number $|\mathcal{G}_f|$ and  obtain the number of
orbits with size $1$.

\begin{lem}\label{afterlastlem}
With  notation as given above, we then have

%(1) The group $G$ acts transitively on the set $\mathcal{G}_f$ and
%$$|\mathcal{G}_f|=\big[G: {\rm Stab}_{\rm G}(f)\big],$$
%where ${\rm Stab}_{\rm G}(f)=\big\{A\in G\big|Af=f\big\}.$
\begin{itemize}
\item[(1)] If $f\in \Delta_2$, then
%${\rm Stab}_{\rm G}(f)=\langle A_2\rangle$ and
%$|\mathcal{G}_f|=3$ and
$$\mathcal{G}_f=\big\{f, A_3f, A_4f\big\},$$
where $A_3f\in \Delta_4, ~A_4f\in \Delta_3$. In particular, $\big|\mathcal{G}_f\big|=3$.

\item[(2)] If $f\in \Delta_3$, then
$$\mathcal{G}_f=\big\{f, A_2f, A_4f\big\},$$
where $A_2f\in \Delta_4, ~A_4f\in \Delta_2$. In particular, $\big|\mathcal{G}_f\big|=3$.

\item[(3)] If $f\in \Delta_4$, then
$$\mathcal{G}_f=\big\{f, A_2f, A_3f\big\},$$
where $A_2f\in \Delta_3, ~A_3f\in \Delta_2$. In particular, $\big|\mathcal{G}_f\big|=3$.

\item[(4)] If $f\in \Delta_5$, then
$$\mathcal{G}_f=\big\{f, A_2f\big\},$$
where $A_2f\in \Delta_5$. In particular, $\big|\mathcal{G}_f\big|=2$.

\item[(5)] If $f\in \Delta_7$, then
$$\mathcal{G}_f=\big\{A_1f=f, A_2f, A_3f, A_4f, A_5f, A_6f\big\}.$$
In particular, $\big|\mathcal{G}_f\big|=6$.

\item[(6)] If $g\notin {\rm PGL}(f)$, then
$$\mathcal{G}_f\cap\mathcal{G}_g=\emptyset.$$
\end{itemize}
\end{lem}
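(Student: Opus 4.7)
The plan rests on the fact that $G=\{A_1,\ldots,A_6\}$ is a copy of $S_3$: a direct matrix check over $\mathbb{F}_2$ shows $A_2,A_3,A_4$ have order $2$ (the ``transpositions'') while $A_5,A_6$ have order $3$ with $A_5^2=A_6$ (the ``$3$-cycles''). I would begin by recording the small multiplication table in $G$ that will be used repeatedly, in particular $A_3A_2=A_5$, $A_4A_2=A_6$, $A_4A_3=A_5$, $A_3A_4=A_6$, $A_2A_5=A_4$, $A_2A_6=A_3$, and $A_5A_2=A_3$. All of (1)--(5) will then follow from these identities together with the disjointness statement in Lemma \ref{beforelastlem}(3); part (6) is immediate from the fact that ${\rm PGL}$-orbits partition $\mathcal{I}_r$.

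For (1), from $A_2f=f$ I rewrite $A_5f=(A_3A_2)f=A_3f$ and $A_6f=(A_4A_2)f=A_4f$, collapsing $\mathcal{G}_f$ onto $\{f,A_3f,A_4f\}$. To put $A_3f$ into $\Delta_4$, compute $A_4(A_3f)=(A_4A_3)f=A_5f=A_3f$; the dual claim $A_4f\in\Delta_3$ is symmetric. Distinctness is forced by the disjointness of the $\Delta_i$'s: an equality $A_3f=f$ (resp.\ $A_4f=f$) would place $f$ in $\Delta_2\cap\Delta_3$ (resp.\ $\Delta_2\cap\Delta_4$), while $A_3f=A_4f$ yields $(A_3A_4)f=A_6f=f$, i.e., $f\in\Delta_2\cap\Delta_5$, each contradicting Lemma \ref{beforelastlem}(3). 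Cases (2) and (3) are obtained by replaying this argument with the roles of $A_2,A_3,A_4$ cyclically permuted, the permutation being the inner automorphism of $G\cong S_3$ that takes the relevant transposition to $A_2$.

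For (4), from $A_5f=f$ we get $A_6f=A_5^2f=f$, and then $A_4f=(A_2A_5)f=A_2f$ and $A_3f=(A_2A_6)f=A_2f$, collapsing $\mathcal{G}_f$ onto $\{f,A_2f\}$; the stability $A_2f\in\Delta_5$ follows from $A_5(A_2f)=(A_5A_2)f=A_3f=A_2f$, and $A_2f\neq f$ because $\Delta_2\cap\Delta_5=\emptyset$. Part (5) is a one-line argument by contradiction: any coincidence $A_if=A_jf$ with $i\neq j$ produces $A_i^{-1}A_j\in G\setminus\{A_1\}$ fixing $f$, violating $f\in\Delta_7$. Part (6) follows because $\mathcal{G}_f\subseteq{\rm PGL}(f)$ and distinct ${\rm PGL}$-orbits are disjoint.

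No serious obstacle is expected: the proof is essentially a careful accounting of the $S_3$-structure on $G$ combined with the partition $\mathcal{X}=\Delta_2\cup\Delta_3\cup\Delta_4\cup\Delta_5\cup\Delta_7$ already established in Lemma \ref{beforelastlem}. The only step requiring genuine care is the verification of the multiplication table over $\mathbb{F}_2$ (with attention to characteristic-$2$ simplifications), since every subsequent collapse, inclusion, and distinctness assertion is read off from it.
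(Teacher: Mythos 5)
Your proposal is correct and follows essentially the same route as the paper: exploit the $S_3$-structure of $G$ (the multiplication table in characteristic $2$) together with the disjointness of the $\Delta_i$ from Lemma \ref{beforelastlem}(3) to collapse $\mathcal{G}_f$, establish the stated memberships, and force distinctness, with (5) and (6) handled exactly as in the paper via the definition of $\Delta_7$ and the disjointness of ${\rm PGL}$-orbits. The only cosmetic difference is that you argue distinctness by producing a non-identity element of $G$ fixing $f$, whereas the paper reads it off from the memberships $A_3f\in\Delta_4$, $A_4f\in\Delta_3$; both rest on the same lemma.
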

\begin{proof}
(1).  By straightforward calculations we have
$$(A_3^{-1}A_5)f=(A_3A_5)f=A_2f=f~~\hbox{and}~~(A_4^{-1}A_6)f=(A_4A_6)f=A_2f=f,$$
which implies
$A_3f=A_5f~\hbox{and}~A_4f=A_6f.$
On the other hand, by
$$A_4(A_3f)=(A_4A_3)f=A_5f=A_3f~~\hbox{and}~~A_3(A_4f)=(A_3A_4)f=A_6f=A_4f,$$
we obtain
$$A_3f=A_5f\in \Delta_4~~\hbox{and}~~A_4f=A_6f\in \Delta_3.$$
Since the intersection of any two of $\Delta_2, \Delta_3, \Delta_4$ is empty, we get
$A_3f \neq f, ~A_3f\neq A_4f~\hbox{and}~A_4f\neq f.$
Thus
$$\mathcal{G}_f=\big\{f, A_3f, A_4f\big\}.$$

(2) and (3).  By the method analogous to that used in the proof of (1), we  obtain the desired results.

(4).  Clearly, $A_6f=A_5^2f=A_5f=f$. In addition,
$$(A_2^{-1}A_4)f=(A_2A_4)f=A_5f=f~~\hbox{and}~~(A_4^{-1}A_3)f=(A_4A_3)f=A_5f=f,$$
which yields
$A_2f=A_3f=A_4f.$
Then from $A_5(A_2f)=(A_5A_2)f=A_3f=A_2f$ we obtain   $A_2f\in \Delta_5$.
Since $A_2f\neq f$, it is easy to see that
$$\mathcal{G}_f=\big\{f, A_2f\big\}.$$

(5) and (6).  They are  obvious by the definition of $\Delta_7$ and $\mathcal{G}_f$.
\end{proof}

The number of orbits of ${\rm P\Gamma L}$ on $\mathcal{I}_r$ with size $1$ can be represented in terms of the values of
$|\mathcal{X}|$, $|\Delta_2|$ and $|\Delta_5|$, as we show below.

\begin{Theorem}\label{szero}
Let $s_0$ be the number of orbits of ${\rm P\Gamma L}$ on $\mathcal{I}_r$ with size $1$. Then
$$s_0=\frac{1}{6}\big(\big{|}\mathcal{X}\big|+3|\Delta_2|+2|\Delta_5|\big).$$
\end{Theorem}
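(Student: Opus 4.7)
The plan is to identify $s_0$ concretely and then decompose $\mathcal{X}$ into pieces indexed by ${\rm PGL}$-orbits. By the reduction carried out earlier in this section, $s_0$ equals the number of ${\rm PGL}$-orbits ${\rm PGL}(f)\subseteq \mathcal{I}_r$ that are fixed by $\langle\sigma^r\rangle$; by Lemma \ref{nexttolastlem}, these are precisely the ${\rm PGL}$-orbits intersecting $\mathcal{X}$.

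The first step is to apply Lemma \ref{lastlem}: for any $f\in\mathcal{X}$, we have ${\rm PGL}(f)\cap\mathcal{X}=\mathcal{G}_f$, and by Lemma \ref{afterlastlem}(6) these sets are pairwise disjoint as $f$ ranges over representatives of distinct ${\rm PGL}$-orbits. Hence the $\mathcal{G}_f$ partition $\mathcal{X}$, and $s_0$ is the number of blocks. To count blocks I would stratify the orbits according to the $\Delta_i$-type of a representative, using the disjoint decomposition $\mathcal{X}=\Delta_2\cup\Delta_3\cup\Delta_4\cup\Delta_5\cup\Delta_7$ from Lemma \ref{beforelastlem}(3)--(4).

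By Lemma \ref{afterlastlem}(1)--(3), an orbit meeting $\Delta_2\cup\Delta_3\cup\Delta_4$ has $|\mathcal{G}_f|=3$ with exactly one element in each of $\Delta_2,\Delta_3,\Delta_4$, so the number of such orbits equals $|\Delta_2|=|\Delta_3|=|\Delta_4|$ (cf.\ Lemma \ref{beforelastlem}(2)). By Lemma \ref{afterlastlem}(4), an orbit meeting $\Delta_5$ has $\mathcal{G}_f\subseteq\Delta_5$ of size $2$, producing $|\Delta_5|/2$ orbits. For an orbit meeting $\Delta_7$, Lemma \ref{afterlastlem}(5) gives $|\mathcal{G}_f|=6$; I then claim $\mathcal{G}_f\subseteq\Delta_7$. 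Indeed, if some $A_if$ were in $\Delta_j$ with $j\in\{2,3,4,5\}$, then $|\mathcal{G}_{A_if}|\leq 3$ by Lemma \ref{afterlastlem}; but $\mathcal{G}_{A_if}={\rm PGL}(A_if)\cap\mathcal{X}={\rm PGL}(f)\cap\mathcal{X}=\mathcal{G}_f$, contradicting $|\mathcal{G}_f|=6$. Hence orbits of this kind contribute $|\Delta_7|/6$.

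Summing the three contributions gives
\[
s_0=|\Delta_2|+\tfrac{1}{2}|\Delta_5|+\tfrac{1}{6}|\Delta_7|,
\]
and substituting $|\Delta_7|=|\mathcal{X}|-3|\Delta_2|-|\Delta_5|$ from Lemma \ref{beforelastlem}(4) yields the stated formula $s_0=\tfrac{1}{6}\bigl(|\mathcal{X}|+3|\Delta_2|+2|\Delta_5|\bigr)$. Every piece of the argument is light bookkeeping once the decompositions of Lemmas \ref{beforelastlem} and \ref{afterlastlem} are in hand; the only delicate point, which I take to be the main obstacle, is the stability claim $\mathcal{G}_f\subseteq\Delta_7$ for $f\in\Delta_7$, and this is dispatched by the size-comparison argument above.
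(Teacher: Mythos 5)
Your proposal is correct and follows essentially the same route as the paper: the paper's proof is exactly the computation $s_0=\tfrac13(|\Delta_2|+|\Delta_3|+|\Delta_4|)+\tfrac12|\Delta_5|+\tfrac16|\Delta_7|$ obtained by combining Lemmas \ref{lastlem}, \ref{beforelastlem} and \ref{afterlastlem}, followed by the same substitution $|\Delta_7|=|\mathcal{X}|-3|\Delta_2|-|\Delta_5|$. Your explicit verification that $\mathcal{G}_f\subseteq\Delta_7$ for $f\in\Delta_7$ only spells out a point the paper leaves implicit in its lemma structure.
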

\begin{proof}
Combining Lemmas \ref{lastlem}, \ref{beforelastlem}   and   \ref{afterlastlem}, we see that
\begin{eqnarray*}
s_0&=&\frac{1}{3}(|\Delta_2|+|\Delta_3|+|\Delta_4|)+\frac{1}{2}|\Delta_5|+\frac{1}{6}|\Delta_7|\\
&=&|\Delta_2|+\frac{1}{2}|\Delta_5|+\frac{1}{6}\big(\big{|}\mathcal{X}\big|-3|\Delta_2|-|\Delta_5|\big)\\
&=&\frac{1}{6}\big(\big{|}\mathcal{X}\big|+3|\Delta_2|+2|\Delta_5|\big),
\end{eqnarray*}
which is our desired result. We are done.
\end{proof}
We are left to compute $|\Delta_2|$ and $|\Delta_5|$.
The following result exhibits  the value of $|\Delta_2|$ explicitly.
\begin{lem}\label{delta2}
%Let $r\geq3$ be a positive integer,  $A_2=\begin{pmatrix} 0 & 1 \\1 & 0\end{pmatrix}$ and
%let
%$$\Delta_2=\Big\{h(x)\in \mathcal{I}_r\,\Big|\, A_2h(x)=h(x)\hbox{~and~$h(x)$ divides $x^{2^r}+x$}\Big\}.$$
We have
\begin{equation*}
|\Delta_2|=
\begin{cases}
0,~~~~~~~~~~~~~~~~~~~\hbox{if~$r$~is odd,} \\
\vspace{0.1cm}\\
\frac{1}{r}\sum\limits_{d\mid \frac{r}{2}\atop{d ~odd}}\mu(d)2^{\frac{r}{2d}},~~\hbox{if~$r$~is even,}
\end{cases}
\end{equation*}
where $\mu$ is the M\"{o}bius function.
\end{lem}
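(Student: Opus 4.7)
The plan is to translate the invariance condition $A_2 f = f$ into an algebraic condition on a root $\alpha$ of $f$, reduce it to counting certain elements of $\mathbb{F}_{2^r}$, and then finish by M\"obius inversion combined with a gcd identity.

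First, since $A_2 = \begin{pmatrix} 0 & 1 \\ 1 & 0\end{pmatrix}$, the action of $A_2$ sends a root $\alpha$ of $f$ to $\alpha^{-1}$, so $A_2 f = f$ is equivalent to saying that the set of roots of $f$ is closed under $\alpha \mapsto \alpha^{-1}$. Because $f\in\mathcal{X}$, every root lies in $\mathbb{F}_{2^r}$. Using $\gcd(r,n)=1$, any element of $\mathbb{F}_{2^r}$ that has degree $r$ over $\mathbb{F}_q$ automatically has degree $r$ over $\mathbb{F}_2$, and its $\mathbb{F}_q$-Galois orbit coincides with its $\mathbb{F}_2$-Galois orbit $\{\alpha,\alpha^2,\ldots,\alpha^{2^{r-1}}\}$. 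Hence $A_2 f = f$ amounts to $\alpha^{-1}=\alpha^{2^j}$, i.e.\ $\alpha^{2^j+1}=1$, for some $0\leq j<r$.

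Iterating Frobenius $j$ more times gives $\alpha^{2^{2j}}=\alpha$, so $r\mid 2j$. When $r$ is odd this forces $j=0$ and then $\alpha^2=1$, so $\alpha=1$ has degree $1<r$, a contradiction; thus $\Delta_2=\emptyset$. When $r$ is even, the only viable value is $j=r/2$, and the condition becomes $\alpha^{2^{r/2}+1}=1$. Conversely, any $\alpha\in\mathbb{F}_{2^r}$ of degree $r$ over $\mathbb{F}_2$ with $\alpha^{2^{r/2}+1}=1$ produces an $f\in\Delta_2$, and distinct such $f$'s correspond to distinct Galois orbits of size $r$. Therefore
\[
r\cdot|\Delta_2| = \#\bigl\{\alpha\in\mathbb{F}_{2^r}\,:\,\deg_{\mathbb{F}_2}\alpha=r,\ \alpha^{2^{r/2}+1}=1\bigr\}.
\]

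To evaluate this count I would apply M\"obius inversion. For each $e\mid r$, let $N_e=\gcd(2^e-1,\,2^{r/2}+1)$ denote the number of $\alpha\in\mathbb{F}_{2^e}^*$ with $\alpha^{2^{r/2}+1}=1$. From the coprime factorization $2^r-1=(2^{r/2}-1)(2^{r/2}+1)$ one derives the identity $N_e=(2^e-1)/(2^{\gcd(e,r/2)}-1)$. Writing $r=2^s m$ with $m$ odd and $s\geq 1$, this yields $N_e=1$ whenever $v_2(e)<s$, and $N_e=2^{e/2}+1$ when $v_2(e)=s$ (in which case $e=r/d$ for some odd divisor $d$ of $m$).

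Finally, M\"obius inversion gives $r|\Delta_2|=\sum_{e\mid r}\mu(r/e)N_e$. Splitting by the $2$-adic valuation of $e$, the terms with $v_2(e)=s$ contribute $\sum_{d\mid m}\mu(d)(2^{r/(2d)}+1)$, while among the terms with $v_2(e)<s$ only those with $v_2(e)=s-1$ have a nonzero M\"obius factor, and they contribute $-\sum_{d\mid m}\mu(d)=-[m=1]$. The constant contributions cancel, and since the odd divisors of $r/2=2^{s-1}m$ are exactly the divisors of $m$, we recover the claimed formula after dividing by $r$. The main obstacle is precisely this last M\"obius bookkeeping: one has to verify carefully that the contributions from intermediate $e$'s cancel the stray $+1$'s attached to each $2^{e/2}+1$, leaving only the desired exponential sum.
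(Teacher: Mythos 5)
Your proof is correct, and it reaches the paper's formula by a route that is organized genuinely differently, even though the same two ingredients (the reduction to the equation $x^{2^{r/2}+1}=1$ and a M\"obius inversion) appear in both. The paper stays at the level of polynomials over $\mathbb{F}_q$: it quotes the characterization of self-reciprocal irreducible polynomials to get $A_2h=h$ iff $h\mid x^{q^{r/2}+1}+1$, proves $\gcd\big(x^{q^{r/2}+1}+1,\,x^{2^r}+x\big)=x^{2^{r/2}+1}+1$ using $\gcd(n,r)=1$, determines which degrees occur among the irreducible $\mathbb{F}_q$-factors of $x^{2^{r/2}+1}+1$ (degree $2d$ with $d\mid r/2$ and $(r/2)/d$ odd), and inverts the resulting product formula. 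You instead push everything down to a root: using $\gcd(r,n)=1$ once, at the outset, to identify $\mathbb{F}_2$- and $\mathbb{F}_q$-degrees and Frobenius orbits inside $\mathbb{F}_{2^r}$, you turn $A_2f=f$ into $\alpha^{-1}=\alpha^{2^j}$, pin $j=r/2$ (which also disposes of odd $r$), and then count elements of exact degree $r$ with $\alpha^{2^{r/2}+1}=1$ by inversion over all $e\mid r$, with $N_e=\gcd(2^e-1,2^{r/2}+1)=(2^e-1)/(2^{\gcd(e,r/2)}-1)$ coming from the coprime factorization $2^r-1=(2^{r/2}-1)(2^{r/2}+1)$ — the same arithmetic trick as the paper's gcd identity, but with no polynomial gcd over $\mathbb{F}_q$ and no appeal to the self-reciprocal literature. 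Your final bookkeeping is right: writing $r=2^sm$ with $m$ odd, the divisors $e=2^sm'$ contribute $\sum_{d\mid m}\mu(d)\big(2^{r/(2d)}+1\big)$, the divisors $e=2^{s-1}m'$ contribute $-\sum_{d\mid m}\mu(d)$, all smaller $2$-parts are killed by $\mu$, and the constants cancel, leaving $r|\Delta_2|=\sum_{d\mid m}\mu(d)2^{r/(2d)}$ with the divisors of $m$ being exactly the odd divisors of $r/2$. Two microscopic points to tidy: in the even case you should say explicitly that $j=0$ is excluded for the same reason as in the odd case ($\alpha^2=1$ forces $\alpha=1$), and the identity for $N_e$ needs $e\mid r$ (which you have). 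What your version buys is self-containedness and a cleaner isolation of where $\gcd(r,n)=1$ is used; what the paper's version buys is that its factor-degree analysis of $x^{2^{r/2}+1}+1$ over $\mathbb{F}_q$ runs parallel to the later, harder computation of $|\Delta_5|$.
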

\begin{proof}
It is easily seen that the order of $A_2\in{\rm PGL}$ is equal to two, i.e.,
$A_2^2=E_2$ ($E_2$ is the identity element of the group ${\rm PGL}$)  and $A_2\neq E_2$.
If $r$ is odd then $|\Delta_2|=0$; this is simply because there is no monic irreducible polynomial  $h(x)$ of odd degree   over $\mathbb{F}_q$
satisfying $A_2h(x)=h(x)$.

We assume that $r=2r'$, where $r'>1$ is a positive integer.
By the very definition of the group action of ${\rm PGL}$ on $\mathcal{I}_r$, we see that
$A_2h(x)$ is equal to the monic polynomial $A_2h(x)=\Big(x^rh(\frac{1}{x})\Big)^*$.
The irreducible polynomials   satisfying $\Big(x^rh(\frac{1}{x})\Big)^*=h(x)$ are termed as self-reciprocal irreducible monic
polynomials  in the literature, which have been   studied extensively.
It is known that $h(x)\in \mathcal{I}_r$ is self-reciprocal (equivalently $A_2h(x)=h(x)$)
if and only if $h(x)$ divides $x^{q^{r'}+1}+1$ (see \cite[Theorem 1]{Meyn} or one can prove this fact easily).
Therefore,  $A_2h(x)=h(x)$ and $h(x)$ divides $x^{2^r}+x$ if and only if $h(x)$ divides
$\gcd\big(x^{q^{r'}+1}+1,x^{2^r}+x\big)$.
We claim that
\begin{equation*}
\begin{split}
\gcd\big(x^{q^{r'}+1}+1,x^{2^r}+x\big)&=\gcd\big(x^{q^{r'}+1}+1,x^{2^r-1}+1\big)\\
&=x^{\gcd(q^{r'}+1,2^r-1)}+1\\
&=x^{\gcd(2^{nr'}+1,2^r-1)}+1\\
&=x^{2^{r'}+1}+1.
\end{split}
\end{equation*}
To prove the claim, it is enough to show that
\begin{equation}\label{gcd}
\gcd(2^{nr'}+1,2^r-1)=2^{r'}+1.
\end{equation}
Indeed,
observing that $2^{nr}-1=(2^{nr'}-1)(2^{nr'}+1)$,  $\gcd(2^{nr'}-1,2^{nr'}+1)=1$ and $2^{r}-1$ divides $2^{nr}-1$, we have
$$
2^r-1=\gcd(2^{nr}-1,2^r-1)=\gcd\Big(\big(2^{nr'}-1\big)\big(2^{nr'}+1\big),2^r-1\Big)
$$
and
\begin{equation*}
\begin{split}
\gcd\Big(\big(2^{nr'}-1\big)\big(2^{nr'}+1\big),2^r-1\Big)&=\gcd\big(2^{nr'}-1,2^r-1\big)\times\gcd\big(2^{nr'}+1,2^r-1\big)\\
&=\big(2^{\gcd(nr',r)}-1\big)\times\gcd\big(2^{nr'}+1,2^r-1\big)\\
&=\big(2^{r'}-1\big)\times\gcd\big(2^{nr'}+1,2^r-1\big),
\end{split}
\end{equation*}
where the last equality holds because $\gcd(n,r)=1$. Therefore, we have
$$
\gcd\big(2^{nr'}+1,2^r-1\big)=\frac{2^r-1}{2^{r'}-1}=2^{r'}+1.
$$
We have thus shown that
$\gcd\big(x^{q^{r'}+1}+1,x^{2^r}+x\big)=x^{2^{r'}+1}+1$.
This implies that the number of  monic irreducible polynomials $h(x)$ of degree $r$
over $\mathbb{F}_q$ that satisfy $A_2h(x)=h(x)$ and $h(x)\mid(x^{2^r}+x)$ (namely the size of $\Delta_2$) is equal to the
number of monic irreducible polynomials of degree $r$ over $\mathbb{F}_q$ that divide $x^{2^{r'}+1}+1$.

It is readily seen that every irreducible factor (except the one $x+1$) of $x^{2^{r'}+1}+1$ over $\mathbb{F}_q$ has even degree.
If $f(x)$ is a monic irreducible polynomial of degree $2d$ over $\mathbb{F}_q$ that divides $x^{2^{r'}+1}+1$,
we assert that $d$ is a divisor of $r'$ and $r'/d$ is odd.
To this end,
note, by $f(x)\mid(x^{2^{r'}+1}+1)$,  that  $f(x)\mid(x^{2^{r}-1}+1)$ and thus $f(x)\mid(x^{2^{nr}-1}+1)$, which yields
$f(x)\mid(x^{q^{r}-1}+1)$. Since $f(x)$ is irreducible of degree $2d$ over $\mathbb{F}_q$, we have  $(q^{2d}-1)\mid(q^{r}-1)$.
We then have that
$2d$ is a divisor of $r$, which implies that $d$ divides $r'$.
Since $n$ is an odd prime number, it follows that $x^{2^{r'}+1}+1$ divides
$x^{2^{nr'}+1}+1=x^{q^{r'}+1}+1$. Thus,  $f(x)$ divides $\gcd(x^{q^d+1}+1,x^{q^{r'}+1}+1)=x^{\gcd(q^d+1,q^{r'}+1)}+1$.
If $r'/d=r''$ is even, then $\gcd(q^d+1,q^{r'}+1)=1$; otherwise, let $\gcd(q^d+1,q^{r'}+1)=\ell\neq1$. Then
$q^d\equiv-1\pmod{\ell}$ and $(q^d)^{r''}\equiv(-1)^{r''}\pmod{\ell}$. We would have $q^{r'}\equiv1\pmod{\ell}$ since $r''$ is even,
contradicting to $q^{r'}\equiv-1\pmod{\ell}$. We thus have proven the assertion.

Let $R_{d}(x)$ be the product of all monic irreducible polynomials of degree $2d$ over $\mathbb{F}_q$ which divide $x^{2^{r'}+1}+1$,
in symbols
$$
R_{d}(x)=\prod\Big\{h(x)\,\Big|\,h(x)\in \mathcal{I}_{2d} ~\hbox{and $h(x)$ divides $x^{2^{r'}+1}+1$}\Big\}.
$$
It follows that
$$
x^{2^{r'}+1}+1=(x+1)\prod\limits_{d\mid r'\atop{r'/d ~odd}}R_{d}(x).
$$
Let $H_{r'}^0(x)=\frac{x^{2^{r'}+1}+1}{x+1}$. By the M\"{o}bius inversion formula (for example, see \cite[Theorem 3.24]{lidl}),
we have
$$
R_{r'}(x)=\prod\limits_{d\mid r'\atop{d ~odd}}H^0_{r'/d}(x)^{\mu(d)}.
$$
We conclude that
$$
r|\Delta_2|=\sum\limits_{d\mid \frac{r}{2}\atop{d ~odd}}\mu(d)2^{\frac{r}{2d}}.
$$
We are done.
\end{proof}

We need to find the value of $|\Delta_5|$. For this purpose, we first establish several lemmas.
\begin{lem}\label{equivalent2}
Let $f(x)\in \mathcal{I}_r$. Then $f(x)\in \mathcal{X}$ and
$A_5f(x)=f(x)$ if and only if
$$f(x)\Big|\gcd\big(x^{q^{r_0}+1}+x+1, x^{2^r}+x\big),$$
where $r_0=\frac{r}{3}$.% and is a positive integer.
\end{lem}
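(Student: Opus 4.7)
The plan is to convert the polynomial-level identity $A_5 f = f$ into a root-level statement that the map $\alpha \mapsto A_5\alpha = (\alpha+1)/\alpha$ permutes the roots of $f$ as some iterate of the $\mathbb{F}_q$-Frobenius $\alpha\mapsto\alpha^q$, and then to identify that iterate as the $r_0$-th one. The easy direction $(\Leftarrow)$ proceeds as follows. Assume $f(x)\mid\gcd(x^{q^{r_0}+1}+x+1,\,x^{2^r}+x)$. The divisibility by $x^{2^r}+x$ gives $f\in\mathcal{X}$. For any root $\alpha$ of $f$, the identity $\alpha^{q^{r_0}+1}+\alpha+1=0$ rearranges to $\alpha^{q^{r_0}}=(\alpha+1)/\alpha=A_5\alpha$, exhibiting $A_5\alpha$ as the Frobenius $r_0$-th iterate of $\alpha$ and therefore itself a root of $f$. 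Consequently $A_5 f$ is a monic irreducible polynomial of degree $r$ sharing a root with the monic irreducible $f$, forcing $A_5 f = f$.

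For the $(\Rightarrow)$ direction, assume $f\in\mathcal{X}$ with $A_5 f=f$; the divisibility $f\mid x^{2^r}+x$ is automatic, so the goal is to establish $f\mid x^{q^{r_0}+1}+x+1$. Pick any root $\alpha$ of $f$. Since $A_5 f=f$, the element $A_5\alpha=(\alpha+1)/\alpha$ is again a root of $f$, hence $A_5\alpha=\alpha^{q^i}$ for a unique $i\in\{0,1,\ldots,r-1\}$. A direct matrix computation yields $A_5^3=E_2$ in ${\rm PGL}$, which forces $\alpha^{q^{3i}}=\alpha$ and hence $r\mid 3i$. The case $i=0$ is excluded: $A_5\alpha=\alpha$ would give $\alpha^2+\alpha+1=0$, putting $\alpha\in\mathbb{F}_4$; but since $n$ is odd, $\mathbb{F}_4\cap\mathbb{F}_q=\mathbb{F}_2$, so $\alpha$ would have degree at most $2$ over $\mathbb{F}_q$, contradicting $\deg_{\mathbb{F}_q}\alpha=r\geq 3$. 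Thus $3\mid r$, $r_0=r/3$ is a positive integer, and $i\in\{r_0,2r_0\}$.

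The main obstacle will be pinning $i$ down to $r_0$ rather than $2r_0$. My plan is to exploit the ${\rm PGL}$-identity $A_2 A_5 A_2=A_6$ (immediate matrix multiplication) together with $A_6=A_5^{-1}$: these relations interchange the two candidate sub-cases, so combined with the Frobenius-invariance of the defining equation $\alpha^{q^{r_0}+1}+\alpha+1=0$ (if one root satisfies it then so does every Frobenius conjugate) one can reduce to the case $i=r_0$. Once $A_5\alpha=\alpha^{q^{r_0}}$ is in hand, clearing denominators gives $\alpha^{q^{r_0}+1}=\alpha+1$; applying Frobenius iteratively propagates this identity to every root of $f$, yielding $f\mid x^{q^{r_0}+1}+x+1$. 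Combined with $f\mid x^{2^r}+x$, this produces $f\mid\gcd(x^{q^{r_0}+1}+x+1,\,x^{2^r}+x)$, completing the proof.
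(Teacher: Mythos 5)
Your $(\Leftarrow)$ argument is correct and coincides with the paper's. The $(\Rightarrow)$ direction, however, is not a proof: at exactly the point you flag as the main obstacle --- pinning the exponent down to $i=r_0$ rather than $i=2r_0$ --- you give only a plan, and that plan cannot be carried out. The relations $A_2A_5A_2=A_6=A_5^{-1}$ do interchange the two sub-cases, but only by replacing $f$ with $A_2f$ (the minimal polynomial of $1/\alpha$); they say nothing about which sub-case the original $f$ lies in, and the target condition $f\mid x^{q^{r_0}+1}+x+1$ is not invariant under $f\mapsto A_2f$. In fact the sub-case $i=2r_0$ genuinely occurs, so no argument can exclude it. Take $r=3$, $r_0=1$: both irreducible cubics $x^3+x+1$ and $x^3+x^2+1$ lie in $\mathcal{X}$ and satisfy $A_5f=f$ (for both, $(x+1)^3 f\bigl(\tfrac{1}{x+1}\bigr)=f$), but a root $\alpha$ of $x^3+x+1$ satisfies $A_5\alpha=(\alpha+1)/\alpha=\alpha^{2}$, whereas a root $\beta$ of $x^3+x^2+1$ satisfies $\beta^5=\beta+1$, hence $A_5\beta=(\beta+1)/\beta=\beta^{4}$. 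Since $\gamma^{q}=\gamma^{2^{n\bmod 3}}$ for $\gamma\in\mathbb{F}_8$, exactly one of the two cubics divides $x^{q+1}+x+1$ (which one depends on $n\bmod 3$), and the other is a counterexample to the ``only if'' half of the statement. The most your set-up can yield is that $f$ divides $\gcd\bigl(x^{q^{r_0}+1}+x+1,\,x^{2^r}+x\bigr)$ or $\gcd\bigl(x^{q^{2r_0}+1}+x+1,\,x^{2^r}+x\bigr)$.

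You should also know that the paper's own proof stumbles at the same spot: it takes $r_0$ to be the least positive integer with $A_5\alpha=\alpha^{q^{r_0}}$ and concludes $r=3r_0$ from the orbit decomposition of the roots under $\langle A_5\rangle$, silently discarding the possibility that this exponent is $2r/3$; the example above shows that possibility is real, and it propagates (for $r=3$ the resulting value of $|\Delta_5|$ in Lemma \ref{delta5} is half of the true count of $A_5$-invariant polynomials in $\mathcal{X}$). So your instinct about where the difficulty sits was exactly right, but the difficulty is not surmountable as stated: the sub-case $i=2r_0$ has to be admitted rather than eliminated, and the lemma would need to be reformulated accordingly.
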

\begin{proof}
Suppose that $f(x)\in \mathcal{X}$ and $A_5f(x)=f(x)$, which gives
 $3\mid r$. Assume that $\alpha\in \mathbb{F}_{q^r}$ is a root of $f(x)$.
Then there exists a positive integer $r_0~(1\leq r_0\leq r-1)$ such that
$$\frac{\alpha+1}{\alpha}=A_5\alpha=\alpha^{q^{r_0}},$$
where $r_0$ is the least positive integer satisfying the above equality. Thus,
$f(x)\,\Big|\,\big(x^{q^{r_0}+1}+x+1\big),$
which implies that
$$f(x)\,\Big|\,\gcd\big(x^{q^{r_0}+1}+x+1, x^{2^r}+x\big).$$

Let $$\Omega=\big\{\alpha,\alpha^q, \cdots,\alpha^{q^{r-1}}\big\}$$
be the set of all the roots of $f(x)$. Then the cyclic group $\langle A_5\rangle$ acts on the set $\Omega$, and
 $\Omega$ can be decomposed into  disjoint union of orbits:
$$\Omega=\langle A_5\rangle(\alpha)\cup\langle A_5\rangle(\alpha^q)\cup\cdots \cup\langle A_5\rangle(\alpha^{q^{r_0-1}}).$$
Hence we get
$r=3r_0.$

On the contrary, let $\alpha\in \mathbb{F}_{q^r}$ be a root of $f(x)$. Since $f(x)\,\Big|\,\big(x^{q^{r_0}+1}+x+1, x^{2^r}+x\big)$,
we have $f(x)\in \mathcal{X}$ and
$A_5\alpha=\frac{\alpha+1}{\alpha}=\alpha^{q^{r_0}}.$
Hence,  $A_5\alpha$ is a root of $f(x)$ and $A_5f(x)=f(x)$.
\end{proof}

The next lemma is important in determining the value of $|\Delta_5|$.
\begin{lem}\label{factorization}
Let $f(x)\in \mathcal{X}$ with
$A_5f(x)=f(x)$ and let $\alpha\in \mathbb{F}_{q^r}$ be a root of $f(x)$. Then
 $$x^{q^{r_0}+1}+x+1=(x-\alpha)\prod_{\gamma\in \mathbb{F}_{q^{r_0}}}\Big(x-\big(\frac{\gamma^2+\gamma+1}{\alpha+\gamma}+\gamma+1\big)\Big),$$
where $r=3r_0$.
\end{lem}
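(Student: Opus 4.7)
The plan is to verify that both sides of the identity are monic polynomials of the same degree $q^{r_0}+1$, and that every factor on the right hand side is a distinct linear factor of $x^{q^{r_0}+1}+x+1$. Since the right hand side has $1+q^{r_0}$ linear factors matching the degree of the left hand side, this will suffice.

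First I would rewrite the candidate roots in a more convenient form. Clearing denominators and simplifying in characteristic $2$, one checks that
\[
\beta_{\gamma}:=\frac{\gamma^{2}+\gamma+1}{\alpha+\gamma}+\gamma+1=\frac{(\gamma+1)\alpha+1}{\alpha+\gamma}=M_{\gamma}\alpha,\qquad M_{\gamma}=\begin{pmatrix}\gamma+1 & 1\\ 1 & \gamma\end{pmatrix}\in {\rm PGL}.
\]
Because $\alpha$ has degree $r=3r_0$ over $\mathbb{F}_{q}$, its degree over $\mathbb{F}_{q^{r_{0}}}$ is $3$, so $\alpha\notin \mathbb{F}_{q^{r_{0}}}$ and every denominator $\alpha+\gamma$ is non-zero.

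Next I would show $\beta_{\gamma}$ is a root of $x^{q^{r_{0}}+1}+x+1$. The cleanest way is to observe that the condition $y^{q^{r_{0}}+1}+y+1=0$ is equivalent to $y^{q^{r_{0}}}=A_{5}y$, where $A_{5}y=\tfrac{y+1}{y}$. Since $\gamma\in\mathbb{F}_{q^{r_{0}}}$, applying Frobenius gives $\beta_{\gamma}^{q^{r_{0}}}=M_{\gamma}(\alpha^{q^{r_{0}}})=M_{\gamma}A_{5}\alpha$, while Lemma \ref{equivalent2} guarantees $\alpha^{q^{r_{0}}}=A_{5}\alpha$. Hence it is enough to verify $M_{\gamma}A_{5}=A_{5}M_{\gamma}$ in ${\rm PGL}$; a direct $2\times 2$ multiplication yields
\[
M_{\gamma}A_{5}=A_{5}M_{\gamma}=\begin{pmatrix}\gamma & \gamma+1\\ \gamma+1 & 1\end{pmatrix},
\]
so $\beta_{\gamma}^{q^{r_{0}}}=A_{5}(M_{\gamma}\alpha)=A_{5}\beta_{\gamma}$, as required. (Alternatively, one can substitute $\beta_\gamma$ into $x^{q^{r_0}+1}+x+1$ and simplify in characteristic $2$; the telescoping is short.)

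The main obstacle is the distinctness of the $q^{r_{0}}+1$ elements $\alpha,\;\{\beta_{\gamma}\}_{\gamma\in\mathbb{F}_{q^{r_{0}}}}$. I would handle it by a direct computation: suppose $\beta_{\gamma_{1}}=\beta_{\gamma_{2}}$ with $\gamma_{1}\neq \gamma_{2}$, or $\beta_{\gamma}=\alpha$ for some $\gamma$. Clearing denominators in the identity $M_{\gamma_{1}}\alpha=M_{\gamma_{2}}\alpha$ (respectively $M_{\gamma}\alpha=\alpha$) and simplifying in characteristic $2$, the $\alpha$-terms collapse and one arrives, after dividing by $\gamma_{1}+\gamma_{2}$ (resp. a comparable non-zero factor), at $\alpha^{2}+\alpha+1=0$. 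But this would force $\alpha\in\mathbb{F}_{4}$, so $\alpha$ has degree at most $2$ over $\mathbb{F}_{2}$ and hence also over $\mathbb{F}_{q}$, contradicting $\deg_{\mathbb{F}_{q}}\alpha=r\geq 3$. Combining the three ingredients — $q^{r_{0}}+1$ distinct roots of the monic polynomial $x^{q^{r_{0}}+1}+x+1$ of degree $q^{r_{0}}+1$ — yields the claimed factorisation.
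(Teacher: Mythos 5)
Your proposal follows the same overall strategy as the paper's proof: exhibit $\alpha$ and the $q^{r_0}$ elements $\theta_\gamma=\frac{\gamma^2+\gamma+1}{\alpha+\gamma}+\gamma+1$ as roots of $x^{q^{r_0}+1}+x+1$, prove they are pairwise distinct by reducing each coincidence to $\alpha^2+\alpha+1=0$ (impossible since $\deg_{\mathbb{F}_q}\alpha=r\geq3$), and conclude by comparing with the degree $q^{r_0}+1$. Your distinctness computation and the degree count are exactly the paper's; the observation that $\alpha\notin\mathbb{F}_{q^{r_0}}$, so all denominators $\alpha+\gamma$ are nonzero, is a useful point the paper leaves implicit. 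The only genuinely different ingredient is your justification of the root property via $\beta_\gamma=M_\gamma\alpha$ with $M_\gamma=\begin{pmatrix}\gamma+1 & 1\\ 1 & \gamma\end{pmatrix}$ and the commutation $M_\gamma A_5=A_5M_\gamma$, where the paper simply substitutes $\theta_\gamma$ into $x^{q^{r_0}+1}+x+1$ and simplifies using $\alpha^{q^{r_0}}=\frac{\alpha+1}{\alpha}$.

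That step has one concrete flaw as written: $\det M_\gamma=\gamma^2+\gamma+1$, so $M_\gamma\notin{\rm PGL}$ precisely when $\gamma$ is a primitive cube root of unity in $\mathbb{F}_{q^{r_0}}$. This case really occurs in the setting of the lemma: $\mathbb{F}_4\subseteq\mathbb{F}_{q^{r_0}}=\mathbb{F}_{2^{nr_0}}$ whenever $r_0$ is even (e.g.\ $r=12$, $r_0=4$), and these degenerate $\gamma$'s are exactly the ones producing the factor $x^2+x+1$ that Lemma \ref{decomposition}(2) later singles out, so they cannot be ignored. For such $\gamma$ the map $x\mapsto M_\gamma x$ collapses to the constant $\gamma+1$ and the ``commutation in ${\rm PGL}$'' argument does not apply as stated; one must either check directly that $\beta_\gamma=\gamma+1$ (a root of $x^2+x+1\in\mathbb{F}_{q^{r_0}}[x]$, fixed by $x\mapsto x^{q^{r_0}}$) satisfies $\beta_\gamma^{q^{r_0}+1}+\beta_\gamma+1=0$, which it does, or verify the relevant rational-function identities by hand noting that the denominators $(\gamma+1)\alpha+1$ and $\alpha+\gamma$ never vanish, or simply fall back on the direct substitution you mention parenthetically (which is the paper's argument and works uniformly in $\gamma$). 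With that degenerate case patched, your proof is complete and essentially equivalent to the paper's.
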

\begin{proof}
According to Lemma \ref{equivalent2}, $\alpha$ is a root of $x^{q^{r_0}+1}+x+1$ and  $\alpha^{q^{r_0}}=\frac{\alpha+1}{\alpha}$.
Thus for any $\gamma\in \mathbb{F}_{q^{r_0}}$,
\begin{eqnarray*}
\Big(\frac{\gamma^2+\gamma+1}{\alpha+\gamma}+\gamma+1\Big)^{q^{r_0}}&=&\frac{\gamma^2+\gamma+1}{\alpha^{q^{r_0}}+\gamma}+\gamma+1\\
&=&\frac{\gamma^2+\gamma+1}{\frac{\alpha+1}{\alpha}+\gamma}+\gamma+1.
\end{eqnarray*}
We further obtain that
\begin{eqnarray*}
&&\Big(\frac{\gamma^2+\gamma+1}{\alpha+\gamma}+\gamma+1\Big)^{q^{r_0}+1}+\Big(\frac{\gamma^2+\gamma+1}{\alpha+\gamma}+\gamma+1\Big)+1\\
&=& \Big(\frac{\gamma^2+\gamma+1}{\alpha+\gamma}+\gamma+1\Big)\Big(\big(\frac{\gamma^2+\gamma+1}{\alpha+\gamma}+\gamma+1\big)^{q^{r_0}}+1\Big)+1\\
&=&\Big(\frac{\gamma^2+\gamma+1}{\alpha+\gamma}+\gamma+1\Big)\Big(\frac{\gamma^2+\gamma+1}{\frac{\alpha+1}{\alpha}+\gamma}+\gamma\Big)+1\\
&=&\frac{\alpha\gamma+\alpha+1}{\alpha+\gamma}\cdot\frac{\alpha+\gamma}{\alpha\gamma+\alpha+1}+1\\
&=&1+1\\
&=&0.
\end{eqnarray*}
Therefore $\alpha$ and $\frac{\gamma^2+\gamma+1}{\alpha+\gamma}+\gamma+1~$ $(\hbox{for any}~\gamma\in \mathbb{F}_{q^{r_0}})$ are the roots of
$x^{q^{r_0}+1}+x+1$.
Let $\theta_\gamma=\frac{\gamma^2+\gamma+1}{\alpha+\gamma}+\gamma+1$.
In the following we check that  for any $\gamma, \gamma'\in \mathbb{F}_{q^{r_0}}$,
$$\alpha\neq \theta_\gamma~\hbox{and}~\theta_\gamma\neq\theta_{\gamma'}~(\gamma\neq\gamma'),$$
which shows that the roots of $x^{q^{r_0}+1}+x+1$ are distinct. Indeed,
\begin{eqnarray*}
&&\theta_\gamma=\theta_{\gamma'}\\
&\Leftrightarrow & \frac{\gamma^2+\gamma+1}{\alpha+\gamma}+\gamma+1=\frac{\gamma'^2+\gamma'+1}{\alpha+\gamma'}+\gamma'+1\\
&\Leftrightarrow&\frac{\gamma^2+\gamma+1}{\alpha+\gamma}=\frac{\gamma'^2+\gamma'+1}{\alpha+\gamma'}+\gamma+\gamma'\\
&\Leftrightarrow& (\gamma^2+\gamma+1)(\alpha+\gamma')=(\gamma'^2+\gamma'+1)(\alpha+\gamma)+(\gamma+\gamma')(\alpha+\gamma)(\alpha+\gamma')\\
&\Leftrightarrow & \alpha\gamma+\gamma'=\alpha\gamma'+\gamma+\alpha^2\gamma+\alpha^2\gamma'\\
&\Leftrightarrow & (\gamma+\gamma')(\alpha^2+\alpha+1)=0\\
&\Leftrightarrow & \gamma=\gamma'.
\end{eqnarray*}
In the last equality but one $r\geq3$ implies  that $\alpha^2+\alpha+1\neq0$.
Next, let us verify  that for any $\gamma\in \mathbb{F}_{q^{r_0}}$,
\begin{eqnarray*}
&&\alpha=\theta_{\gamma'}\\
&\Leftrightarrow & \alpha=\frac{\gamma^2+\gamma+1}{\alpha+\gamma}+\gamma+1\\
&\Leftrightarrow&\alpha(\alpha+\gamma)=\gamma^2+\gamma+1+(\alpha+\gamma)(\gamma+1)\\
&\Leftrightarrow & \alpha^2+\alpha+1=0.
\end{eqnarray*}

In conclusion, we see that all the roots of $x^{q^{r_0}+1}+x+1$ are distinct.
Note that the degree of this polynomial is $q^{r_0}+1$.
We have decomposed  completely the polynomial $x^{q^{r_0}+1}+x+1$ into degree-one factors in $\mathbb{F}_{q^r}$.
\end{proof}

Given a positive integer $n',$
let $N_q(n')$ be the number of monic irreducible polynomials in $\mathbb{F}_q[x]$ of degree $n'$.
According to \cite[Theorem 3.25]{lidl},  the number $N_q(n')$ is given by
$$N_q(n')=\frac{1}{n'}\sum_{d|n'}\mu(d)q^{\frac{n'}{d}}.$$
Thus, a crude estimate yields
$$N_q(n')> 0.$$
In other words,  for every finite field $\mathbb{F}_q$ and every positive integer $n'$,  there
exists an irreducible polynomial in $\mathbb{F}_q[x]$ of degree $n'$.

In addition, the M\"{o}bius function $\mu$ satisfies (see \cite[Lemma 3.23]{lidl})
$$\sum_{d\mid n'}\mu(d)=
\begin{cases}
1 & \mbox{if}~n'=1,\\
0 & \mbox{if}~n'>1.
\end{cases}
$$

With these known results, we have the following result
which guarantees  the existence of a monic irreducible polynomial $f(x)$ of degree $r$ over $\mathbb{F}_2$
satisfying $A_5f(x)=f(x)$.

\begin{lem}\label{f2irre}
Let $r$ be an integer satisfying $3\mid r$ and $r\neq6$.
Then there exists a monic irreducible polynomial $f(x)$ of degree $r$ over $\mathbb{F}_2$
satisfying $A_5f(x)=f(x)$.
\end{lem}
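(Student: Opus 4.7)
The plan is to reformulate the existence of $f(x)$ in terms of roots of the auxiliary polynomial $P(x) = x^{2^{r_0}+1}+x+1 \in \mathbb{F}_2[x]$, where $r_0 = r/3$. I would first argue that a monic irreducible $f(x) \in \mathbb{F}_2[x]$ of degree $r$ satisfies $A_5f = f$ if and only if $f(x)$ divides $P(x)$ in $\mathbb{F}_2[x]$: for a root $\alpha$ of $f$, the condition $A_5 f = f$ forces $A_5\alpha = \alpha^{2^j}$ for some $j \in \{0,1,\ldots,r-1\}$, and since $A_5$ has order $3$ on the Frobenius orbit of $\alpha$ (of size $r = 3r_0$) and commutes with Frobenius, necessarily $j \in \{r_0, 2r_0\}$; the first case gives $P(\alpha)=0$ directly, and the second case, via $A_5^{-1}=A_5^2=A_6$ and $A_6\alpha = 1/(\alpha+1)$, leads after raising to the $2^{r_0}$-th power to the same equation $P(\alpha)=0$.

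Next I would establish that $P(x)$ is squarefree with all $2^{r_0}+1$ of its roots in $\mathbb{F}_{2^r}$. Squarefreeness follows from $\gcd(P,P')=1$ since $P'(x) = x^{2^{r_0}}+1$ has only $x=1$ as root while $P(1)=1\neq 0$. For the location of the roots, from $P(\alpha)=0$ one has $\alpha^{2^{r_0}} = (\alpha+1)/\alpha$, and two further applications of $\sigma^{r_0}$ give $\alpha^{2^{2r_0}} = 1/(\alpha+1)$ and then $\alpha^{2^{3r_0}} = \alpha$, so each root lies in $\mathbb{F}_{2^r}$. The problem thus reduces to exhibiting a root of $P$ in $\mathbb{F}_{2^r}$ whose degree over $\mathbb{F}_2$ equals $r$.

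The crux of the argument is a subfield bound. For each prime $p \mid r$ I would bound the number $a_{r/p}$ of roots of $P$ lying in $\mathbb{F}_{2^{r/p}}$. For $\alpha \in \mathbb{F}_{2^{r/p}}$ one has $\alpha^{2^{r_0}} = \alpha^{2^{s_p}}$ with $s_p = r_0 \bmod (r/p)$, so such an $\alpha$ must satisfy $\alpha^{2^{s_p}+1}+\alpha+1=0$. A direct computation gives $s_3 = 0$, whence $a_{r/3} \le 2$ (roots of $x^2+x+1$). For $p \ne 3$ dividing $r_0$, one finds $s_p = (r_0/p)(p\bmod 3) \in \{r_0/p,\, 2r_0/p\}$, and applying the factorization argument of Lemma~\ref{factorization} to $x^{2^{s_p}+1}+x+1$ and then restricting to $\mathbb{F}_{2^{r/p}}$ shows that, if a root $\alpha$ exists there, the other roots lying in $\mathbb{F}_{2^{r/p}}$ are parametrized by $\gamma \in \mathbb{F}_{2^{s_p}} \cap \mathbb{F}_{2^{r/p}} = \mathbb{F}_{2^{r_0/p}}$, yielding $a_{r/p} \le 1 + 2^{r_0/p}$. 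This refined bound (rather than the crude $2^{s_p}+1$) is essential at $p = 2$, where $s_2 = r_0$.

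Combining these estimates by union bound, the total number of roots of $P$ of degree strictly less than $r$ is at most $2 + \sum_{p\mid r_0,\, p\neq 3}\bigl(1 + 2^{r_0/p}\bigr)$. A short case check shows this quantity is strictly less than $2^{r_0}+1$ for every $r_0 \ge 1$ except $r_0 = 2$; indeed, for $r_0 \in \{1\}\cup\{3^s\}$ the sum is empty, while for $r_0 \ge 3$ with other prime divisors the dominant term $2^{r_0/p_0}$ (with $p_0$ the smallest non-$3$ prime factor) is far below $2^{r_0}$. The single borderline case $r_0 = 2$ corresponds exactly to $r=6$, precisely the case excluded by hypothesis. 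Consequently, at least one root of $P$ has degree $r$ over $\mathbb{F}_2$, and its minimal polynomial supplies the desired $f(x)$. I expect the main obstacle to be the sharp bound $a_{r/2} \le 1+2^{r_0/2}$, which requires a careful adaptation of Lemma~\ref{factorization} to the subfield setting; the remaining case analysis is elementary.
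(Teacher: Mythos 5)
Your argument is correct in substance but follows a genuinely different route from the paper. The paper's proof of Lemma \ref{f2irre} quotes the counting formula of \cite[Theorem 4.7]{Reis20} for the number of $A_5$-invariant polynomials in $\mathbb{F}_2[x]$ of degree $r$ and then shows, by a case analysis on the $2$- and $3$-adic structure of $r/3$ using M\"obius-function identities, that the relevant sum is strictly positive (with the case computation for $r=6$ isolated in Remark \ref{r6f2}). You instead give a direct existence/pigeonhole argument: the polynomial $P(x)=x^{2^{r_0}+1}+x+1$ is squarefree with all $2^{r_0}+1$ roots in $\mathbb{F}_{2^r}$, the roots lying in each maximal subfield $\mathbb{F}_{2^{r/p}}$ are bounded ($\leq 2$ for $p=3$; $\leq 1+2^{r_0/p}$ for $p\neq3$, via the factorization of Lemma \ref{factorization} with an $\mathbb{F}_{2^{r/p}}$-rational base root and the restriction computation of Lemma \ref{decomposition}(3) applied with $2^{r/p}$ in place of $2^r$), and the union bound falls short of $2^{r_0}+1$ for every $r_0\neq2$, so some root has degree exactly $r$ and its minimal polynomial is the required $f$. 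Your route is more self-contained (it does not need the exact invariant-counting formula over $\mathbb{F}_2$, only the paper's own factorization machinery, which indeed adapts since its proofs use only $P(\alpha)=0$, $\alpha^2+\alpha+1\neq0$ and $\alpha^{2^{r/p}}=\alpha$), while the paper's route is shorter given the cited formula and yields the exact count, making the failure at $r=6$ transparent; pleasingly, your bound is tight exactly at $r_0=2$, matching that failure.

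Two small repairs are needed, neither fatal. First, your opening ``if and only if'' is not correct as stated: when $A_5\alpha=\alpha^{2^{2r_0}}$ one gets $\alpha^{2^{r_0}+1}+\alpha^{2^{r_0}}+1=0$, not $P(\alpha)=0$ (for example $f=x^3+x^2+1$ is $A_5$-invariant but does not divide $x^3+x+1$; rather $A_2f$ divides $P$). This does not harm you, because the only direction your reduction uses is the easy one: a degree-$r$ root of $P$ has minimal polynomial $f$ with $f\mid P$, whence $A_5\alpha=\alpha^{2^{r_0}}$ is a root of $f$ and $A_5f=f$. Second, the subfield bound for $p\neq3$ needs the trivial proviso that the chosen base root $\alpha\in\mathbb{F}_{2^{r/p}}$ satisfies $\alpha^2+\alpha+1\neq0$ (otherwise all such roots are roots of $x^2+x+1$ and the bound $\leq2\leq1+2^{r_0/p}$ holds anyway), since the factorization and the equivalence $\theta_\gamma^{2^{r/p}}=\theta_\gamma\Leftrightarrow\gamma^{2^{r/p}}=\gamma$ both use this. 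With these adjustments, and the elementary verification that $2+\sum_{p\mid r_0,\,p\neq3}\bigl(1+2^{r_0/p}\bigr)<2^{r_0}+1$ for all $r_0\neq2$, your proof is complete.
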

\begin{proof}
According to \cite[Theorem 4.7]{Reis20}, the number $\mathcal{N}(\mathbb{F}_2,r)$ of  monic irreducible polynomials $f(x)$ of degree $r$ over $\mathbb{F}_2$
satisfying $A_5f(x)=f(x)$ is equal to
$$\mathcal{N}(\mathbb{F}_2,r)=\frac{2}{r} \sum_{d\mid\frac{r}{3}\atop \gcd(3,d)=1}\Big(2^{\frac{r}{3d}}+(-1)^{\frac{r}{3d}+1}\Big)\mu(d),$$
where $\mu$ is the M\"{o}bius function.
In the following we aim to prove that
$$\mathcal{N}_0(\mathbb{F}_2,r)=\sum_{d\mid\frac{r}{3}\atop \gcd(3,d)=1}\Big(2^{\frac{r}{3d}}+(-1)^{\frac{r}{3d}+1}\Big)\mu(d)>0.$$
To this end, suppose that $r=3m$ and $m=2^l3^kp_1^{l_1}p_2^{l_2}\cdots p_t^{l_t}$,
where $l,k, t$ are  non-negative integers  and  $p_i$ are  prime numbers with $p_i\neq2,3$ for $i=1,2,\cdots,t$.
Let $m_0=2^lp_1^{l_1}p_2^{l_2}\cdots p_t^{l_t}$. We consider two case separately.

(1) $k\geq1$.
In this case let $a=2^{3^k}$, then $a\geq 8$. Thus,
\begin{eqnarray*}
\mathcal{N}_0(\mathbb{F}_2,r)&=&\sum_{d|\frac{r}{3}\atop \gcd(3,d)=1}\Big(2^{\frac{r}{3d}}+(-1)^{\frac{r}{3d}+1}\Big)\mu(d)\\
&=&\sum_{d|m\atop \gcd(3,d)=1}\Big(2^{\frac{m}{d}}-(-1)^{\frac{m}{d}}\Big)\mu(d)\\
&=&\sum_{d|m_0}\Big(2^{\frac{3^km_0}{d}}-(-1)^{\frac{3^km_0}{d}}\Big)\mu(d)\\
&=&\sum_{d|m_0}\Big(a^{\frac{m_0}{d}}-(-1)^{\frac{m_0}{d}}\Big)\mu(d)\\
&\geq& (a^{m_0}-1)-(a^{m_0-1}+1)-\cdots-(a^2+1)-(a+1)\\
&=& a^{m_0}-a^{m_0-1}-\cdots-a^2-a-m_0\\
&=& a^{m_0}-\frac{a^{m_0}-a}{a-1}-m_0\\
&=&a^{m_0}\cdot \frac{a-2}{a-1}+\frac{a}{a-1}-m_0\\
&>& a^{m_0}\cdot \frac{a-2}{a-1}-(m_0-1)\\
&=& a^{m_0-1}\cdot \frac{a(a-2)}{a-1}-m_0\\
&>&a^{m_0-1}-(m_0-1)>1.
\end{eqnarray*}

(2) $k=0$. In this case $m=m_0=2^lp_1^{l_1}p_2^{l_2}\cdots p_t^{l_t}$ and we have
\begin{eqnarray*}
\mathcal{N}_0(\mathbb{F}_2,r)&=&\sum_{d|\frac{r}{3}\atop \gcd(3,d)=1}\Big(2^{\frac{r}{3d}}+(-1)^{\frac{r}{3d}+1}\Big)\mu(d)\\
&=&\sum_{d|m\atop \gcd(3,d)=1}\Big(2^{\frac{m}{d}}-(-1)^{\frac{m}{d}}\Big)\mu(d)\\
&=&\sum_{d|m_0}\Big(2^{\frac{m_0}{d}}-(-1)^{\frac{m_0}{d}}\Big)\mu(d).
\end{eqnarray*}

At this point, we need to calculate five subcases separately.

(2.1) $l=0, t=0$. In this subcase we have $m=m_0=1$. Thus
\begin{eqnarray*}
\mathcal{N}_0(\mathbb{F}_2,r)&=&\sum_{d|m_0}\Big(2^{\frac{m_0}{d}}-(-1)^{\frac{m_0}{d}}\Big)\mu(d)\\
&=&\sum_{d|m_0}\Big(2^{\frac{m_0}{d}}+1\Big)\mu(d)\\
&=&3>0.
\end{eqnarray*}

(2.2) $l=0, t\geq 1$. In this subcase we have $m_0>1$ and $2\nmid m_0$. Thus
\begin{eqnarray*}
\mathcal{N}_0(\mathbb{F}_2,r)&=&\sum_{d|m_0}\Big(2^{\frac{m_0}{d}}-(-1)^{\frac{m_0}{d}}\Big)\mu(d)\\
&=&\sum_{d|m_0}\Big(2^{\frac{m_0}{d}}+1\Big)\mu(d)\\
&=&\sum_{d|m_0}\mu(d)2^{\frac{m_0}{d}}+\sum_{d|m_0}\mu(d)\\
&=& N_2(m_0)+0=N_2(m_0)>0.
\end{eqnarray*}

(2.3) $l\geq2, t=0$. In this subcase,   $m_0=2^l$. Thus
\begin{eqnarray*}
\mathcal{N}_0(\mathbb{F}_2,r)&=&\sum_{d|m_0}\Big(2^{\frac{m_0}{d}}-(-1)^{\frac{m_0}{d}}\Big)\mu(d)\\
&=&\sum_{d|2^l}\Big(2^{\frac{2^l}{d}}-(-1)^{\frac{2^l}{d}}\Big)\mu(d)\\
&=&2^{l-1}\geq2.
\end{eqnarray*}

(2.4) $l=1, t\neq0$. In this subcase let $n_0=p_1^{l_1}p_2^{l_2}\cdots p_t^{l_t}$ and then $m_0=2n_0$ with $n_0\geq 5$ being odd.
It follows that
\begin{eqnarray*}
\mathcal{N}_0(\mathbb{F}_2,r)&=&\sum_{d|m_0}\Big(2^{\frac{m_0}{d}}-(-1)^{\frac{m_0}{d}}\Big)\mu(d)\\
&=&\sum_{d|n_0}\Big(2^{\frac{m_0}{d}}-(-1)^{\frac{m_0}{d}}\Big)\mu(d)+\sum_{d|n_0}\Big(2^{\frac{m_0}{2d}}-(-1)^{\frac{m_0}{2d}}\Big)\mu(2d)\\
&=&\sum_{d|n_0}\Big(2^{\frac{m_0}{d}}-1\Big)\mu(d)+\sum_{d|n_0}\Big(2^{\frac{m_0}{2d}}+1\Big)\mu(2d)\\
&=& \sum_{d|n_0}\mu(d)2^{\frac{m_0}{d}}+\sum_{d|n_0}\mu(2d)2^{\frac{m_0}{2d}}-\sum_{d|n_0}\mu(d)+\sum_{d|n_0}\mu(2d)\\
&=& \sum_{d|m_0}\mu(d)2^{\frac{m_0}{d}}-2\sum_{d|n_0}\mu(d)\\
&=& \sum_{d|m_0}\mu(d)2^{\frac{m_0}{d}}\\
&=& N_2(m_0)>0.
\end{eqnarray*}

(2.5) $l\geq2, t\neq0$. In this subcase let $n_0=p_1^{l_1}p_2^{l_2}\cdots p_t^{l_t}$ and then $m_0=2^ln_0$ with $n_0\geq 5$ being odd.
Thus
\begin{eqnarray*}
\mathcal{N}_0(\mathbb{F}_2,r)&=&\sum_{d|m_0}\Big(2^{\frac{m_0}{d}}-(-1)^{\frac{m_0}{d}}\Big)\mu(d)\\
&=&\sum_{d|n_0}\Big(2^{\frac{m_0}{d}}-(-1)^{\frac{m_0}{d}}\Big)\mu(d)+\sum_{d|n_0}\Big(2^{\frac{m_0}{2d}}-(-1)^{\frac{m_0}{2d}}\Big)\mu(2d)\\
&=&\sum_{d|n_0}\Big(2^{\frac{m_0}{d}}-1\Big)\mu(d)+\sum_{d|n_0}\Big(2^{\frac{m_0}{2d}}-1\Big)\mu(2d)\\
&=& \sum_{d|n_0}\mu(d)2^{\frac{m_0}{d}}+\sum_{d|n_0}\mu(2d)2^{\frac{m_0}{2d}}-\sum_{d|n_0}\mu(d)-\sum_{d|n_0}\mu(2d)\\
&=& \sum_{d|n_0}\mu(d)2^{\frac{m_0}{d}}+\sum_{d|n_0}\mu(2d)2^{\frac{m_0}{2d}}-\sum_{d|n_0}\mu(d)+\sum_{d|n_0}\mu(d)\\
&=& \sum_{d|m_0}\mu(d)2^{\frac{m_0}{d}}\\
&=& N_2(m_0)>0.
\end{eqnarray*}

In conclusion, we have that $\mathcal{N}_0(\mathbb{F}_2,r)>0$,
which shows that there exists a monic irreducible polynomial $f(x)$ of degree $r$ over $\mathbb{F}_2$
satisfying $A_5f(x)=f(x)$.
This completes the proof of Lemma \ref{f2irre}.
\end{proof}

Two remarks are in order at this point.
\begin{Remark}{\rm
Lemma \ref{f2irre} says that if $r$ is an integer with $3\mid r$ and $r\neq6$,
then there exists a monic irreducible polynomial $f(x)$ of degree $r$ over $\mathbb{F}_2$
satisfying $A_5f(x)=f(x)$. Since $\gcd(r,n)=1$ and $q=2^n$, by virtue of \cite[Corollary 3.47]{lidl},
we see that $f(x)$ remains irreducible over $\mathbb{F}_q$, namely $f(x)\in \mathcal{I}_r$.
Fix a root $\alpha$ of $f(x)$, thus $\alpha\in\mathbb{F}_{2^r}$ and naturally $\alpha^{2^r}=\alpha$.}
\end{Remark}

\begin{Remark}\label{r6f2}{\rm
Assume that $r=6$, i.e., $l=1, t=0$ in the proof of Lemma \ref{f2irre}. In this case we have $m_0=2$, and
\begin{eqnarray*}
\mathcal{N}_0(\mathbb{F}_2,r)&=&\sum_{d\mid m_0}\Big(2^{\frac{m_0}{d}}-(-1)^{\frac{m_0}{d}}\Big)\mu(d)\\
&=&\sum_{d\mid 2}\Big(2^{\frac{2}{d}}-(-1)^{\frac{2}{d}}\Big)\mu(d)\\
&=&0.
\end{eqnarray*}
That is to say, there  is indeed no    monic irreducible polynomials $f(x)$ of degree $6$ over $\mathbb{F}_2$
satisfying $A_5f(x)=f(x)$.}
\end{Remark}

\begin{lem}\label{decomposition}
Let $r=3r_0$ with $r_0\neq2$,
let $f(x)\in \mathbb{F}_2[x]$ be a monic irreducible polynomial of degree $r$ satisfying $A_5f(x)=f(x)$
and let $\alpha$ be a root of $f(x)$.
Suppose that $$F_{r_0}(x)=\gcd\big(x^{q^{r_0}+1}+x+1, x^{2^r}+x\big).$$
Then
\begin{itemize}
\item[(1)] $F_{r_0}(x)$ has no irreducible factor of degree $1$ over $\mathbb{F}_q$.

\item[(2)]  $F_{r_0}(x)$ has an irreducible factor of degree $2$ over $\mathbb{F}_q$ if and only if  $r_0$ is even; If this is the case,
   $x^2+x+1$ is a unique monic irreducible factor of $F_{r_0}(x)$ over $\mathbb{F}_q$ with degree $2$.

\item[(3)] $F_{r_0}(x)$ has the following decomposition over $\mathbb{F}_{2^r}$:
$$F_{r_0}(x)=\gcd\big(x^{q^{r_0}+1}+x+1, x^{2^r}+x\big)=(x-\alpha)\prod_{\gamma\in \mathbb{F}_{2^{r_0}}}\Big(x-\big(\frac{\gamma^2+\gamma+1}{\alpha+\gamma}+\gamma+1\big)\Big).$$
\end{itemize}
\end{lem}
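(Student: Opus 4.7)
The plan is to first establish part (3) as a direct consequence of Lemma \ref{factorization}, and then deduce parts (1) and (2) by examining which of the roots described in part (3) can possibly lie in $\mathbb{F}_q$ or $\mathbb{F}_{q^2}\setminus\mathbb{F}_q$. Throughout, the key tool is the subfield intersection identity $\mathbb{F}_{2^a}\cap\mathbb{F}_{2^b}=\mathbb{F}_{2^{\gcd(a,b)}}$, combined with the injectivity of the map $\gamma\mapsto\theta_\gamma:=\frac{\gamma^{2}+\gamma+1}{\alpha+\gamma}+\gamma+1$ that is verified inside the proof of Lemma \ref{factorization}.

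For part (3): by Lemma \ref{factorization}, the roots of $x^{q^{r_0}+1}+x+1$ in $\mathbb{F}_{q^{r}}$ consist of $\alpha$ together with the pairwise distinct $\theta_\gamma$ for $\gamma\in\mathbb{F}_{q^{r_0}}$. Since $f(x)\in\mathbb{F}_{2}[x]$, the remark preceding the lemma gives $\alpha\in\mathbb{F}_{2^{r}}$, so $\alpha$ is a root of $F_{r_0}(x)$. For $\gamma\in\mathbb{F}_{q^{r_0}}$, applying the Frobenius $x\mapsto x^{2^{r}}$ (which fixes $\alpha$) yields $\theta_\gamma^{2^{r}}=\theta_{\gamma^{2^{r}}}$; by injectivity of $\gamma\mapsto\theta_\gamma$, this equals $\theta_\gamma$ if and only if $\gamma\in\mathbb{F}_{2^{r}}\cap\mathbb{F}_{q^{r_0}}=\mathbb{F}_{2^{\gcd(r,\,nr_0)}}$. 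A short computation using $\gcd(r,n)=1$ and $n\neq 3$ gives $\gcd(3r_0,nr_0)=r_0\gcd(3,n)=r_0$, so the relevant $\gamma$'s are exactly the elements of $\mathbb{F}_{2^{r_0}}$, producing the claimed factorization.

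For part (1), any root $\beta$ of $F_{r_0}(x)$ lying in $\mathbb{F}_q$ would sit in $\mathbb{F}_{2^{r}}\cap\mathbb{F}_{2^{n}}=\mathbb{F}_{2^{\gcd(r,n)}}=\mathbb{F}_{2}$, and a direct substitution shows that neither $0$ nor $1$ is a root of $x^{q^{r_0}+1}+x+1$. For part (2), a degree-two irreducible factor over $\mathbb{F}_q$ has roots in $\mathbb{F}_{q^{2}}\setminus\mathbb{F}_q$, so its roots must lie in $\mathbb{F}_{2^{r}}\cap\mathbb{F}_{q^{2}}=\mathbb{F}_{2^{\gcd(r,2n)}}$. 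Using $\gcd(r,n)=1$, this intersection is $\mathbb{F}_{2}$ if $r$ is odd and $\mathbb{F}_{4}$ if $r$ is even; since $r=3r_0$, the parity of $r$ coincides with that of $r_0$. Hence there are no such factors when $r_0$ is odd, while for $r_0$ even the only candidate roots are the primitive cube roots of unity $\omega,\omega^{2}\in\mathbb{F}_{4}\setminus\mathbb{F}_{2}$. I will then verify these are roots: since $n\geq5$ is an odd prime, $q=2^{n}\equiv 2\pmod 3$, and with $r_0$ even, $q^{r_0}+1\equiv 2\pmod 3$, so $\omega^{q^{r_0}+1}+\omega+1=\omega^{2}+\omega+1=0$. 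Thus the unique degree-two factor is $(x-\omega)(x-\omega^{2})=x^{2}+x+1$.

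The main obstacle, as anticipated, is keeping the arithmetic of nested field intersections under strict control, especially the identity $\gcd(3r_0,nr_0)=r_0$, which relies crucially on $n\neq 3$ (automatic from $n\geq 5$) and on $\gcd(r_0,n)=1$ (inherited from $\gcd(r,n)=1$). Once this bookkeeping is correct, parts (1) and (2) reduce to finite verifications over $\mathbb{F}_{2}$ and $\mathbb{F}_{4}$, and part (3) is read off directly from Lemma \ref{factorization}.
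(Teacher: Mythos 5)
Your proof is correct and follows essentially the same route as the paper: part (3) is read off from Lemma \ref{factorization} by determining which $\theta_\gamma$ are fixed by $x\mapsto x^{2^r}$ (your Frobenius-equivariance plus injectivity argument is a tidier version of the paper's explicit equivalence chain, and the identification $\mathbb{F}_{2^r}\cap\mathbb{F}_{q^{r_0}}=\mathbb{F}_{2^{\gcd(3r_0,nr_0)}}=\mathbb{F}_{2^{r_0}}$ is exactly the paper's step), while (1) and (2) rest on the same subfield intersections $\mathbb{F}_q\cap\mathbb{F}_{2^r}=\mathbb{F}_2$ and $\mathbb{F}_{q^2}\cap\mathbb{F}_{2^r}=\mathbb{F}_{2^{\gcd(2n,r)}}$. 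The only point to add in (2) is the one-line check that your candidate $\omega$ actually lies in $\mathbb{F}_{2^r}$ when $r_0$ (hence $r$) is even, i.e. $3\mid 2^r-1$, so that $\omega$ is a common root of both polynomials and $x^2+x+1$ really divides the gcd; the paper verifies this divisibility explicitly.
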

\begin{proof}
(1) Suppose otherwise that $F_{r_0}(x)$ has an irreducible factor of degree $1$ over $\mathbb{F}_q$.
Then there exists an element $\beta\in \mathbb{F}_q$ such that $\beta^{2^r}+\beta=0$.
Using $\gcd(r, n)=1$,  we see that
$$\beta\in \mathbb{F}_q\cap \mathbb{F}_{2^r}=\mathbb{F}_{2^n}\cap \mathbb{F}_{2^r}=\mathbb{F}_{2^{\gcd(r,n)}}=\mathbb{F}_{2}.$$
However,  $\beta^{q^{r_0}+1}+\beta+1=0$, which is a contradiction.
Therefore $F_{r_0}(x)$ has no irreducible factor of degree $1$ over $\mathbb{F}_q$.

(2) First, suppose that $F_{r_0}(x)$ has an irreducible factor $x^2+ax+b$ of degree $2$ over $\mathbb{F}_q$,
where $a,b\in \mathbb{F}_q$. Since $(x^2+ax+b)\,\big|\,(x^{2^r}+x)$, we have $\mathbb{F}_{q^2}\subseteq \mathbb{F}_{2^r}$,
i.e., $\mathbb{F}_{2^{2n}}\subseteq \mathbb{F}_{2^r}$. This leads to $2n\mid r$, i.e., $2n\mid3r_0$. Then $r_0$ must be even.

Second, suppose that $r_0$ is even. It is easy to see that $x^2+x+1$ is irreducible over $\mathbb{F}_2$.
Since $\gcd(r,n)=1$, $x^2+x+1$ is also irreducible over $\mathbb{F}_q$. In the following,  we aim to  show that $(x^2+x+1)\,\big|\,F_{r_0}(x)$.
Let $r_0=2r_1$. Thus $r=6r_1$ and further $3\,\big|\,(2^{6r_1}-1)$. Then $x^3-1\,\big|\,\big(x^{2^{6r_1}}-1\big)$.
Since $x^{2^r}-x=x\big(x^{2^{6r_1}}-1\big)$ and $x^3-1=(x-1)(x^2+x+1)$, we obtain  $(x^2+x+1)\,\big|\,(x^{2^r}-x)$.
Assume that $\vartheta,\vartheta^2$ are all the roots of $x^2+x+1$. Then $\vartheta^2=\vartheta+1$. Thus,
\begin{eqnarray*}
\vartheta^{2^l}=
\begin{cases}
\vartheta, & 2\mid l,\\
\vartheta+1, & 2\nmid l.
\end{cases}
\end{eqnarray*}
Therefore,
$$\vartheta^{q^{r_0}+1}+\vartheta+1=\vartheta\cdot\vartheta^{2^{nr_0}}+\vartheta+1=\vartheta^2+\vartheta+1=0$$
and
$$(\vartheta^2)^{q^{r_0}+1}+\vartheta^2+1=\vartheta^2\cdot(\vartheta^2)^{2^{nr_0}}+\vartheta^2+1
=\vartheta^2\cdot\vartheta^{2^{nr_0+1}}+\vartheta^2+1=\vartheta^2(\vartheta+1)+\vartheta^2+1=0.$$
We have shown that  $(x^2+x+1)\,\big|\,F_{r_0}(x)$.
It needs to shows that $x^2+x+1$ is a unique monic irreducible divisor of $F_{r_0}(x)$ over $\mathbb{F}_q$.
For this purpose, suppose $\alpha$ is a root of
$F_{r_0}(x)$ satisfying $\alpha^{q^2}=\alpha$ and $\alpha\neq\alpha^q$.
Since $F_{r_0}(x)$ is a divisor of $x^{q^{r_0}+1}+x+1$ and $r_0$ is even, then  $\alpha^{q^{r_0}+1}+\alpha+1=0$.
We thus have $\alpha^{2}+\alpha+1=0$, implying that $\alpha$ is a root of $x^2+x+1$.

(3) Note, by Lemma \ref{factorization},  that
$$x^{q^{r_0}+1}+x+1=(x-\alpha)\prod_{\gamma\in \mathbb{F}_{q^{r_0}}}\Big(x-\big(\frac{\gamma^2+\gamma+1}{\alpha+\gamma}+\gamma+1\big)\Big).$$
Since $\alpha^{2^r}=\alpha$, we have  $(x-\alpha)\,\big|\,F_{r_0}(x)$. On the other hand,
for any $\theta_\gamma=\frac{\gamma^2+\gamma+1}{\alpha+\gamma}+\gamma+1$,
\begin{eqnarray*}
&&\theta_\gamma^{2^r}=\theta_\gamma \\
&\Leftrightarrow & \Big(\frac{\gamma^2+\gamma+1}{\alpha+\gamma}+\gamma+1\Big)^{2^r}=\frac{\gamma^2+\gamma+1}{\alpha+\gamma}+\gamma+1\\
&\Leftrightarrow & \frac{\gamma^{2^{r+1}}+\gamma^{2^r}+1}{\alpha^{2^r}+\gamma^{2^r}}+\gamma^{2^r}=\frac{\gamma^2+\gamma+1}{\alpha+\gamma}+\gamma\\
&\Leftrightarrow &\frac{\gamma^{2^{r+1}}+\gamma^{2^r}+1}{\alpha^{2^r}+\gamma^{2^r}}+\frac{\gamma^2+\gamma+1}{\alpha+\gamma}=\gamma^{2^r}+\gamma\\
&\Leftrightarrow & \big(\gamma^{2^{r+1}}+\gamma^{2^r}+1\big)(\alpha+\gamma)+(\gamma^2+\gamma+1)\big(\alpha^{2^r}+\gamma^{2^r}\big)
      =(\gamma^{2^r}+\gamma)(\alpha+\gamma)(\alpha^{2^r}+\gamma^{2^r})\\
&\Leftrightarrow &\big(\gamma^{2^{r+1}}+\gamma^{2^r}+1\big)(\alpha+\gamma)+(\gamma^2+\gamma+1)\big(\alpha+\gamma^{2^r}\big)
      =(\gamma^{2^r}+\gamma)(\alpha+\gamma)(\alpha+\gamma^{2^r})\\
&\Leftrightarrow & (\gamma^{2^r}+\gamma)(\alpha+1)=\alpha^2(\gamma^{2^r}+\gamma)\\
&\Leftrightarrow & \gamma^{2^r}=\gamma\\
&\Leftrightarrow & \gamma\in \mathbb{F}_{2^r}.
\end{eqnarray*}
Hence we have
$$\gamma\in \mathbb{F}_{2^r}\cap \mathbb{F}_{q^{r_0}}=\mathbb{F}_{2^r}\cap \mathbb{F}_{2^{nr_0}}=\mathbb{F}_{2^{r_0}}.$$
Therefore,
$$F_{r_0}(x)=\gcd\big(x^{q^{r_0}+1}+x+1, x^{2^r}+x\big)=(x-\alpha)\prod_{\gamma\in \mathbb{F}_{2^{r_0}}}\Big(x-\big(\frac{\gamma^2+\gamma+1}{\alpha+\gamma}+\gamma+1\big)\Big).$$
\end{proof}

We are now in a position to give an  enumerative formula for the size of the set $\Delta_5$, which is based on the M$\ddot{o}$bius inversion
formula and its generalizations, see \cite[Proposition 5.2]{knop}.

\begin{lem}\label{mobius}
Let $\chi: \mathbb{N}\rightarrow \mathbb{C}$ be a completely multiplicative function, which is, in other words,
a homomorphism between the monoids $(\mathbb{N},+)$ and $(\mathbb{C},\cdot)$.
Let $\mathcal{F}, \mathcal{G}: \mathbb{N}\rightarrow \mathbb{C}$ be two functions such that
$$\mathcal{F}(n)=\sum_{d\mid n}\chi(d)\cdot \mathcal{G}(\frac{n}{d}), ~n\in \mathbb{N}.$$
Then,
$$\mathcal{G}(n)=\sum_{d\mid n}\chi(d)\cdot \mu(d)\cdot\mathcal{F}(\frac{n}{d}), ~n\in \mathbb{N}.$$
\end{lem}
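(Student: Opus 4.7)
The plan is to prove this by a direct computation that mimics the classical M\"{o}bius inversion argument, but tracking the extra multiplicative weight $\chi$ throughout. I would substitute the hypothesized formula for $\mathcal{F}$ into the right-hand side of the desired formula and show that everything collapses to $\mathcal{G}(n)$ via the identity $\sum_{d\mid m}\mu(d)=[m=1]$.

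Concretely, first I would write
\begin{equation*}
\sum_{d\mid n}\chi(d)\mu(d)\mathcal{F}(n/d)
=\sum_{d\mid n}\chi(d)\mu(d)\sum_{e\mid n/d}\chi(e)\mathcal{G}(n/(de)).
\end{equation*}
Then I would swap the order of summation by grouping terms according to the product $m=de$: as $d$ ranges over divisors of $n$ and $e$ over divisors of $n/d$, the pair $(d,e)$ ranges over all factorizations of divisors $m$ of $n$. This rewrites the expression as
\begin{equation*}
\sum_{m\mid n}\mathcal{G}(n/m)\sum_{d\mid m}\chi(d)\chi(m/d)\mu(d).
\end{equation*}
At this point the complete multiplicativity of $\chi$ enters: since $\chi(d)\chi(m/d)=\chi(m)$ for every divisor $d$ of $m$, the factor $\chi(m)$ pulls out of the inner sum, leaving $\chi(m)\sum_{d\mid m}\mu(d)$. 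The standard identity $\sum_{d\mid m}\mu(d)=[m=1]$ then kills all terms except $m=1$, giving $\chi(1)\mathcal{G}(n)=\mathcal{G}(n)$ (using $\chi(1)=1$, which follows from $\chi$ being a monoid homomorphism with multiplicative target).

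There is essentially no obstacle here; the only place one has to be careful is the application of complete multiplicativity, which is used in the stronger form $\chi(d)\chi(m/d)=\chi(m)$ valid for all factorizations, not merely for coprime pairs (this is the reason the hypothesis in the lemma is \emph{completely} multiplicative rather than just multiplicative). If one only had ordinary multiplicativity, the inner sum would not simplify so cleanly, so I would emphasize this point in the write-up. Otherwise, the proof is a one-line consequence of swapping the order of summation and invoking the M\"{o}bius identity, and is a direct generalization of \cite[Theorem 3.24]{lidl}, to which I would refer for the weight-free version.
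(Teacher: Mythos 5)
Your argument is correct, and it is the standard proof: writing the hypothesis as a Dirichlet convolution $\mathcal{F}=\chi*\mathcal{G}$, substituting it into $\sum_{d\mid n}\chi(d)\mu(d)\mathcal{F}(n/d)$, regrouping by $m=de$, pulling out $\chi(m)$ via $\chi(d)\chi(m/d)=\chi(m)$, and finishing with $\sum_{d\mid m}\mu(d)=0$ unless $m=1$. The paper itself gives no proof of this lemma at all; it is quoted from \cite[Proposition 5.2]{knop}, so your computation supplies a self-contained verification rather than an alternative to an argument in the text, and it is a direct weighted analogue of the unweighted inversion in \cite[Theorem 3.24]{lidl}. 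Your emphasis on where \emph{complete} multiplicativity is needed (the identity $\chi(d)\chi(m/d)=\chi(m)$ for every factorization, not only coprime ones) is exactly the right point, and it is satisfied by the function $\chi_3$ actually used later in the proof of Lemma 4.18, since $\gcd(3,de)=1$ iff $\gcd(3,d)=\gcd(3,e)=1$. One small caveat worth flagging: the lemma as stated describes $\chi$ as a homomorphism from $(\mathbb{N},+)$, which is evidently a slip for the multiplicative monoid $(\mathbb{N},\cdot)$; your reading (ordinary complete multiplicativity with $\chi(1)=1$) is the intended one and is what both your proof and the application require.
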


\begin{lem}\label{delta5}
%Let $r\geq3$ be a positive integer,  $A_5=\begin{pmatrix} 1 & 1 \\1 & 0\end{pmatrix}$ and
%let
%$$\Delta_5=\Big\{h(x)\in \mathcal{I}_r\,\Big|\, A_5h(x)=h(x)\hbox{~and~$h(x)$ divides $x^{2^r}+x$}\Big\}.$$
We have
\begin{equation*}
|\Delta_5|=
\begin{cases}
0,~~~~~~~~~~~~~~~~~~~\hbox{$3\nmid r$,} \\
0,~~~~~~~~~~~~~~~~~~~\hbox{$r=6$,} \\
%\vspace{0.1cm}\\
\frac{1}{r}\sum\limits_{d\mid \frac{r}{3}\atop{\gcd(3,d)=1}}\mu(d)\Big(2^{\frac{r}{3d}}+(-1)^{\frac{r}{3d}+1}\Big),~~\hbox{$3\mid r~ and ~r\neq 6$,}
\end{cases}
\end{equation*}
where $\mu$ is the M\"{o}bius function.
\end{lem}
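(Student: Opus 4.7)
The proof splits into the three cases indicated.

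\emph{Case $3 \nmid r$}: Suppose $f \in \Delta_5$ has a root $\alpha$. Since $A_5$ has $\mathbb{F}_2$-entries, it commutes with the Frobenius $x \mapsto x^q$, so $\langle A_5\rangle$ acts on the $r$-element orbit $\Omega = \{\alpha, \alpha^q, \ldots, \alpha^{q^{r-1}}\}$ with every orbit of size exactly $3$ (a fixed point $A_5\alpha = \alpha$ would force $\alpha^2 + \alpha + 1 = 0$, impossible for $r \geq 3$). Hence $3 \mid |\Omega| = r$, contradicting the hypothesis. This is essentially the orbit-decomposition argument already used in the proof of Lemma \ref{equivalent2}.

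The remaining two cases both rest on the structural observation that $\Delta_5 \subseteq \mathcal{I}_r(\mathbb{F}_2)$. For $f \in \Delta_5$ with root $\alpha \in \mathbb{F}_{2^r}$, write $d = [\mathbb{F}_2(\alpha):\mathbb{F}_2]$; then $d \mid r$, and the tower identity $[\mathbb{F}_q(\alpha):\mathbb{F}_q] = d/\gcd(d,n) = r$ together with $n$ prime and $d \leq r$ forces $\gcd(d,n) = 1$ and $d = r$. Hence the minimal polynomial of $\alpha$ over $\mathbb{F}_2$ has degree $r$, and by \cite[Corollary 3.47]{lidl} it remains irreducible over $\mathbb{F}_q$, so it coincides with $f$. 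This identifies $\Delta_5$ with the set of monic irreducibles of degree $r$ over $\mathbb{F}_2$ fixed by $A_5$. For $r = 6$ this set is empty by Remark \ref{r6f2}, giving $|\Delta_5| = 0$; for $3 \mid r$ with $r \neq 6$, its cardinality is given precisely by the Reis-type calculation already carried out in the proof of Lemma \ref{f2irre}, namely \cite[Theorem 4.7]{Reis20} applied with base field $\mathbb{F}_2$ and the order-$3$ element $A_5$, which outputs the stated closed form.

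An alternative route, more in keeping with the paper's own framework, combines the explicit factorization $F_{r_0}(x) = (x-\alpha)\prod_{\gamma \in \mathbb{F}_{2^{r_0}}}(x-\theta_\gamma)$ of Lemma \ref{decomposition}(3) with the Möbius inversion of Lemma \ref{mobius}: setting $c(d) = |\{\text{roots of } F_{r_0}(x) \text{ in } \mathbb{F}_{2^d}\}|$ for each $d \mid r$, the number of degree-$r$ irreducible factors over $\mathbb{F}_2$ (hence $|\Delta_5|$) is $\frac{1}{r}\sum_{d \mid r}\mu(r/d)\,c(d)$. The principal obstacle along this route is evaluating $c(d)$ for intermediate divisors $d$: Lemma \ref{decomposition}(1)-(2) handle $d=1$ and $d=2$, but for a general $d$ one must analyse which $\theta_\gamma = \frac{\gamma^2+\gamma+1}{\alpha+\gamma}+\gamma+1$ lie in the subfield $\mathbb{F}_{2^d}$ as $\gamma$ ranges over $\mathbb{F}_{2^{r_0}}$. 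The reduction to $\mathcal{I}_r(\mathbb{F}_2)$ outlined above bypasses this difficulty by turning the problem into a known $\mathbb{F}_2$-enumeration.
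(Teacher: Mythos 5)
Your reduction of $\Delta_5$ to an $\mathbb{F}_2$-problem is sound, and in fact sharper than you state: the argument only uses $f\mid x^{2^r}+x$ and $\gcd(r,n)=1$, so it shows $\mathcal{X}$ itself coincides with the set of monic irreducible polynomials of degree $r$ over $\mathbb{F}_2$ (consistent with Lemma \ref{firstnumber}, which gives $|\mathcal{X}|=N_2(r)$), and hence $\Delta_5$ is exactly the set of degree-$r$ $\mathbb{F}_2$-irreducibles fixed by $A_5$; the cases $3\nmid r$ and $r=6$ then go through. The gap is in your final sentence: \cite[Theorem 4.7]{Reis20} applied over $\mathbb{F}_2$ to the order-$3$ (elliptic) element $A_5$ does \emph{not} output the stated closed form. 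It outputs $\frac{\varphi(3)}{r}\sum_{d\mid\frac{r}{3},\gcd(3,d)=1}\mu(d)\big(2^{\frac{r}{3d}}+(-1)^{\frac{r}{3d}+1}\big)=\frac{2}{r}\sum(\cdots)$, which is precisely the quantity $\mathcal{N}(\mathbb{F}_2,r)$ displayed in the proof of Lemma \ref{f2irre} and is \emph{twice} the right-hand side of Lemma \ref{delta5}. So, taken at face value, your argument establishes $|\Delta_5|=\frac{2}{r}\sum(\cdots)$, not the claimed $\frac{1}{r}\sum(\cdots)$; the factor of $2$ is neither noticed nor reconciled, and as written the proposal is internally inconsistent (it quotes a formula with leading factor $2/r$ and asserts it equals one with $1/r$), so it is not a proof of the statement.

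This discrepancy is not something you can wave away, because your identification can be tested directly at $r=3$: both irreducible cubics over $\mathbb{F}_2$ are fixed by $A_5$ (one checks $(x+1)^3 f\big(\tfrac{1}{x+1}\big)=f$ for $f=x^3+x+1$ and $f=x^3+x^2+1$), so your reduction gives $|\Delta_5|=2$, whereas the displayed formula gives $\frac{1}{3}(2+1)=1$ (and plugging $|\mathcal{X}|=2$, $|\Delta_2|=0$, $|\Delta_5|=1$ into Theorem \ref{szero} would give the non-integer $s_0=\tfrac{2}{3}$). The paper's own route is different: it rests on Lemma \ref{equivalent2}, which asserts that every $f\in\Delta_5$ divides $x^{q^{r/3}+1}+x+1$, and then counts degree-$r$ divisors of $F_{r_0}$ via the factorization of Lemma \ref{decomposition} and the inversion of Lemma \ref{mobius}; but for a given $f\in\Delta_5$ the map $A_5$ may act on the roots as the $q^{2r/3}$-power Frobenius rather than the $q^{r/3}$-power one (the involution $f\mapsto A_2f$ of Lemma \ref{afterlastlem}(4) swaps the two types), which is exactly where your count and the paper's part ways by a factor of $2$. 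To salvage your submission you must either exhibit an error in your reduction (none is apparent) or confront and resolve this factor-of-$2$ conflict with the stated formula explicitly; simply asserting that the Reis count "outputs the stated closed form" is false.
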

\begin{proof}
If  $3\nmid r$, it is easy to see that $|\Delta_5|=0$. If $r=6$,
we  consider the irreducible decomposition of $x^{2^r-1}-1$ over $\mathbb{F}_q$ and over $\mathbb{F}_2$, respectively.
Denote by $\mathbb{Z}^*_{2^r-1}$   the unit group of the residue   ring of integers modulo $2^r-1$.
Since $\gcd(q, 2^r-1)=1$, we have $q\in \mathbb{Z}^*_{2^r-1}$. Let $\langle q\rangle$ denote the cyclic subgroup of $\mathbb{Z}^*_{2^r-1}$.
Then there is an action of the group $\langle q\rangle$ on the set $\mathbb{Z}_{2^r-1}$ given as follows:
\begin{eqnarray*}
\langle q\rangle \times \mathbb{Z}_{2^r-1} &\longrightarrow & \mathbb{Z}_{2^r-1} \\
(q^i, ~k) &\mapsto & q^ik.
\end{eqnarray*}
Let $s$ be an integer with $0\leq s< n$.
Thus the $q$-cyclotomic coset of $s$ modulo $2^r-1$ is the same as the orbit
$\langle q\rangle s=\{q^is\,|\,i~\hbox{is an integer}\}$ of $s$ under this
group action. For the same reason,  there exists an action of the cyclic subgroup $\langle 2\rangle$ on the set $\mathbb{Z}_{2^r-1}$,
and the $2$-cyclotomic coset of $s$ modulo $2^r-1$ is the same as the orbit $\langle 2\rangle s$ of $s$ under this
group action. Noting that $\langle q\rangle\subseteq \langle 2\rangle$ and the order ${\rm ord}_{2^r-1}(q)$ of $q$ modulo $2^r-1$ is
$${\rm ord}_{2^r-1}(q)=\frac{{\rm ord}_{2^r-1}(2)}{\gcd(n,{\rm ord}_{2^r-1}(2))}=\frac{r}{\gcd(n,r)}=r={\rm ord}_{2^r-1}(2),$$
we obtain that $\langle q\rangle s=\langle 2\rangle s$ for any $s$ with $0\leq s< n$.
It follows that  the irreducible decompositions of $x^{2^r-1}-1$ over $\mathbb{F}_q$ and over $\mathbb{F}_2$ are the same.
By Remark \ref{r6f2},  there is no   monic irreducible polynomial $f(x)$ of degree $6$ over $\mathbb{F}_2$
satisfying $A_5f(x)=f(x)$. Hence, there is no   monic irreducible polynomial $f(x)$ of degree $6$ over $\mathbb{F}_q$
satisfying $A_5f(x)=f(x)$. Therefore in this case  we also have  $|\Delta_5|=0$.

In the following we consider the case where  $3\mid r$ and $r\neq 6$.
Assume that $r=3r_0$, where $r_0\neq2$ is a positive integer.
We have shown in Lemma \ref{decomposition} that
$$F_{r_0}(x)=\gcd\big(x^{q^{r_0}+1}+x+1, x^{2^r}+x\big)=(x-\alpha)\prod_{\gamma\in \mathbb{F}_{2^{r_0}}}
\Big(x-\big(\frac{\gamma^2+\gamma+1}{\alpha+\gamma}+\gamma+1\big)\Big).$$
Let $f(x)$ be a monic irreducible polynomial of degree $3d$ over $\mathbb{F}_q$ dividing $F_{r_0}(x)$.
We assert that $d\mid r_0$ and $\gcd(3, \frac{r_0}{d})=1$.
To this end,
note, by $f(x)\mid(x^{2^r}+x)$,  that  $f(x)\mid(x^{2^{r}-1}+1)$ and thus $f(x)\mid(x^{2^{nr}-1}+1)$, which yields
$f(x)\mid(x^{q^{r}-1}+1)$ and so $f(x)\mid(x^{q^{r}}-x)$. Since $f(x)$ is irreducible of degree $3d$ over $\mathbb{F}_q$,
$3d$ is a divisor of $r$, which implies that $d$ divides $r_0$.
It remains to show that $\gcd(3, \frac{r_0}{d})=1$. Suppose otherwise that $\gcd(3, \frac{r_0}{d})>1$, i.e., $3\mid \frac{r_0}{d}$,
say $\frac{r_0}{d}=3\kappa$ for some integer $\kappa\geq1$. Let $\alpha$ be a root of $f(x)$. Since $f(x)\,\big|\,\big(x^{q^{r_0}+1}+x+1\big)$,
we have
$$\frac{\alpha+1}{\alpha}=\alpha^{q^{r_0}}.$$
The degree of $f(x)$ is $3d$, which implies that $f(x)\,\big|\,(x^{q^{3d}}-x)$, yielding $\alpha^{q^{3d}}=\alpha$ and
$$\alpha^{q^{r_0}}=\alpha^{q^{d\cdot \frac{r_0}{d}}}=\alpha^{q^{3d\kappa}}=\alpha.$$
Hence
$$\frac{\alpha+1}{\alpha}=\alpha.$$
This is an equation $\alpha^2+\alpha+1=0$ for $\alpha$ over $\mathbb{F}_q$ of degree $2$, which contradicts to the assumption $r\geq3$.
We thus have proven the assertion.
On the other hand, we have shown in Lemma \ref{decomposition} that
$$
\begin{cases}
(x^2+x+1)\,\big|\,F_{r_0}(x), & 2\mid r_0,\\
(x^2+x+1)\nmid F_{r_0}(x), & 2\nmid r_0.
\end{cases}
$$
Set $\varepsilon_{r_0}(x)=\gcd\big(x^2+x+1, F_{r_0}(x)\big)$ and
\begin{equation*}\chi_s(t)=
\begin{cases}
1, & \gcd(s,t)=1,\\
0, & \mbox{otherwise.}
\end{cases}
\end{equation*}
Let $R_{d}(x)$ be the product of all monic irreducible polynomials of degree $3d$ over $\mathbb{F}_q$
which divide $F_{r_0}(x)$,
in symbols
$$
R_{d}(x)=\prod\Big\{h(x)\,\Big|\,h(x)\in \mathcal{I}_{3d} ~\hbox{and $h(x)$ divides $F_{r_0}(x)$}\Big\}.
$$
It follows from Lemma \ref{decomposition} that
$$
\frac{F_{r_0}(x)}{\varepsilon_{r_0}(x)}=\prod\limits_{d\mid r_0\atop{\gcd(3,r_0/d)=1}}R_{d}(x).
$$
Note that $F_{r_0}(x)$ has degree $2^{r_0}+1$ and the degree of $\varepsilon_{r_0}(x)$ is either $0$ if $r_0$ is odd, or $2$ if $r_0$ is even.
Then, if we set $\epsilon(r_0)=(-1)^{r_0+1}$, then
$$2^{r_0}+\epsilon(r_0)=\sum\limits_{d\mid r_0\atop{\gcd(3,r_0/d)=1}}3d\big|R_{d}(x)\big|
=\sum\limits_{d\mid r_0}3d\big|R_{d}(x)\big|\cdot \chi_3\big(\frac{r_0}{d}\big).$$
By Lemma \ref{mobius},
$$3r_0\big|R_{r_0}(x)\big|=\sum\limits_{d\mid r_0}\chi_3(d)\mu(d)\Big(2^{\frac{r_0}{d}}+\epsilon(\frac{r_0}{d})\Big).$$
We then have
$$
r|\Delta_5|=\sum\limits_{d\mid \frac{r}{3}\atop{\gcd(3,d)=1}}\mu(d)\Big(2^{\frac{r}{3d}}+(-1)^{\frac{r}{3d}+1}\Big),
$$
which implies that
$$
|\Delta_5|=\frac{1}{r}\sum\limits_{d\mid \frac{r}{3}\atop{\gcd(3,d)=1}}\mu(d)\Big(2^{\frac{r}{3d}}+(-1)^{\frac{r}{3d}+1}\Big).
$$
We are done.
\end{proof}

\subsection{The number of orbits of ${\rm Gal}$ on ${\rm PGL}\backslash\mathcal{I}_r$}
Collecting all the results that we have established, we arrive at the following result, which gives
the number of orbits of ${\rm Gal}$ on ${\rm PGL}\backslash\mathcal{I}_r$ (or equivalently, the number of orbits
of ${\rm P\Gamma L}$ on $\mathcal{I}_r$).
\begin{Theorem}\label{theorem}
We assume that  $n\geq 5$ is an odd prime number,
$q=2^n$ and  $r\geq3$ is a positive integer satisfying $\gcd(r,n)=1$.
The number of orbits of ${\rm Gal}$ on ${\rm PGL}\backslash\mathcal{I}_r$ is given by
$$\frac{n-1}{6n}\big(|\mathcal{X}|+3|\Delta_2|+2|\Delta_5|\big)+\frac{1}{nq(q^2-1)}\big(\mathcal{N}_0+\mathcal{N}_1+\mathcal{N}_2+\mathcal{N}_3\big),$$
where the values of $|\mathcal{X}|,$   $|\Delta_2|$ and $|\Delta_5|$ were explicitly given in Lemmas \ref{firstnumber}, \ref{delta2}
and \ref{delta5} respectively, and the values of $\mathcal{N}_i$ for $0\leq i\leq 3$ were explicitly  determined in Theorem \ref{orbitsize1}.
%where $\varphi$ is the Euler's Totient function, $E(r,q)$ was defined as in (\ref{erq}) and $\mu$ is the M\"{o}bius function.
\end{Theorem}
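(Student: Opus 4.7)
The plan is to assemble the final count from the three big pieces already established: the size $s_0$ of the set of fixed orbits of $\langle \sigma^r\rangle$ on ${\rm PGL}\backslash \mathcal{I}_r$ (Theorem \ref{szero}), the cardinality $|{\rm PGL}\backslash\mathcal{I}_r|$ (Theorem \ref{orbitsize1}), and the decomposition ${\rm Gal}=\langle\sigma^r\rangle\times\langle\sigma^n\rangle$ together with the observation that $\langle\sigma^n\rangle$ acts trivially on ${\rm PGL}\backslash\mathcal{I}_r$.

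First I would recall from Section 4 that the Galois group splits as ${\rm Gal}=\langle\sigma^r\rangle\times\langle\sigma^n\rangle$ (using $\gcd(r,n)=1$), and that because $\sigma^n$ acts as the identity on every coefficient in $\mathbb{F}_q$, one has ${\rm PGL}(\sigma^{ni}f)={\rm PGL}(f)$ for all $f\in\mathcal{I}_r$. Hence, by Lemma \ref{book}, counting ${\rm Gal}$-orbits on ${\rm PGL}\backslash\mathcal{I}_r$ reduces to counting $\langle\sigma^r\rangle$-orbits on ${\rm PGL}\backslash\mathcal{I}_r$.

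Next, since $\langle\sigma^r\rangle$ has prime order $n$, every orbit on ${\rm PGL}\backslash\mathcal{I}_r$ has size $1$ or $n$. Denote by $s_0$ and $s_n$ the numbers of orbits of the two sizes. Then the counting identities
\begin{equation*}
s_0+n\cdot s_n=|{\rm PGL}\backslash\mathcal{I}_r|,\qquad s_0+s_n=\text{(number of $\langle\sigma^r\rangle$-orbits)}
\end{equation*}
yield immediately
\begin{equation*}
\text{number of $\langle\sigma^r\rangle$-orbits}=\frac{(n-1)s_0+|{\rm PGL}\backslash\mathcal{I}_r|}{n}.
\end{equation*}

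Finally I would substitute the two known evaluations. By Theorem \ref{szero},
\begin{equation*}
s_0=\frac{1}{6}\bigl(|\mathcal{X}|+3|\Delta_2|+2|\Delta_5|\bigr),
\end{equation*}
and by Theorem \ref{orbitsize1},
\begin{equation*}
|{\rm PGL}\backslash\mathcal{I}_r|=\frac{1}{q(q^2-1)}\bigl(\mathcal{N}_0+\mathcal{N}_1+\mathcal{N}_2+\mathcal{N}_3\bigr).
\end{equation*}
Combining these produces exactly the formula in the statement. The values of $|\mathcal{X}|$, $|\Delta_2|$, $|\Delta_5|$ are supplied by Lemmas \ref{firstnumber}, \ref{delta2}, \ref{delta5}, and the values of $\mathcal{N}_0,\mathcal{N}_1,\mathcal{N}_2,\mathcal{N}_3$ by Theorem \ref{orbitsize1}, so no new estimation is required at this stage.

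There is no real obstacle here: all the deep work (Lemma \ref{equivalent} removing the $\gcd(r,q^3-q)=1$ hypothesis, the conjugacy-class analysis of Lemma \ref{orbitnumber}, the explicit evaluation of $|\Delta_2|$ and $|\Delta_5|$) was done earlier in the paper. The main points to verify carefully in this final step are the trivial action of $\langle\sigma^n\rangle$ and the orbit-size dichotomy implied by the primality of $n$; once these are in place the final formula is just a two-line bookkeeping calculation.
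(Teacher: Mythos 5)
Your proposal is correct and follows essentially the same route as the paper: the reduction via Lemma \ref{book} and the trivial $\langle\sigma^n\rangle$-action, the size $1$ or $n$ orbit dichotomy from the primality of $n$, and the counting identity $s_0+n\,s_n=|{\rm PGL}\backslash\mathcal{I}_r|$ combined with Theorems \ref{szero} and \ref{orbitsize1} are exactly the paper's argument, only written with $s_n$ in place of $s-s_0$. No gaps.
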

\begin{proof}
%By Lemma \ref{important}, the number of inequivalent  extended irreducible binary Goppa codes of length $q+1$
%and degree $r$ is less than or equal to the number of orbits of ${\rm P\Gamma L}$ on $\mathcal{S}$.
%By Lemma \ref{orbit},
Recall, from    Theorem \ref{szero}, that $s_0$ denotes the number of orbits of ${\rm P\Gamma L}$ on $\mathcal{I}_r$ with size $1$.
Let $s$ be the number of orbits of ${\rm P\Gamma L}$ on $\mathcal{I}_r$. Then% Using Lemmas \ref{nexttolastlem} and \ref{lastlem}, we have
$$s_0+n\big(s-s_0\big)
=\big|{\rm PGL}\verb|\|\mathcal{I}_r\big|.$$
Substituting $s_0$ by $\frac{1}{6}\big(|\mathcal{X}|+3|\Delta_2|+2|\Delta_5|\big)$,  we have
$$\frac{1}{6}\big(|\mathcal{X}|+3|\Delta_2|+2|\Delta_5|\big)+n\Big(s-\frac{1}{6}\big(|\mathcal{X}|+3|\Delta_2|+2|\Delta_5|\big)\Big)
=\big|{\rm PGL}\verb|\|\mathcal{I}_r\big|,$$
from which we obtain
\begin{equation*}
\begin{split}
s&=\frac{n-1}{6n}\big(|\mathcal{X}|+3|\Delta_2|+2|\Delta_5|\big)+\frac{1}{nq(q^2-1)}\big(\mathcal{N}_0+\mathcal{N}_1+\mathcal{N}_2+\mathcal{N}_3\big).\\
\end{split}
\end{equation*}
%Substituting  $|\mathcal{I}_r|=\frac{1}{r}\sum\limits_{d\,|\,r}\mu(d)q^{r/d}$ into the above equation, we obtain the desired result.
We are done.
\end{proof}

\subsection{An upper bound for the number of extended Goppa codes}
By Lemma \ref{important}, the number of inequivalent  extended irreducible binary Goppa codes of length $q+1$
and degree $r$ is less than or equal to the number of orbits of ${\rm P\Gamma L}$ on $\mathcal{S}$.
Lemma \ref{orbit} tells us that the number of orbits of ${\rm P\Gamma L}$ on $\mathcal{S}$  is equal to the number of
orbits of ${\rm P\Gamma L}$ on $\mathcal{I}_r$.
%Collecting all the results that we have established, we arrive at the following result, which gives an upper bound for the
%number of inequivalent  extended irreducible binary Goppa codes of length $2^n+1$ and degree $r$.
%By Lemma \ref{orbit},
%let $s$ be the number of orbits of ${\rm P\Gamma L}$ on $\mathcal{I}_r$.
Lemma \ref{book} says that the number of
orbits of ${\rm P\Gamma L}$ on $\mathcal{I}_r$ is equal to
the number of orbits of ${\rm Gal}$ on ${\rm PGL}\backslash\mathcal{I}_r$. With Theorem \ref{theorem}
at hand, we immediately have the following result.
\begin{Theorem}\label{corollary}
We assume that  $n\geq 5$ is an odd prime number,
$q=2^n$ and  $r\geq3$ is a positive integer satisfying $\gcd(r,n)=1$.
The number of inequivalent extended irreducible binary Goppa codes of length $q+1$ and degree $r$ is at most
$$\frac{n-1}{6n}\big(|\mathcal{X}|+3|\Delta_2|+2|\Delta_5|\big)+\frac{1}{nq(q^2-1)}\big(\mathcal{N}_0+\mathcal{N}_1+\mathcal{N}_2+\mathcal{N}_3\big),$$
where the values of $|\mathcal{X}|,$   $|\Delta_2|$ and $|\Delta_5|$ were explicitly given in Lemmas \ref{firstnumber}, \ref{delta2}
and \ref{delta5} respectively, and the values of $\mathcal{N}_i$ for $0\leq i\leq 3$ were explicitly  determined in Theorem \ref{orbitsize1}.
\end{Theorem}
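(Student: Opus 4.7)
The plan is to assemble Theorem \ref{corollary} by stringing together the reductions already established in the paper, so that the final inequality follows essentially by quoting Theorem \ref{theorem}. No new combinatorial work is needed at this stage; the role of the proof is simply to bridge the ``upper bound for Goppa codes" language with the ``orbit count" language in which Theorem \ref{theorem} is phrased.

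First I would invoke Lemma \ref{important}: every extended irreducible binary Goppa code of length $q+1$ and degree $r$ is of the form $\overline{C(\alpha)}$ for some $\alpha\in\mathcal{S}$, and if $\alpha,\beta\in\mathcal{S}$ lie in the same $\mathrm{P\Gamma L}$-orbit then $\overline{C(\alpha)}$ and $\overline{C(\beta)}$ are permutation equivalent. Hence the number of inequivalent such codes is bounded above by $|\mathrm{P\Gamma L}\backslash\mathcal{S}|$.

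Next I would reduce to orbits on $\mathcal{I}_r$ by Lemma \ref{orbit}, which gives $|\mathrm{P\Gamma L}\backslash\mathcal{S}|=|\mathrm{P\Gamma L}\backslash\mathcal{I}_r|$. Then I would apply Lemma \ref{book} to the normal subgroup $\mathrm{PGL}\trianglelefteq\mathrm{P\Gamma L}$ with quotient $\mathrm{P\Gamma L}/\mathrm{PGL}\cong\mathrm{Gal}$, so that
\[
|\mathrm{P\Gamma L}\backslash\mathcal{I}_r|=\big|\mathrm{Gal}\backslash(\mathrm{PGL}\backslash\mathcal{I}_r)\big|.
\]
At this point I would simply quote Theorem \ref{theorem}, which already computes the right-hand side as
\[
\frac{n-1}{6n}\big(|\mathcal{X}|+3|\Delta_2|+2|\Delta_5|\big)+\frac{1}{nq(q^2-1)}\big(\mathcal{N}_0+\mathcal{N}_1+\mathcal{N}_2+\mathcal{N}_3\big),
\]
with $|\mathcal{X}|$, $|\Delta_2|$, $|\Delta_5|$ provided by Lemmas \ref{firstnumber}, \ref{delta2}, \ref{delta5} and the $\mathcal{N}_i$ provided by Theorem \ref{orbitsize1}.

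Since every ingredient has been prepared upstream, there is no real obstacle to this particular proof: the only thing to be careful about is the direction of the inequality, namely that Lemma \ref{important} produces an \emph{upper} bound (distinct $\mathrm{P\Gamma L}$-orbits may still yield equivalent extended Goppa codes), while Lemmas \ref{orbit} and \ref{book} are equalities. Combining them therefore yields the stated bound rather than an equality, and this is exactly what Theorem \ref{corollary} claims.
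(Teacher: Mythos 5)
Your proposal is correct and follows exactly the same chain as the paper: Lemma \ref{important} gives the upper bound by the number of ${\rm P\Gamma L}$-orbits on $\mathcal{S}$, Lemma \ref{orbit} transfers this to $\mathcal{I}_r$, Lemma \ref{book} reduces it to counting ${\rm Gal}$-orbits on ${\rm PGL}\backslash\mathcal{I}_r$, and Theorem \ref{theorem} supplies the explicit value. Your remark that only the first step is an inequality while the rest are equalities matches the paper's reasoning precisely.
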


%We give a small example to illustrate Theorem \ref{theorem}.
%\begin{Example}{\rm
%Take $n=3$ and $r=5$ in Theorem \ref{theorem}. This gives $q=2^n=2^3=8$ and thus $q(q^2-1)=504$. It is readily seen that
%$\gcd(r,n)=\gcd(5,3)=1$ and $\gcd(r,q(q^2-1))=\gcd(5,504)=1$, namely, the conditions listed in Theorem \ref{theorem} are satisfied.
%One easily has that ${\rm E}(r,q)$ has a single element $31$, in symbols ${\rm E}(r,q)={\rm E}(5,8)=\{31\}$.
%Then Theorem \ref{theorem} says that the number of inequivalent extended irreducible binary Goppa codes of length $9$ and degree $5$ is at most
%$$\frac{n-1}{6rn}\sum_{e\in {\rm E}(r,q)}\phi(e)+\frac{\sum_{d\,|\,r}\mu(d)q^{r/d}}{rnq(q^2-1)}
%=\frac{2}{3}+\frac{8^5-8}{3\times5\times504}=5.$$
%}
%\end{Example}

\section{Corollaries of Theorem \ref{corollary}}
In this section, we apply Theorem \ref{corollary} to some special cases, including $r=4$, $2p$ ($p\geq3$ is a prime number) and $\gcd(r,q^3-q)=1$.
Some previously known results in the literature  are reobtained directly. Consequently, our main result, Theorem \ref{corollary},
naturally contains the main results of \cite{cz}, \cite{yueqin} and \cite{ryan15}.

\subsection{The case: $r=4$.}
We first apply Theorem \ref{corollary} to reobtain the main result of \cite{ryan15},
which established  an upper bound on the number of extended irreducible binary quartic Goppa codes of length $2^n+1$
(where $n>3$ is a prime number).
By Theorem \ref{orbitsize1} and simple computations, we have $\mathcal{N}_2=\mathcal{N}_3=0$,
$$
\mathcal{N}_0=\frac{1}{4}\sum_{d|4}\mu(d)q^{\frac{4}{d}}=\frac{1}{4}q^2(q^2-1)
$$
and
$$
\mathcal{N}_1=\frac{q^2-1}{4}\sum_{d|2 \atop \gcd(2,d)=1}\mu(d)q^{\frac{4}{2d}}=\frac{q^2-1}{4}\cdot q^2=\frac{1}{4}q^2(q^2-1).
$$
The number of orbits of  ${\rm PGL}$ on   $\mathcal{I}_r$ is equal to
\begin{eqnarray*}
\big|{\rm PGL}\verb|\|\mathcal{I}_r\big|&=&\frac{1}{q(q^2-1)}\big(\mathcal{N}_0+\mathcal{N}_1+\mathcal{N}_2+\mathcal{N}_3\big)\\
&=&\frac{1}{q(q^2-1)}\cdot \big(\frac{1}{4}q^2(q^2-1)+\frac{1}{4}q^2(q^2-1)+0+0\big)\\
&=&\frac{q}{2}.
\end{eqnarray*}
Next, by Lemmas \ref{firstnumber}, \ref{delta2}
and \ref{delta5}, we have $|\Delta_5|=0$,
$$
|\mathcal{X}|=\frac{1}{4}\sum_{d|4}\mu(d)(2^{\frac{4}{d}}-1)=3
$$
and
$$
|\Delta_2|=\frac{1}{r}\sum_{d|\frac{r}{2} \atop d~odd}\mu(d)2^{\frac{r}{2d}}=\frac{1}{4}\sum_{d|2 \atop d~odd}\mu(d)2^{\frac{2}{d}}=1.
$$
Thus the number $s_0$  ($s_0$ denotes the number of orbits of ${\rm P\Gamma L}$ on $\mathcal{I}_r$ with size $1$, see Theorem \ref{szero})
is equal to
$$s_0=\frac{1}{6}\big(|\mathcal{X}|+3|\Delta_2|+2|\Delta_5|\big)=1.$$
Let $s$ be the number of orbits of ${\rm P\Gamma L}$ on $\mathcal{I}_r$.
Then
$$1+n(s-1)=\big|{\rm PGL}\verb|\|\mathcal{I}_r\big|=\frac{q}{2},$$
leading to
$$s=\frac{1}{n}(\frac{q}{2}-1)+1=\frac{2^{n-1}-1}{n}+1.$$
As a corollary of Theorem \ref{corollary},  we have reobtained  the main result of \cite{ryan15}.
\begin{Corollary}
{\rm(\cite[Theorem 5.1]{ryan15})}  Let $n>3$ be a prime number. The number of extended irreducible
binary quartic Goppa codes of length $2^n+1$ is at most $\frac{2^{n-1}-1}{n}+1$.
\end{Corollary}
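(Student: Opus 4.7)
The plan is to specialize the general upper bound in Theorem \ref{corollary} to the case $r=4$. First I would check that the hypothesis $\gcd(r,n)=1$ is automatic, since $n\geq 5$ is an odd prime and $r=4$. Then it suffices to evaluate the seven ingredients ($|\mathcal{X}|$, $|\Delta_2|$, $|\Delta_5|$, and $\mathcal{N}_0, \mathcal{N}_1, \mathcal{N}_2, \mathcal{N}_3$) at $r=4$, $q=2^n$, and to simplify the resulting expression.

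The key observation that makes the computation painless is that several of these quantities vanish for free. Since $3\nmid 4$, Lemma \ref{delta5} gives $|\Delta_5|=0$ immediately. Since $q\pm 1 = 2^n\pm 1$ are both odd, one has $\gcd(4,q-1)=\gcd(4,q+1)=1$, so the outer sums defining $\mathcal{N}_2$ and $\mathcal{N}_3$ in Theorem \ref{orbitsize1} are indexed over $D\neq 1$ with $D\mid 1$ and are therefore empty, yielding $\mathcal{N}_2=\mathcal{N}_3=0$. The four remaining quantities are then evaluated directly from the M\"obius sums: the divisor sets of $4$ and $2$ are so small that the sums collapse at once to $\mathcal{N}_0=\mathcal{N}_1=\tfrac{1}{4}q^2(q^2-1)$ (noting for $\mathcal{N}_1$ that one must use the ``$r$ even'' branch), $|\mathcal{X}|=3$, and $|\Delta_2|=1$ (the $r$-even branch of Lemma \ref{delta2}, with the only admissible divisor $d=1$).

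Plugging these values into the bound of Theorem \ref{corollary} gives
$$\frac{n-1}{6n}\bigl(3+3\cdot 1+2\cdot 0\bigr)+\frac{1}{nq(q^2-1)}\cdot\frac{q^2(q^2-1)}{2}=\frac{n-1}{n}+\frac{q}{2n},$$
which is easily rearranged to $\frac{2^{n-1}-1}{n}+1$, as claimed. There is no real obstacle in this corollary: the heavy lifting has been done in Theorem \ref{corollary}, and the argument here is purely an arithmetic specialization. The only small subtlety to watch for is selecting the correct (even) branch of the formulas for $\mathcal{N}_1$ and $|\Delta_2|$, and recognizing that the oddness of $2^n\pm 1$ is exactly what kills the two ``hard'' summands $\mathcal{N}_2$ and $\mathcal{N}_3$.
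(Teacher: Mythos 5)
Your proposal is correct and follows essentially the same route as the paper: both specialize the general bound to $r=4$, observing that $\mathcal{N}_2=\mathcal{N}_3=|\Delta_5|=0$ (since $2^n\pm1$ is odd and $3\nmid 4$), computing $\mathcal{N}_0=\mathcal{N}_1=\tfrac14 q^2(q^2-1)$, $|\mathcal{X}|=3$, $|\Delta_2|=1$, and simplifying to $\tfrac{2^{n-1}-1}{n}+1$. The only cosmetic difference is that the paper recomputes $s$ from $s_0+n(s-s_0)=\big|{\rm PGL}\backslash\mathcal{I}_r\big|=\tfrac{q}{2}$ with $s_0=1$, whereas you substitute directly into the packaged formula of Theorem \ref{corollary}; these are the same computation.
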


\subsection{The case: $r=2p$, $p\geq 3$ is a prime number}
We now turn to consider the case $r=2p$, where $p\geq3$ is a prime number. The particular case $p=3$ was considered in \cite{yueqin}.
We need to divide the case into three subcases separately: $p\mid(q-1)$, $p\mid(q+1)$ and the rest.

$\bullet$ Subcase 1: $p\mid(q-1)$.
In this subcase we must have $p\nmid(q+1)$. Using   Theorem \ref{orbitsize1} directly,  we have
$$
\mathcal{N}_0=\frac{1}{2p}\sum_{d|2p}\mu(d)q^{\frac{2p}{d}}=\frac{1}{2p}(q^{2p}-q^p-q^2+q),~~
\mathcal{N}_1=\frac{q^2-1}{2p}\sum_{d|p \atop \gcd(2,d)=1}\mu(d)q^{\frac{p}{d}}=\frac{q^2-1}{2p}(q^p-q),
$$
\begin{eqnarray*}
\mathcal{N}_2&=&q(q+1)\cdot\sum_{D|\gcd(2p,q-1) \atop D\neq 1}\frac{\varphi^2(D)}{2p}\sum_{d|\frac{2p}{D} \atop \gcd(d, D)=1}\mu(d)\big(q^{\frac{2p}{Dd}}-1\big)\\
&=&q(q+1)\cdot\sum_{D|p \atop D\neq 1}\frac{\varphi^2(D)}{2p}\sum_{d|\frac{2p}{D} \atop \gcd(d, D)=1}\mu(d)\big(q^{\frac{2p}{Dd}}-1\big)\\
&=& q(q+1)\cdot\frac{\varphi^2(p)}{2p}\sum_{d|2 \atop \gcd(d, p)=1}\mu(d)\big(q^{\frac{2}{d}}-1\big)\\
&=& q(q+1)\cdot\frac{(p-1)^2}{2p}\big(q^2-1-(q-1)\big)\\
&=& \frac{(p-1)^2}{2p}\cdot q^2(q^2-1),\\
\mathcal{N}_3&=&\frac{q(q-1)}{2}\cdot\sum_{D|\gcd(2p,q+1) \atop D\neq 1}\frac{\varphi^2(D)}{2p}\sum_{d|\frac{2p}{D} \atop \gcd(d, D)=1}\mu(d)\big(q^{\frac{2p}{Dd}}+(-1)^{\frac{2p}{Dd}+1}\big)=0.
\end{eqnarray*}
The number of orbits of ${\rm PGL}$ on   $\mathcal{I}_r$   is equal to
\begin{eqnarray*}
\big|{\rm PGL}\verb|\|\mathcal{I}_r\big|&=&\frac{1}{q(q^2-1)}\big(\mathcal{N}_0+\mathcal{N}_1+\mathcal{N}_2+\mathcal{N}_3\big)\\
&=&\frac{1}{q(q^2-1)}\Big(\frac{1}{2p}\big(q^{2p}-q^p-q^2+q+(q^2-1)(q^p-q)\big)+\frac{(p-1)^2}{2p}\cdot q^2(q^2-1)\Big)\\
&=&\frac{1}{2pq(q^2-1)}\big(q^{2p}+q^{p+2}-2q^p-q^3-q^2+2q\big)+\frac{q(p-1)^2}{2p}\\
&=&\frac{1}{2p(q^2-1)}\big(q^{2p-1}+q^{p+1}-2q^{p-1}-q^2-q+2\big)+\frac{q(p-1)^2}{2p}.
\end{eqnarray*}
Next, we have $|\Delta_5|=0$,
\begin{eqnarray*}
|\mathcal{X}|&=&\frac{1}{2p}\sum_{d|2p}\mu(d)(2^{\frac{2p}{d}}-1)=\frac{1}{2p}(2^{2p}-2^p-2^2+2)=\frac{1}{2p}(2^{2p}-2^p-2),\\
|\Delta_2|&=&\frac{1}{r}\sum_{d|\frac{r}{2} \atop d~odd}\mu(d)2^{\frac{r}{2d}}=\frac{1}{2p}\sum_{d|p \atop d~odd}\mu(d)2^{\frac{p}{d}}=\frac{1}{2p}(2^p-2).\\
\end{eqnarray*}
Thus the number $s_0$ of orbits of ${\rm P\Gamma L}$ on $\mathcal{I}_r$ with size $1$ is
$$s_0=\frac{1}{6}\big(|\mathcal{X}|+3|\Delta_2|+2|\Delta_5|\big)=\frac{1}{12p}(2^{2p}+2^{p+1}-8).$$
Let $s$ be the number of orbits of ${\rm P\Gamma L}$ on $\mathcal{I}_r$.
Then
$$s_0+n(s-s_0)=\big|{\rm PGL}\verb|\|\mathcal{I}_r\big|,$$
yielding
\begin{eqnarray*}
&&\frac{1}{12p}(2^{2p}+2^{p+1}-8)+n(s-s_0)=\frac{1}{2p(q^2-1)}\big(q^{2p-1}+q^{p+1}-2q^{p-1}-q^2-q+2\big)+\frac{q(p-1)^2}{2p}\\
&\Rightarrow& 2^{2p}+2^{p+1}-8+12pn(s-s_0)=\frac{6}{q^2-1}\big(q^{2p-1}+q^{p+1}-2q^{p-1}-q^2-q+2\big)+6q(p-1)^2\\
&\Rightarrow& 12pn(s-s_0)=\frac{6}{q^2-1}\big(q^{2p-1}+q^{p+1}-2q^{p-1}-q^2-q+2\big)+6q(p-1)^2-2^{2p}-2^{p+1}+8\\
&\Rightarrow& s-s_0=\frac{1}{2pn(q^2-1)}\big(q^{2p-1}+q^{p+1}-2q^{p-1}-q^2-q+2\big)+\frac{q(p-1)^2}{2pn}-\frac{2^{2p}+2^{p+1}-8}{12pn}\\
&\Rightarrow& s=\frac{1}{2pn(q^2-1)}\big(q^{2p-1}+q^{p+1}-2q^{p-1}-q^2-q+2\big)+\frac{q(p-1)^2}{2pn}-\frac{2^{2p}+2^{p+1}-8}{12pn}+s_0\\
&\Rightarrow& s=\frac{1}{2pn(q^2-1)}\big(q^{2p-1}+q^{p+1}-2q^{p-1}-q^2-q+2\big)+\frac{q(p-1)^2}{2pn}+\frac{1}{12p}(1-\frac{1}{n})(2^{2p}+2^{p+1}-8)\\
&\Rightarrow& s=\frac{1}{2pn(q^2-1)}\big(q^{2p-1}+q^{p+1}-2q^{p-1}-q^2-q+2\big)+\frac{q(p-1)^2}{2pn}+\frac{2(n-1)}{3pn}(2^{2p-3}+2^{p-2}-1).
\end{eqnarray*}

Based on the above discussions and Theorem \ref{corollary},  we obtain the following   result.
\begin{Corollary}
Let $n\geq5$ be a prime number.
Assume that $r=2p$, where $p\geq 3$ is a prime number satisfying $p\mid (q-1)$.
Then the number of extended irreducible
binary Goppa codes of length $2^n+1$ is at most
$$\frac{1}{2pn(q^2-1)}\Big(q^{2p-1}+q^{p+1}-2q^{p-1}-q^2-q+2\Big)+\frac{q(p-1)^2}{2pn}+\frac{2(n-1)}{3pn}(2^{2p-3}+2^{p-2}-1).$$
\end{Corollary}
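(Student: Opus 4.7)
The plan is to specialize Theorem \ref{corollary} to the parameters $r = 2p$ with $p \mid (q-1)$, by computing each of the seven quantities $|\mathcal{X}|,\,|\Delta_2|,\,|\Delta_5|,\,\mathcal{N}_0,\,\mathcal{N}_1,\,\mathcal{N}_2,\,\mathcal{N}_3$ explicitly in terms of $p$ and $q$, and then assembling them using the upper-bound formula of Theorem \ref{corollary}. Before any calculation I would record the following arithmetic facts, which determine which sums collapse: since $q = 2^n$ with $n \geq 5$ an odd prime, both $q - 1$ and $q + 1$ are odd; the assumption $p \mid (q-1)$ together with $p$ prime forces $\gcd(p, q+1) \in \{1, p\}$, and since $p \mid (q-1)$ and $p \mid (q+1)$ would give $p \mid 2$, we get $p \nmid (q+1)$; finally, $3 \mid (q-1)$ would require $2 \mid n$, contradicting that $n$ is an odd prime, so $p \neq 3$ and consequently $3 \nmid r = 2p$.

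These observations immediately cause two collapses. First, $\gcd(2p, q+1) = 1$, so the sum defining $\mathcal{N}_3$ in Theorem \ref{orbitsize1} is empty and $\mathcal{N}_3 = 0$. Second, since $3 \nmid r$, Lemma \ref{delta5} yields $|\Delta_5| = 0$. What remains is to evaluate five short M\"obius sums. The divisors of $r = 2p$ are $\{1, 2, p, 2p\}$, so $\mathcal{N}_0$ and $|\mathcal{X}|$ are four-term expressions; the odd divisors of $r/2 = p$ are $\{1, p\}$, so $\mathcal{N}_1$ and $|\Delta_2|$ are two-term expressions; and the only admissible value $D > 1$ in the $\mathcal{N}_2$ sum is $D = p$, leaving an inner sum over $d \in \{1, 2\}$ only. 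All of these reduce to closed-form polynomials in $q$ after a short computation.

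With those values in hand, the final step is to substitute into
\[
s \;\leq\; \frac{n-1}{6n}\bigl(|\mathcal{X}| + 3|\Delta_2| + 2|\Delta_5|\bigr) + \frac{1}{nq(q^2-1)}\bigl(\mathcal{N}_0 + \mathcal{N}_1 + \mathcal{N}_2 + \mathcal{N}_3\bigr),
\]
combine the first three $\mathcal{N}_i$ over the common denominator $2p\, n\, q(q^2-1)$, and factor $q(q^2-1)$ out of $\mathcal{N}_0 + \mathcal{N}_1 + \mathcal{N}_2$ where possible to recover the three summands displayed in the statement.

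The main obstacle here is not conceptual but purely bookkeeping: one must carefully combine the fractions (in particular, $\mathcal{N}_0 + \mathcal{N}_1$ contributes $\tfrac{1}{2p\,n(q^2-1)}(q^{2p-1} + q^{p+1} - 2q^{p-1} - q^2 - q + 2)$, while $\mathcal{N}_2$ produces the $\tfrac{q(p-1)^2}{2pn}$ term, and the orbit-of-size-$1$ contribution $\tfrac{n-1}{6n}(|\mathcal{X}| + 3|\Delta_2|)$ produces the $\tfrac{2(n-1)}{3pn}(2^{2p-3} + 2^{p-2} - 1)$ term) and verify that no algebraic cancellation has been missed. Once the vanishing of $\mathcal{N}_3$ and $|\Delta_5|$ is established, the remainder is routine arithmetic.
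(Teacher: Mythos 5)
Your proposal is correct and follows essentially the same route as the paper: specialize Theorem \ref{corollary} by evaluating $\mathcal{N}_0,\mathcal{N}_1,\mathcal{N}_2$ (with only $D=p$ surviving in $\mathcal{N}_2$), noting $\mathcal{N}_3=0$ since $q+1$ is odd and $p\nmid(q+1)$, computing $|\mathcal{X}|$, $|\Delta_2|$, $|\Delta_5|=0$, and assembling the three displayed summands exactly as the paper does. Your explicit justification that $p\mid(q-1)$ forces $p\neq 3$ (hence $3\nmid r$) is a small refinement the paper leaves implicit, but since Lemma \ref{delta5} also gives $|\Delta_5|=0$ when $r=6$, both arguments reach the same conclusion.
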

$\bullet$ Subcase 2: $p\mid(q+1)$.
In this subcase we   have $p\nmid(q-1)$. First,
$$
\mathcal{N}_0=\frac{1}{2p}\sum_{d|2p}\mu(d)q^{\frac{2p}{d}}=\frac{1}{2p}(q^{2p}-q^p-q^2+q),~~
\mathcal{N}_1=\frac{q^2-1}{2p}\sum_{d|p \atop \gcd(2,d)=1}\mu(d)q^{\frac{p}{d}}=\frac{q^2-1}{2p}(q^p-q),
$$
$$
\mathcal{N}_2=q(q+1)\cdot\sum_{D|\gcd(2p,q-1) \atop D\neq 1}\frac{\varphi^2(D)}{2p}\sum_{d|\frac{2p}{D} \atop \gcd(d, D)=1}\mu(d)\big(q^{\frac{2p}{Dd}-1}\big)=0
$$
and
\begin{eqnarray*}
\mathcal{N}_3&=&\frac{q(q-1)}{2}\cdot\sum_{D|\gcd(2p,q+1) \atop D\neq 1}\frac{\varphi^2(D)}{2p}\sum_{d|\frac{2p}{D} \atop \gcd(d, D)=1}\mu(d)\big(q^{\frac{2p}{Dd}}+(-1)^{\frac{2p}{Dd}+1}\big)\\
&=&\frac{q(q-1)}{2}\cdot\sum_{D|p \atop D\neq 1}\frac{\varphi^2(D)}{2p}\sum_{d|\frac{2p}{D} \atop \gcd(d, D)=1}\mu(d)\big(q^{\frac{2p}{Dd}}+(-1)^{\frac{2p}{Dd}+1}\big)\\
&=& \frac{q(q-1)}{2}\cdot\frac{\varphi^2(p)}{2p}\sum_{d|2 \atop \gcd(d, p)=1}\mu(d)\big(q^{\frac{2}{d}}+(-1)^{\frac{2}{d}+1}\big)\\
&=& \frac{q(q-1)}{2}\cdot\frac{(p-1)^2}{2p}\big(q^2-1-(q+1)\big)\\
&=& \frac{(p-1)^2}{4p}\cdot q(q-1)(q^2-q-2)\\
&=& \frac{(p-1)^2}{4p}\cdot q(q-2)(q^2-1).
\end{eqnarray*}
Thus the number of orbits of ${\rm PGL}$ on   $\mathcal{I}_r$   is equal to
\begin{eqnarray*}
\big|{\rm PGL}\verb|\|\mathcal{I}_r\big|&=&\frac{1}{q(q^2-1)}\big(\mathcal{N}_0+\mathcal{N}_1+\mathcal{N}_2+\mathcal{N}_3\big)\\
&=&\frac{1}{q(q^2-1)}\Big(\frac{1}{2p}\big(q^{2p}-q^p-q^2+q+(q^2-1)(q^p-q)\big)+\frac{(p-1)^2}{4p}\cdot q(q-2)(q^2-1)\Big)\\
&=&\frac{1}{2pq(q^2-1)}\big(q^{2p}+q^{p+2}-2q^p-q^3-q^2+2q\big)+\frac{(q-2)(p-1)^2}{4p}\\
&=&\frac{1}{2p(q^2-1)}\big(q^{2p-1}+q^{p+1}-2q^{p-1}-q^2-q+2\big)+\frac{(q-2)(p-1)^2}{4p}.
\end{eqnarray*}
Next, we have
\begin{eqnarray*}
|\mathcal{X}|&=&\frac{1}{2p}\sum_{d|2p}\mu(d)(2^{\frac{2p}{d}}-1)=\frac{1}{2p}(2^{2p}-2^p-2^2+2)=\frac{1}{2p}(2^{2p}-2^p-2).\\
|\Delta_2|&=&\frac{1}{r}\sum_{d|\frac{r}{2} \atop d~odd}\mu(d)2^{\frac{r}{2d}}=\frac{1}{2p}\sum_{d|p \atop d~odd}\mu(d)2^{\frac{p}{d}}=\frac{1}{2p}(2^p-2).\\
|\Delta_5|&=&0.
\end{eqnarray*}
Thus the number $s_0$ of orbits of ${\rm P\Gamma L}$ on $\mathcal{I}_r$ with size $1$ is
$$s_0=\frac{1}{6}\big(|\mathcal{X}|+3|\Delta_2|+2|\Delta_5|\big)=\frac{1}{12p}(2^{2p}+2^{p+1}-8).$$
Let $s$ be the number of orbits of ${\rm P\Gamma L}$ on $\mathcal{I}_r$.
Then
$$s_0+n(s-s_0)=\big|{\rm PGL}\verb|\|\mathcal{I}_r\big|,$$
which leads to
\begin{eqnarray*}
&&\frac{1}{12p}(2^{2p}+2^{p+1}-8)+n(s-s_0)=\frac{1}{2p(q^2-1)}\big(q^{2p-1}+q^{p+1}-2q^{p-1}-q^2-q+2\big)+\frac{(q-2)(p-1)^2}{4p}\\
&\Rightarrow& 2^{2p}+2^{p+1}-8+12pn(s-s_0)=\frac{6}{q^2-1}\big(q^{2p-1}+q^{p+1}-2q^{p-1}-q^2-q+2\big)+3(q-2)(p-1)^2\\
&\Rightarrow& 12pn(s-s_0)=\frac{6}{q^2-1}\big(q^{2p-1}+q^{p+1}-2q^{p-1}-q^2-q+2\big)+3(q-2)(p-1)^2-2^{2p}-2^{p+1}+8\\
&\Rightarrow& s-s_0=\frac{1}{2pn(q^2-1)}\big(q^{2p-1}+q^{p+1}-2q^{p-1}-q^2-q+2\big)+\frac{(q-2)(p-1)^2}{4pn}-\frac{2^{2p}+2^{p+1}-8}{12pn}\\
&\Rightarrow& s=\frac{1}{2pn(q^2-1)}\big(q^{2p-1}+q^{p+1}-2q^{p-1}-q^2-q+2\big)+\frac{(q-2)(p-1)^2}{4pn}-\frac{2^{2p}+2^{p+1}-8}{12pn}+s_0\\
&\Rightarrow& s=\frac{1}{2pn(q^2-1)}\big(q^{2p-1}+q^{p+1}-2q^{p-1}-q^2-q+2\big)+\frac{(q-2)(p-1)^2}{4pn}+\frac{1}{12p}(1-\frac{1}{n})(2^{2p}+2^{p+1}-8)\\
&\Rightarrow& s=\frac{1}{2pn(q^2-1)}\big(q^{2p-1}+q^{p+1}-2q^{p-1}-q^2-q+2\big)+\frac{(q-2)(p-1)^2}{4pn}+\frac{2(n-1)}{3pn}(2^{2p-3}+2^{p-2}-1).
\end{eqnarray*}

We have arrived at the following result.
\begin{Corollary}\label{corollary2}
Let $n\geq5$ be a prime number.
Assume that $r=2p$, where $p\geq 3$ is a prime number satisfying $p\mid(q+1)$.
Then the number of extended irreducible
binary Goppa codes of length $2^n+1$ is at most
$$\frac{1}{2pn(q^2-1)}\Big(q^{2p-1}+q^{p+1}-2q^{p-1}-q^2-q+2\Big)+\frac{(q-2)(p-1)^2}{4pn}+\frac{2(n-1)}{3pn}(2^{2p-3}+2^{p-2}-1).$$
\end{Corollary}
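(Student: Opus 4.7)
The proof is a direct specialization of Theorem \ref{corollary} to $r = 2p$ with $p \mid (q+1)$, parallel in structure to Subcase 1 above. First I would compute the four numerators $\mathcal{N}_0, \mathcal{N}_1, \mathcal{N}_2, \mathcal{N}_3$ from Theorem \ref{orbitsize1}. Since $q = 2^n$ is even and $p$ is an odd prime dividing $q+1$, $p$ cannot also divide $q-1$, so $\gcd(2p, q-1) = 1$, which forces $\mathcal{N}_2 = 0$. On the other hand, $\gcd(2p, q+1) = p$ because $q+1$ is odd, so the outer sum in $\mathcal{N}_3$ collapses to the single term $D = p$, and a direct evaluation of the inner sum over $d \in \{1, 2\}$ gives $\mathcal{N}_3 = \frac{(p-1)^2}{4p}\,q(q-2)(q^2-1)$. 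The values of $\mathcal{N}_0$ and $\mathcal{N}_1$ depend only on the divisor structure of $r$, so they are identical to those computed in Subcase 1.

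Next I would evaluate $|\mathcal{X}|$, $|\Delta_2|$, and $|\Delta_5|$ from Lemmas \ref{firstnumber}, \ref{delta2}, and \ref{delta5} respectively. The divisors of $2p$ are $\{1, 2, p, 2p\}$, which yields
\[
|\mathcal{X}| = \tfrac{1}{2p}\bigl(2^{2p} - 2^p - 2\bigr), \qquad |\Delta_2| = \tfrac{1}{2p}\bigl(2^p - 2\bigr).
\]
For $|\Delta_5|$ I would split into two cases: if $p \geq 5$ then $3 \nmid 2p$, while if $p = 3$ then $r = 6$; in either situation Lemma \ref{delta5} yields $|\Delta_5| = 0$. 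Consequently the number of singleton $\langle\sigma^r\rangle$-orbits is
\[
s_0 = \tfrac{1}{6}\bigl(|\mathcal{X}| + 3|\Delta_2| + 2|\Delta_5|\bigr) = \tfrac{1}{12p}\bigl(2^{2p} + 2^{p+1} - 8\bigr).
\]

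Finally, substituting all of these quantities into
\[
s = \frac{n-1}{6n}\bigl(|\mathcal{X}| + 3|\Delta_2| + 2|\Delta_5|\bigr) + \frac{1}{nq(q^2-1)}\bigl(\mathcal{N}_0 + \mathcal{N}_1 + \mathcal{N}_2 + \mathcal{N}_3\bigr)
\]
from Theorem \ref{corollary} and simplifying produces the claimed upper bound. The simplification is entirely analogous to that of Subcase 1; the only substantive difference is that the $\mathcal{N}_3$ contribution now replaces the previous $\mathcal{N}_2$ contribution, which converts the term $\frac{q(p-1)^2}{2pn}$ into $\frac{(q-2)(p-1)^2}{4pn}$, while the other two terms remain unchanged. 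I expect no conceptual obstacle; the only care required is in the $\gcd$ computations under the new hypothesis $p \mid (q+1)$ and in correctly invoking the $r = 6$ clause of Lemma \ref{delta5} in the boundary case $p = 3$.
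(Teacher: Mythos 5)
Your proposal is correct and follows essentially the same route as the paper: specialize the general orbit-count formula, note $\gcd(2p,q-1)=1$ kills $\mathcal{N}_2$ while $\gcd(2p,q+1)=p$ collapses $\mathcal{N}_3$ to the single term $D=p$ giving $\frac{(p-1)^2}{4p}q(q-2)(q^2-1)$, compute $|\mathcal{X}|$, $|\Delta_2|$, $|\Delta_5|=0$, and substitute. Your explicit handling of the boundary case $p=3$ (where the $r=6$ clause of Lemma \ref{delta5} is what forces $|\Delta_5|=0$) is a point the paper leaves implicit, but otherwise the arguments coincide.
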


Taking $p=3$ in the above corollary, we immediately reobtain the main result of \cite{yueqin}, as given below.
\begin{Corollary}
{\rm(\cite[Theorem 4.8]{yueqin})} Let $n\geq5$ be a prime number.
The number of extended irreducible binary sextic Goppa codes of length $2^n+1$
is at most
$$\frac{2^{3n}+2^{2n}+3\cdot 2^n+12n-18}{6n}.$$
\end{Corollary}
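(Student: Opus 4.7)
The plan is to derive this corollary as a direct specialization of Corollary \ref{corollary2} to the case $p=3$. For this, the first step is to verify that the hypothesis $p \mid (q+1)$ of Corollary \ref{corollary2} is automatically satisfied. Since $n \geq 5$ is an odd prime, we have $q = 2^n \equiv 2 \pmod 3$, so $q+1 \equiv 0 \pmod 3$, i.e., $3 \mid (q+1)$. (Note that the alternative subcase $3 \mid (q-1)$ is incompatible with $n$ odd, so Corollary \ref{corollary2} is indeed the right one to apply, not its $p \mid (q-1)$ analogue.)

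Next I would substitute $p = 3$ into the three summands of the bound in Corollary \ref{corollary2}. In the first term, the bracket $q^{2p-1}+q^{p+1}-2q^{p-1}-q^2-q+2$ becomes $q^5 + q^4 - 3q^2 - q + 2$, and the key algebraic observation is that this polynomial is divisible by $q^2 - 1$: evaluating at $q = \pm 1$ gives zero, and polynomial division yields
\[
q^5 + q^4 - 3q^2 - q + 2 = (q^2 - 1)(q^3 + q^2 + q - 2).
\]
Thus the first term simplifies to $\frac{q^3 + q^2 + q - 2}{6n}$. The second term becomes $\frac{(q-2)\cdot 4}{12n} = \frac{q-2}{3n}$, and in the third term the factor $2^{2p-3}+2^{p-2}-1$ equals $8 + 2 - 1 = 9$, giving $\frac{2(n-1)}{9n}\cdot 9 = \frac{2(n-1)}{n}$.

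Finally, I would put everything over the common denominator $6n$:
\[
\frac{q^3+q^2+q-2}{6n} + \frac{2(q-2)}{6n} + \frac{12(n-1)}{6n} = \frac{q^3 + q^2 + 3q + 12n - 18}{6n},
\]
and substituting $q = 2^n$, $q^2 = 2^{2n}$, $q^3 = 2^{3n}$ gives exactly $\frac{2^{3n} + 2^{2n} + 3\cdot 2^n + 12n - 18}{6n}$, as desired. There is no real obstacle here beyond careful bookkeeping of the arithmetic; the only mild subtlety is recognizing the factorization of $q^5 + q^4 - 3q^2 - q + 2$ by $q^2 - 1$, which is what makes the bound collapse into the clean closed form stated by Huang and Yue.
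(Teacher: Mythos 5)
Your proposal is correct and follows essentially the same route as the paper: it specializes Corollary \ref{corollary2} to $p=3$ after noting $3\mid(q+1)$ for odd $n$, and the arithmetic simplification (including the factorization $q^5+q^4-3q^2-q+2=(q^2-1)(q^3+q^2+q-2)$) matches the paper's computation step for step, only ordered slightly differently.
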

\begin{proof}
Taking $p=3$ in Corollary \ref{corollary2}, one has $r=2p=6$. It follows from $3\mid(2^n+1)$  that $p\mid(q+1)$.
Using Corollary \ref{corollary2}, we have
\begin{eqnarray*}
s&=&\frac{1}{2pn(q^2-1)}\big(q^{2p-1}+q^{p+1}-2q^{p-1}-q^2-q+2\big)+\frac{(q-2)(p-1)^2}{4pn}+\frac{2(n-1)}{3pn}(2^{2p-3}+2^{p-2}-1)\\
&=&\frac{1}{6n(q^2-1)}\big(q^{5}+q^{4}-2q^{2}-q^2-q+2\big)+\frac{q-2}{3n}+\frac{2n-2}{9n}(2^{3}+2-1)\\
&=&\frac{1}{6n(q^2-1)}\big(q^{5}+q^{4}-3q^2-q+2\big)+\frac{q-8}{3n}+2\\
&=&\frac{1}{6n}\big(q^{3}+q^{2}+q-2\big)+\frac{q-8}{3n}+2\\
&=&\frac{1}{6n}\big(q^{3}+q^{2}+3q-18\big)+2\\
&=&\frac{2^{3n}+2^{2n}+3\cdot 2^n+12n-18}{6n}.
\end{eqnarray*}
We are done.
\end{proof}
$\bullet$ Subcase 3: $p\nmid(q+1)$ and $p\nmid(q-1)$.
In this subcase, $\mathcal{N}_2=\mathcal{N}_3=0$,
$$
\mathcal{N}_0=\frac{1}{2p}\sum_{d|2p}\mu(d)q^{\frac{2p}{d}}=\frac{1}{2p}(q^{2p}-q^p-q^2+q),~~
\mathcal{N}_1=\frac{q^2-1}{2p}\sum_{d|p \atop \gcd(2,d)=1}\mu(d)q^{\frac{p}{d}}=\frac{q^2-1}{2p}(q^p-q).
$$
Thus the number of orbits of ${\rm PGL}$ on   $\mathcal{I}_r$  is equal to
\begin{eqnarray*}
\big|{\rm PGL}\verb|\|\mathcal{I}_r\big|&=&\frac{1}{q(q^2-1)}\big(\mathcal{N}_0+\mathcal{N}_1+\mathcal{N}_2+\mathcal{N}_3\big)\\
&=&\frac{1}{q(q^2-1)}\cdot\frac{1}{2p}\big(q^{2p}-q^p-q^2+q+(q^2-1)(q^p-q)\big)\\
&=&\frac{1}{2pq(q^2-1)}\cdot \big(q^{2p}+q^{p+2}-2q^p-q^3-q^2+2q\big)\\
&=&\frac{1}{2p(q^2-1)}\cdot \big(q^{2p-1}+q^{p+1}-2q^{p-1}-q^2-q+2\big).
\end{eqnarray*}
Next, we have $|\Delta_5|=0$,
\begin{eqnarray*}
|\mathcal{X}|&=&\frac{1}{2p}\sum_{d|2p}\mu(d)(2^{\frac{2p}{d}}-1)=\frac{1}{2p}(2^{2p}-2^p-2^2+2)=\frac{1}{2p}(2^{2p}-2^p-2),\\
|\Delta_2|&=&\frac{1}{r}\sum_{d|\frac{r}{2} \atop d~odd}\mu(d)2^{\frac{r}{2d}}=\frac{1}{2p}\sum_{d|p \atop d~odd}\mu(d)2^{\frac{p}{d}}=\frac{1}{2p}(2^p-2).\\
\end{eqnarray*}
Thus the number $s_0$ of orbits of ${\rm P\Gamma L}$ on $\mathcal{I}_r$ with size $1$ is
$$s_0=\frac{1}{6}\big(|\mathcal{X}|+3|\Delta_2|+2|\Delta_5|\big)=\frac{1}{12p}(2^{2p}+2^{p+1}-8).$$
Let $s$ be the number of orbits of ${\rm P\Gamma L}$ on $\mathcal{I}_r$.
Then
$$s_0+n(s-s_0)=\big|{\rm PGL}\verb|\|\mathcal{I}_r\big|,$$
and thus,
\begin{eqnarray*}
&&\frac{1}{12p}(2^{2p}+2^{p+1}-8)+n(s-s_0)=\frac{1}{2p(q^2-1)}\big(q^{2p-1}+q^{p+1}-2q^{p-1}-q^2-q+2\big)\\
&\Rightarrow& 2^{2p}+2^{p+1}-8+12pn(s-s_0)=\frac{6}{q^2-1}\big(q^{2p-1}+q^{p+1}-2q^{p-1}-q^2-q+2\big)\\
&\Rightarrow& 12pn(s-s_0)=\frac{6}{q^2-1}\big(q^{2p-1}+q^{p+1}-2q^{p-1}-q^2-q+2\big)-2^{2p}-2^{p+1}+8\\
&\Rightarrow& s-s_0=\frac{1}{2pn(q^2-1)}\big(q^{2p-1}+q^{p+1}-2q^{p-1}-q^2-q+2\big)-\frac{2^{2p}+2^{p+1}-8}{12pn}\\
&\Rightarrow& s=\frac{1}{2pn(q^2-1)}\big(q^{2p-1}+q^{p+1}-2q^{p-1}-q^2-q+2\big)-\frac{2^{2p}+2^{p+1}-8}{12pn}+s_0\\
&\Rightarrow& s=\frac{1}{2pn(q^2-1)}\big(q^{2p-1}+q^{p+1}-2q^{p-1}-q^2-q+2\big)+\frac{2(n-1)}{3pn}(2^{2p-3}+2^{p-2}-1).
\end{eqnarray*}

We have obtained the following result.
\begin{Theorem}
Let $n\geq5$ be a prime number.
Assume that $r=2p$, where $p\geq 3$ is a prime number satisfying $p\nmid(q+1)$ and $p\nmid(q-1)$.
Then the number of extended irreducible
binary Goppa codes of length $2^n+1$ is at most
$$\frac{1}{2pn(q^2-1)}\Big(q^{2p-1}+q^{p+1}-2q^{p-1}-q^2-q+2\Big)+\frac{2(n-1)}{3pn}(2^{2p-3}+2^{p-2}-1).$$
\end{Theorem}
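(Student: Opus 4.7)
The plan is to apply Theorem~\ref{corollary} directly with $r=2p$ and to evaluate each ingredient $\mathcal{N}_0,\mathcal{N}_1,\mathcal{N}_2,\mathcal{N}_3$ from Theorem~\ref{orbitsize1}, together with $|\mathcal{X}|,|\Delta_2|,|\Delta_5|$ from Lemmas~\ref{firstnumber}, \ref{delta2}, and \ref{delta5}. Everything needed is already in place; the remaining work is a specialization calculation, plus a short number-theoretic remark that forces several terms to vanish.

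First I would verify that $\mathcal{N}_2=\mathcal{N}_3=0$. Since $q=2^n$ is even, both $q-1$ and $q+1$ are odd, so $\gcd(2p,q-1)=\gcd(p,q-1)$ and $\gcd(2p,q+1)=\gcd(p,q+1)$. Under the hypothesis $p\nmid(q\pm 1)$ these gcds equal $1$, so the outer sums defining $\mathcal{N}_2$ and $\mathcal{N}_3$ are empty. For the remaining two, since the divisors of $2p$ are $\{1,2,p,2p\}$ and $\mu$ takes values $1,-1,-1,1$ on them, direct substitution yields
$$\mathcal{N}_0=\frac{1}{2p}(q^{2p}-q^p-q^2+q),\qquad \mathcal{N}_1=\frac{q^2-1}{2p}(q^p-q),$$
the second identity using that $r/2=p$ is an odd prime, so both divisors $1$ and $p$ of $p$ are odd.

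Next I turn to the three orbit statistics. The calculations of $|\mathcal{X}|$ and $|\Delta_2|$ are identical to those carried out in Subcases~1 and~2 and give $|\mathcal{X}|=\frac{1}{2p}(2^{2p}-2^p-2)$ and $|\Delta_2|=\frac{1}{2p}(2^p-2)$. The one point that requires care is $|\Delta_5|$: by Lemma~\ref{delta5} this vanishes unless $3\mid r$ and $r\neq 6$, i.e.\ unless $3\mid 2p$ and $2p\neq 6$; since $p$ is prime, this forces $p=3$. But if $p=3$, then $n\geq 5$ odd gives $q=2^n\equiv 2\pmod 3$, whence $3\mid(q+1)$, contradicting $p\nmid(q+1)$. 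Hence $|\Delta_5|=0$ in the present subcase.

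Plugging into Theorem~\ref{szero} gives $s_0=\frac{1}{6}(|\mathcal{X}|+3|\Delta_2|)=\frac{1}{12p}(2^{2p}+2^{p+1}-8)$, and substituting all of the quantities above into the formula of Theorem~\ref{corollary} and collecting terms will reproduce the claimed bound
$$\frac{1}{2pn(q^2-1)}\bigl(q^{2p-1}+q^{p+1}-2q^{p-1}-q^2-q+2\bigr)+\frac{2(n-1)}{3pn}\bigl(2^{2p-3}+2^{p-2}-1\bigr).$$
I do not anticipate any genuine obstacle: the proof is a straight specialization of the main theorem, and the only non-mechanical step is the short argument that $p=3$ is incompatible with Subcase~3, which is what forces $|\Delta_5|=0$ and thereby simplifies the final expression to the displayed form.
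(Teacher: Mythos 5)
Your proposal is correct and follows essentially the same route as the paper: specialize Theorem~\ref{corollary} to $r=2p$, observe that $\gcd(2p,q\pm1)=1$ kills $\mathcal{N}_2,\mathcal{N}_3$, evaluate $\mathcal{N}_0,\mathcal{N}_1,|\mathcal{X}|,|\Delta_2|$ by direct M\"obius sums, get $|\Delta_5|=0$, and solve $s_0+n(s-s_0)=|{\rm PGL}\backslash\mathcal{I}_r|$ for $s$. Your extra argument excluding $p=3$ from this subcase is correct but not strictly needed, since Lemma~\ref{delta5} already gives $|\Delta_5|=0$ when $r=6$.
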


\subsection{The case: $\gcd(r, 2(q^2-1))=1$.}
In this case we have   $2\nmid r$, $3\nmid r$, $\gcd(r, q-1)=1$ and $\gcd(r, q+1)=1$. After simple calculations,
we have
$$
\mathcal{N}_0 = \frac{1}{r}\sum_{d|r}\mu(d)q^{\frac{r}{d}},~~
\mathcal{N}_1 =\mathcal{N}_2=\mathcal{N}_3=0.
$$
Additionally, we have $|\Delta_2|=|\Delta_5|=0$ and
$$
\big{|}\mathcal{X}\big|
=\frac{1}{r}\sum_{d|r}\big(2^{\frac{r}{d}}-1\big)\mu(d).
$$
We can give an upper bound for the
number of inequivalent  extended irreducible binary Goppa codes of length $2^n+1$ and degree $r$ with $\gcd(r,2(q^2-1))=1$, which is the main
result of \cite{cz}.
\begin{Corollary}
{\rm (\cite[Theorem 3.11]{cz})} We assume that $n\geq 5$ is an odd prime number,
$q=2^n$,  and $r\geq3$ is a positive integer satisfying $\gcd(r,n)=1$ and $\gcd(r,q^3-q)=1$.
The number of inequivalent extended irreducible binary Goppa codes of length $q+1$ and degree $r$ is at most
$$\frac{n-1}{6rn}\cdot\sum_{d|r}\big(2^{\frac{r}{d}}-1\big)\mu(d)
+\frac{1}{rnq(q^2-1)}\cdot\sum_{d|r}\mu(d)q^{\frac{r}{d}},$$
where $\mu$ is the M\"{o}bius function.
\end{Corollary}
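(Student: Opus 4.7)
The plan is to chain together the three reduction lemmas that appear in the excerpt and then simply apply Theorem \ref{theorem}. Nothing new needs to be computed; Theorem \ref{corollary} is essentially a reformulation of Theorem \ref{theorem} in the language of Goppa codes.

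First I would invoke Lemma \ref{important}, which is the bridge between coding theory and group actions: every extended irreducible binary Goppa code of length $q+1$ and degree $r$ can be written as $\overline{C(\alpha)}$ for some $\alpha\in\mathcal{S}$, and two such codes coming from $\alpha,\beta$ in the same ${\rm P\Gamma L}$-orbit are permutation equivalent. Therefore the number of inequivalent codes is at most $\big|{\rm P\Gamma L}\verb|\|\mathcal{S}\big|$.

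Next I would reduce from $\mathcal{S}$ to $\mathcal{I}_r$ using Lemma \ref{orbit}, which asserts that $\big|{\rm P\Gamma L}\verb|\|\mathcal{S}\big|=\big|{\rm P\Gamma L}\verb|\|\mathcal{I}_r\big|$. Then, since ${\rm PGL}$ is a normal subgroup of ${\rm P\Gamma L}$ with quotient ${\rm Gal}$, Lemma \ref{book} gives $\big|{\rm P\Gamma L}\verb|\|\mathcal{I}_r\big|$ as the number of ${\rm Gal}$-orbits on ${\rm PGL}\verb|\|\mathcal{I}_r$. At this point Theorem \ref{theorem} hands us the explicit expression
\[
\frac{n-1}{6n}\big(|\mathcal{X}|+3|\Delta_2|+2|\Delta_5|\big)+\frac{1}{nq(q^2-1)}\big(\mathcal{N}_0+\mathcal{N}_1+\mathcal{N}_2+\mathcal{N}_3\big),
\]
which is exactly the claimed upper bound.

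There is no genuine obstacle left at this stage; the entire technical content (the orbit enumeration of ${\rm PGL}$ on $\mathcal{I}_r$ in Theorem \ref{orbitsize1}, the fixed-point analysis of $\langle\sigma^r\rangle$ via Lemmas \ref{equivalent}--\ref{delta5}, and the assembly in Theorem \ref{theorem}) has already been carried out. The only thing the proof of Theorem \ref{corollary} must do is cite Lemmas \ref{important}, \ref{orbit}, \ref{book} in that order to convert the orbit count of Theorem \ref{theorem} into an upper bound on the number of inequivalent extended irreducible binary Goppa codes, and remind the reader that the quantities $|\mathcal{X}|$, $|\Delta_2|$, $|\Delta_5|$, $\mathcal{N}_0,\ldots,\mathcal{N}_3$ have already been determined in Lemmas \ref{firstnumber}, \ref{delta2}, \ref{delta5} and Theorem \ref{orbitsize1}. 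Hence the proof will be only a few lines.
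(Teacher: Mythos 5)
There is a genuine gap: what you have written is, in effect, a proof of Theorem \ref{corollary} (the general upper bound), not of the corollary actually at issue, which is its specialization to the case $\gcd(r,q^3-q)=1$. Your chain Lemma \ref{important} $\rightarrow$ Lemma \ref{orbit} $\rightarrow$ Lemma \ref{book} $\rightarrow$ Theorem \ref{theorem} does yield the bound
$\frac{n-1}{6n}\big(|\mathcal{X}|+3|\Delta_2|+2|\Delta_5|\big)+\frac{1}{nq(q^2-1)}\big(\mathcal{N}_0+\mathcal{N}_1+\mathcal{N}_2+\mathcal{N}_3\big)$,
but this is \emph{not} ``exactly the claimed upper bound'': the corollary asserts the specific closed formula
$\frac{n-1}{6rn}\sum_{d\mid r}\big(2^{r/d}-1\big)\mu(d)+\frac{1}{rnq(q^2-1)}\sum_{d\mid r}\mu(d)q^{r/d}$,
and your argument never uses the hypothesis $\gcd(r,q^3-q)=1$ at any point. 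Since the reduction chain you describe has already been packaged in the paper as Theorem \ref{corollary}, the entire content of the corollary's proof is precisely the specialization step that your proposal omits, and the sentence ``nothing new needs to be computed'' is exactly where the argument breaks down.

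Concretely, what is missing is the following. Since $q^3-q=q(q-1)(q+1)$ and $q=2^n$, the hypothesis $\gcd(r,q^3-q)=1$ forces $2\nmid r$, $\gcd(r,q-1)=1$ and $\gcd(r,q+1)=1$; moreover $3\mid(q+1)$ because $n$ is odd, so also $3\nmid r$. Consequently $\mathcal{N}_1=0$ (the case $2\nmid r$ in Theorem \ref{orbitsize1}), $\mathcal{N}_2=\mathcal{N}_3=0$ (the sums over $D\mid\gcd(r,q-1)$, resp.\ $D\mid\gcd(r,q+1)$, with $D\neq1$ are empty), $|\Delta_2|=0$ (Lemma \ref{delta2}, $r$ odd) and $|\Delta_5|=0$ (Lemma \ref{delta5}, $3\nmid r$). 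Substituting $\mathcal{N}_0=\frac{1}{r}\sum_{d\mid r}\mu(d)q^{r/d}$ and $|\mathcal{X}|=\frac{1}{r}\sum_{d\mid r}\big(2^{r/d}-1\big)\mu(d)$ (Lemma \ref{firstnumber}) into the general bound of Theorem \ref{corollary} then produces the stated formula. These verifications are short, but without them the general expression has not been shown to equal the claimed bound, so the statement is not yet proved.
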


%\vspace{0.3 cm}
%\noindent{\bf Acknowledgements}\quad
%We are very grateful to
%Professor Qin Yue in Nanjing University of Aeronautics and Astronautics for
%introducing us to the topic of enumeration of Goppa codes, and sending their manuscript \cite{yueqin} to us.

\noindent {\bf \large Appendix}
%\section{Appendix}

\medskip

\textbf{Proof of Lemma \ref{orbitnumber}:}
First, there are four families of conjugacy classes of the general linear group ${\rm GL}$ whose representatives are given as follows, see \cite[pages 324-326]{jl}.

(i) The matrices
$$sE_2=\begin{pmatrix}
s & 0 \\
0 & s
\end{pmatrix}(s\in \mathbb{F}_q^*)
$$
belong to the centre of ${\rm GL}$. They give  $q-1$ conjugacy classes of ${\rm GL}$ with size $1$.

(ii) Consider the matrices
$$U_s=\begin{pmatrix}
s & 1 \\
0 & s
\end{pmatrix}(s\in \mathbb{F}_q^*).
$$
The matrices $U_s~(s\in \mathbb{F}_q^*)$ give  $q-1$ conjugacy classes of ${\rm GL}$.
Each conjugacy class contains $q^2-1$ elements.

(iii) Let
$$D_{s,t}=\begin{pmatrix}
s & 0 \\
0 & t
\end{pmatrix}(s, t\in \mathbb{F}_q^*, ~s\neq t).
$$
The matrices $D_{s,t}~(s, t\in \mathbb{F}_q^*, ~s\neq t)$ give  $\frac{(q-1)(q-2)}{2}$ conjugacy classes of ${\rm GL}$
(note that $D_{s,t}$ and $D_{t,s}$ belong to the same conjugacy class).
Each conjugacy class contains $q(q+1)$ elements.

(iv) Consider
$$V_{\gamma}=\begin{pmatrix}
0 & 1 \\
\gamma^{1+q} & \gamma+\gamma^q
\end{pmatrix}\Big(\gamma\in \bigcup_{i=1}^{\frac{q}{2}}\xi^{(q-1)i}\mathbb{F}_q^*\Big).
$$
The matrices $V_{\gamma}~(\gamma\in \bigcup_{i=1}^{\frac{q}{2}}\xi^{(q-1)i}\mathbb{F}_q^*)$ give  $\frac{q(q-1)}{2}$
conjugacy classes of ${\rm GL}$.
Each conjugacy class contains $q(q-1)$ elements.

By this  result, we can determine the   conjugacy classes of   ${\rm PGL}$.
The   representatives of the conjugacy classes of ${\rm PGL}$ are divided into four cases, as we listed below.

(1) $E_2$.

(2) Note that for each $s\in \mathbb{F}_q^*$,
$$s\begin{pmatrix}
1 & 0 \\
0 & s
\end{pmatrix}
\begin{pmatrix}
1 & 1 \\
0 & 1
\end{pmatrix}
=\begin{pmatrix}
s & 1 \\
0 & s
\end{pmatrix}
\begin{pmatrix}
1 & 0 \\
0 & s
\end{pmatrix}.
$$
For any $s\in \mathbb{F}_q^*$ it follows that
$$\begin{pmatrix}
s & 1 \\
0 & s
\end{pmatrix}\sim
\begin{pmatrix}
1 & 1 \\
0 & 1
\end{pmatrix}~ \mbox{in PGL}.$$
Thus the elements
$$U_s=\begin{pmatrix}
s & 1 \\
0 & s
\end{pmatrix}(s\in \mathbb{F}_q^*)$$
provide a conjugacy class with representative $$\begin{pmatrix}
1 & 1 \\
0 & 1
\end{pmatrix}$$ of the group ${\rm PGL}$.
Clearly, $\begin{pmatrix}
1 & 1 \\
0 & 1
\end{pmatrix}$ and $E_2$ do not belong to the same conjucagy class.

(3) Note that the elements
$$D_{s,t}=\begin{pmatrix}
s & 0 \\
0 & t
\end{pmatrix}(s, t\in \mathbb{F}_q^*, ~s\neq t)$$
give  $q-2$ elements of ${\rm PGL}$ as follows:
$$\begin{pmatrix}
1 & 0 \\
0 & a
\end{pmatrix}(1\neq a\in \mathbb{F}_q^*).$$
Additionally, for $1\neq a\in \mathbb{F}_q^*$,
$$
\begin{pmatrix}
a & 0 \\
0 & 1
\end{pmatrix}
\begin{pmatrix}
1 & 0 \\
0 & a^{-1}
\end{pmatrix}
\begin{pmatrix}
a^{-1} & 0 \\
0 & 1
\end{pmatrix}=
\begin{pmatrix}
1 & 0 \\
0 & a^{-1}
\end{pmatrix}=
a^{-1}
\begin{pmatrix}
a & 0 \\
0 & 1
\end{pmatrix}.
$$
This shows that
$$\begin{pmatrix}
1 & 0 \\
0 & a^{-1}
\end{pmatrix}\sim
\begin{pmatrix}
a & 0 \\
0 & 1
\end{pmatrix}~\mbox{in PGL}.
$$
Clearly,
$$\begin{pmatrix}0 & 1 \\1 & 0\end{pmatrix}^{-1}\begin{pmatrix}1 & 0 \\0 & a\end{pmatrix}
\begin{pmatrix}0 & 1 \\1 & 0\end{pmatrix}=\begin{pmatrix}a & 0 \\0 & 1\end{pmatrix},$$
which implies that
$$\begin{pmatrix}
a & 0 \\
0 & 1
\end{pmatrix}\sim
\begin{pmatrix}
1 & 0 \\
0 & a
\end{pmatrix}~\mbox{in PGL}.
$$
Hence,
$$\begin{pmatrix}
1 & 0 \\
0 & a
\end{pmatrix}\sim
\begin{pmatrix}
1 & 0 \\
0 & a^{-1}
\end{pmatrix}~\mbox{in PGL}$$

Let $S$ be a subset of $\mathbb{F}_q^*$ such that $\{1\}\cup S\cup S^{-1}=\mathbb{F}_q^*$, where $S^{-1}=\{s^{-1}\,|\,s\in S\}$.
In the following we prove that for any $a,b\in S$ and $a\neq b$,
$\begin{pmatrix}
1 & 0 \\
0 & a
\end{pmatrix}$ and
$\begin{pmatrix}
1 & 0 \\
0 & b
\end{pmatrix}$ do not belong to the same conjugacy class.
Suppose that $\begin{pmatrix}
1 & 0 \\
0 & a
\end{pmatrix}$ and
$\begin{pmatrix}
1 & 0 \\
0 & b
\end{pmatrix}$ belong to the same conjugacy class,
then there exists $\lambda \in \mathbb{F}_q^*, P\in {\rm GL}$ such that
$$\lambda \begin{pmatrix}
1 & 0 \\
0 & a
\end{pmatrix}=P\begin{pmatrix}
1 & 0 \\
0 & b
\end{pmatrix}P^{-1}.$$
Since the conjugate matrices have the same eigenvalues, we have that
$$\{\lambda, \lambda a\}=\{1, b\}.$$
If $\lambda=1$, then $a=b$; if $\lambda\neq 1$, then $\lambda=b, \lambda a=1$ and so $b=a^{-1}$.
In either case we can get a contradiction. Hence for any $a,b\in S$ and $a\neq b$,
$\begin{pmatrix}
1 & 0 \\
0 & a
\end{pmatrix}$ and
$\begin{pmatrix}
1 & 0 \\
0 & b
\end{pmatrix}$ do not belong to the same conjugacy class.
Thus these $q-2$ elements
$$\begin{pmatrix}
1 & 0 \\
0 & a
\end{pmatrix}(1\neq a\in \mathbb{F}_q^*)$$
of ${\rm PGL}$ provide  $\frac{q-2}{2}$ conjugacy classes with representatives
$$\begin{pmatrix}
1 & 0 \\
0 & a
\end{pmatrix}(a\in S),$$
where $S\subseteq \mathbb{F}_q^*$ satisfying $\{1\}\cup S\cup S^{-1}=\mathbb{F}_q^*$ and $S^{-1}=\{s^{-1}|s\in S\}$.

In the following it remains to show that $\begin{pmatrix}
1 & 0 \\0 & a\end{pmatrix}(a\in S)$
cannot   conjugate  to $E_2$ and
$\begin{pmatrix}1 & 1 \\0 & 1\end{pmatrix}$, respectively. Since $a\neq 1$, $\begin{pmatrix}1 & 0 \\0 & a\end{pmatrix}(a\in S)$
cannot  conjugate to $E_2$.
Suppose that $\begin{pmatrix}1 & 0 \\0 & a\end{pmatrix}(a\in S)$
 conjugates to
$\begin{pmatrix}1 & 1 \\0 & 1\end{pmatrix}$.
Then there exists $\lambda \in \mathbb{F}_q^*, P\in {\rm GL}$ such that
$$\lambda\begin{pmatrix}1 & 0 \\0 & a\end{pmatrix}=P\begin{pmatrix}1 & 1 \\0 & 1\end{pmatrix}P^{-1}.$$
So $\lambda=\lambda a=1$, which gets $a=1$. This is a contradiction. Hence
$\begin{pmatrix}1 & 0 \\0 & a\end{pmatrix}(a\in S)$
do not conjugate to
$\begin{pmatrix}1 & 1 \\0 & 1\end{pmatrix}$.

(4) First we are going to prove that for any $\lambda\in \mathbb{F}_q^*$,
$$
V_{\gamma}=\begin{pmatrix}
0 & 1 \\
\gamma^{1+q} & \gamma+\gamma^q
\end{pmatrix}\sim
V_{\lambda\gamma}=\begin{pmatrix}
0 & 1 \\
\lambda^2\gamma^{1+q} & \lambda\gamma+\lambda\gamma^q
\end{pmatrix}~\mbox{in PGL}.
$$

Note that
$$\lambda V_{\gamma}=\begin{pmatrix}
0 & \lambda \\
\lambda\gamma^{1+q} & \lambda\gamma+\lambda\gamma^q
\end{pmatrix}\sim
\begin{pmatrix}
\lambda\gamma & 0 \\
0 & \lambda\gamma^q
\end{pmatrix}~ \mbox{in}~{\rm GL}_2(\mathbb{F}_{q^2}),
$$
$$V_{\lambda\gamma}=\begin{pmatrix}
0 & 1 \\
\lambda^2\gamma^{1+q} & \lambda\gamma+\lambda\gamma^q
\end{pmatrix}\sim
\begin{pmatrix}
\lambda\gamma & 0 \\
0 & \lambda\gamma^q
\end{pmatrix}~ \mbox{in}~{\rm GL}_2(\mathbb{F}_{q^2}).$$
Then,
$$
\lambda V_{\gamma}=\begin{pmatrix}
0 & \lambda \\
\lambda\gamma^{1+q} & \lambda\gamma+\lambda\gamma^q
\end{pmatrix}\sim
V_{\lambda\gamma}=\begin{pmatrix}
0 & 1 \\
\lambda^2\gamma^{1+q} & \lambda\gamma+\lambda\gamma^q
\end{pmatrix}
~ \mbox{in}~{\rm GL}_2(\mathbb{F}_{q^2}).$$
We have
$$
\lambda V_{\gamma}=\begin{pmatrix}
0 & \lambda \\
\lambda\gamma^{1+q} & \lambda\gamma+\lambda\gamma^q
\end{pmatrix}\sim
V_{\lambda\gamma}=\begin{pmatrix}
0 & 1 \\
\lambda^2\gamma^{1+q} & \lambda\gamma+\lambda\gamma^q
\end{pmatrix}
~ \mbox{in}~{\rm GL}_2(\mathbb{F}_{q}).$$
Hence, for any $\lambda\in \mathbb{F}_q^*$,
$$
V_{\gamma}=\begin{pmatrix}
0 & 1 \\
\gamma^{1+q} & \gamma+\gamma^q
\end{pmatrix}\sim
V_{\lambda\gamma}=\begin{pmatrix}
0 & 1 \\
\lambda^2\gamma^{1+q} & \lambda\gamma+\lambda\gamma^q
\end{pmatrix}~ \mbox{in}~{\rm PGL}.
$$

Let $\gamma_1=\xi^{(q-1)i_1}, \gamma_2=\xi^{(q-1)i_2}$, where $1\leq i_1, i_2\leq \frac{q}{2}$ and $i_1\neq i_2$.
Secondly, we prove that $V_{\gamma_1}$ does not conjugate to $V_{\gamma_2}$.
Suppose otherwise that $V_{\gamma_1}$   conjugates to $V_{\gamma_2}$.
Then there exists $\lambda_0 \in \mathbb{F}_q^*, P\in {\rm GL}$ such that
$$\lambda_0V_{\gamma_1}=PV_{\gamma_2}P^{-1},$$
i.e.,
$$\lambda_0\begin{pmatrix}0 & 1 \\ \gamma_1^{1+q} & \gamma_1+\gamma_1^q\end{pmatrix}
=P\begin{pmatrix}0 & 1 \\ \gamma_2^{1+q} & \gamma_2+\gamma_2^q\end{pmatrix}P^{-1},$$
which implies that
$$\big\{\lambda_0\gamma_1, \lambda_0\gamma_1^q\big\}=\big\{\gamma_2, \gamma_2^q\big\}.$$
Note that if $\lambda_0\gamma_1=\gamma_2$, then
\begin{eqnarray*}
\lambda_0=\frac{\gamma_2}{\gamma_1}
&\Rightarrow & \lambda_0=\xi^{(q-1)(i_2-i_1)}\\
&\Rightarrow & \xi^{(q-1)^2(i_2-i_1)}=1\\
&\Rightarrow & (q-1)^2(i_2-i_1)\equiv 0 ~\big({\rm mod}(q^2-1)\big)\\
&\Rightarrow & (q-1)(i_2-i_1)\equiv 0 ~\big({\rm mod}(q+1)\big)\\
&\Rightarrow & i_2-i_1 \equiv 0 ~\big({\rm mod}(q+1)\big).
\end{eqnarray*}
Since $1\leq i_1, i_2\leq \frac{q}{2}$ and $i_1\neq i_2$, this is a contradiction.
In addition, if $\lambda_0\gamma_1=\gamma_2^q$, then
\begin{eqnarray*}
\lambda_0\xi^{(q-1)i_1}=\xi^{(q-1)i_2}&\Rightarrow & \lambda_0=\xi^{(q-1)(qi_2-i_1)}\\
&\Rightarrow & \xi^{(q-1)^2(qi_2-i_1)}=1\\
&\Rightarrow & \xi^{(q-1)^2(i_2-i_1)}=1\\
&\Rightarrow & (q-1)^2(qi_2-i_1)\equiv 0 ~\big({\rm mod}(q^2-1)\big)\\
&\Rightarrow & (q-1)(qi_2-i_1)\equiv 0 ~\big({\rm mod}(q+1)\big)\\
&\Rightarrow & i_2+i_1 \equiv 0 ~\big({\rm mod}(q+1)\big).
\end{eqnarray*}
The same reason shows that this is also a contradiction.

Therefore the matrices
$$V_{\gamma_i}=\begin{pmatrix}
0 & 1 \\
\gamma_i^{1+q} & \gamma_i+\gamma_i^q
\end{pmatrix}$$
give  $\frac{q}{2}$ conjugacy classes,
where $\gamma_i=\xi^{(q-1)i}, i=1,2,\cdots,\frac{q}{2}$.

Using the same arguments as above,  it follows that
$$V_{\gamma_i}=\begin{pmatrix}
0 & 1 \\
\gamma_i^{1+q} & \gamma_i+\gamma_i^q
\end{pmatrix}\Big(\gamma_i=\xi^{(q-1)i}, ~i=1,2,\cdots,\frac{q}{2}\Big)$$
do not conjugate to $E_2$, $\begin{pmatrix}1 & 1 \\0 & 1\end{pmatrix}$ and
$\begin{pmatrix}1 & 0 \\0 & a\end{pmatrix}(a\in S)$, respectively.

Lastly, the conjugacy classes we have found account for
$$1+(q^2-1)+\frac{q-2}{2}\cdot q(q+1)+\frac{q}{2}\cdot q(q-1)$$
elements altogether. This sum is equal to the order of the  group ${\rm PGL}$,
so we have found all the conjugacy classes. We are done.

\end{document}